\DeclarePairedDelimiter{\floor}{\lfloor}{\rfloor}
\DeclarePairedDelimiter{\ceil}{\lceil}{\rceil}
\newtheorem{theorem}{Theorem}[section]
\newtheorem{fact}[theorem]{Fact}
\newtheorem{lemma}[theorem]{Lemma}
\newtheorem{observation}[theorem]{Observation}
\newtheorem{claim}[theorem]{Claim}
\theoremstyle{remark}
\newtheorem{remark}[theorem]{Remark}
\newtheorem{example}[theorem]{Example}
\newcommand{\repeatlemma}[1]{%
	\begingroup
	\renewcommand{\thelemma}{\ref{#1}}%
	\expandafter\expandafter\expandafter\lemma
	\csname replemma@#1\endcsname
	\endlemma
	\endgroup
}
\xdef\csname replemma@#1\endcsname{%
		\unexpanded\expandafter{\BODY}%
	}%
\unskip\label{#1}\endlemma
\newcommand{\cO}{\mathcal{O}}
\newcommand{\Oh}{\mathcal{O}}
\newcommand{\cOtilde}{\tilde{\cO}}
\newcommand{\Ohtilde}{\tilde{\cO}}
\newcommand{\PS}{\mathcal{H}}
\def\dd{\mathinner{.\,.}}
\newcommand{\D}{\mathcal{D}}
\newcommand{\T}{\mathcal{T}}
\renewcommand{\S}{\mathcal{S}}
\newcommand{\B}{\mathcal{B}}
\newcommand{\Occ}{\mathsf{Occ}}
\newcommand{\run}{\mathsf{run}}
\newcommand{\Pref}{\mathsf{Pref}}
\newcommand{\Lab}{\mathcal{L}}
\newcommand{\G}{\mathcal{G}}
\newcommand{\Dec}{\textsc{Exists}}
\newcommand{\R}{\textsc{Report}}
\newcommand{\RD}{\textsc{ReportDistinct}}
\newcommand{\C}{\textsc{Count}}
\newcommand{\CD}{\textsc{CountDistinct}}
\newcommand{\per}{\textsf{per}}
\renewcommand{\exp}{\textsf{exp}}
\newcommand{\occ}{\textsf{output}}
\newcommand{\BRLCP}{\mathit{BoundedLCP}\xspace}
\newcommand{\RunSquares}{\mathit{RunSquares}}
\newcommand{\Sql}{\mathit{BSq}}
\newcommand{\eps}{\delta}
 \newcommand{\defDSproblem}[3]{
  \vspace{2mm}
\noindent\fbox{
  \begin{minipage}{0.96\textwidth}
  \textsc{#1}\\
  {\bf{Input:}} #2  \\
  {\bf{Query:}} #3
  \end{minipage}
  }
  \vspace{2mm}
}
  \tikzset{
    treenode/.style = {align=center, inner sep=2pt, text centered,
      font=\sffamily},
    arn_r/.style = {treenode, circle, black, fill=black,font=\sffamily\bfseries, draw=black,
      text width=0.2em},
      arn_p/.style = {treenode, circle, line width=0.7mm, blue, fill=black, draw=blue, text width=0.4em},
      arn_p1/.style = {treenode, circle, line width=0.7mm, blue, fill=blue, draw=blue, text width=0.4em},
      arn_pat/.style = {treenode, circle , blue, fill=blue, draw=blue, text width=0.3em},
      arn_t/.style = {treenode, circle, black, thick, double, font=\sffamily\bfseries, draw=black,
      text width=0.2em},
    every edge/.append style={anchor=south,auto=falseanchor=south,auto=false,font=3.5 em},
  }
  \tikzset{
      pattern/.style={postaction={decorate},
          decoration={markings,mark=at position .55 with {\draw[thin,fill, blue] circle (5pt);}}}
  }
\begin{document}
\title{Counting~Distinct~Patterns in~Internal~Dictionary~Matching}
\author[1,2]{Panagiotis Charalampopoulos}
\author[3]{Tomasz Kociumaka}
\author[1]{Manal Mohamed}
\author[2]{Jakub Radoszewski}
\author[2]{Wojciech Rytter}
\author[2]{Juliusz Straszy\'nski}
\author[2]{Tomasz Wale\'n}
\author[2]{Wiktor Zuba}

\affil[1]{
  Department of Informatics, King's College London, UK\protect\\
  \texttt{\{panagiotis.charalampopoulos, manal.mohamed\}@kcl.ac.uk}}
\affil[2]{
	Institute of Informatics, University of Warsaw, Poland\\
	\texttt{\{jrad,rytter,jks,walen,w.zuba\}@mimuw.edu.pl}}
\affil[3]{Department of Computer Science, Bar-Ilan University, Ramat Gan, Israel\\
    \texttt{kociumaka@mimuw.edu.pl}}

\date{\vspace{-.5cm}}
	\maketitle              

\begin{abstract}
We consider the problem of preprocessing a text $T$ of length $n$ and a dictionary $\D$ in order to be able to efficiently answer queries $\CD(i,j)$, that is, given $i$ and $j$ return the number of patterns from $\D$ that occur in the \emph{fragment} $T[i \dd j]$.
The dictionary is \emph{internal} in the sense that each pattern in $\D$ is given as a fragment of $T$.
This way, the dictionary takes space proportional to the number of patterns $d=|\D|$ rather than their total length, which could be $\Theta(n\cdot d)$.
An $\cOtilde(n+d)$-size \footnote{The $\tilde{\cO}(\cdot)$ notation suppresses $\log^{\cO(1)} n$ factors for inputs of size $n$.} data structure that answers $\CD(i,j)$ queries $\cO(\log n)$-approximately in $\cOtilde(1)$ time was recently proposed in a work that introduced internal dictionary matching [ISAAC 2019].
Here we present an $\cOtilde(n+d)$-size data structure that answers $\CD(i,j)$ queries $2$-approximately in $\cOtilde(1)$ time.
Using range queries, for any $m$, we give an $\cOtilde(\min(nd/m,n^2/m^2)+d)$-size data structure that answers $\CD(i,j)$ queries exactly in $\cOtilde(m)$ time.
We also consider the special case when the dictionary consists of all square factors
of the string.
We design an $\Oh(n \log^2 n)$-size data structure that allows us to count distinct squares in a text fragment $T[i \dd j]$ in $\Oh(\log n)$ time.
\end{abstract}

\section{Introduction}
Internal Dictionary Matching was recently introduced in~\cite{DBLP:conf/isaac/Charalampopoulos19} as a generalization of Internal Pattern Matching. In the classical Dictionary Matching problem, we are given a dictionary $\D$ consisting of $d$ patterns, and the goal is to preprocess $\D$ so that, presented with a text $T$, we can efficiently compute the occurrences of the patterns from $\D$ in $T$. In Internal Dictionary Matching, the text $T$ is given in advance, the dictionary $\D$ is a set of fragments of $T$, and the Dictionary Matching queries can be asked for any fragment of $T$.

The Internal Pattern Matching problem consists in preprocessing a text $T$ of length $n$ so that we can efficiently compute the occurrences of a fragment of $T$ in another fragment of~$T$.
A data structure of nearly linear size that allows for sublogarithmic-time Internal Pattern Matching queries was presented in~\cite{DBLP:journals/tcs/KellerKFL14}, while a linear-size data structure allowing for constant-time Internal Pattern Matching queries in the case that the ratio between the lengths of the two factors is constant was presented in~\cite{DBLP:conf/soda/KociumakaRRW15}.
Other types of internal queries have also been studied; we refer the interested reader to~\cite{tomeksthesis}.

In~\cite{DBLP:conf/isaac/Charalampopoulos19}, several types of Internal Dictionary Matching queries 
about  fragments $T[i\dd j]$ in  a string $T$  were considered:
$\Dec(i,j)$, $\R(i,j)$, $\RD(i,j)$, $\C(i,j)$ and $\CD(i,j)$.
Data structures of size $\cOtilde(n+d)$ and query time $\cOtilde(1+\occ)$ were shown for answering each of the first four queries, with $\C$ queries requiring most advanced techniques.
For $\CD$ queries, only a data structure answering these queries $\cO(\log n)$-approximately was shown. 
In this work, we focus on 
more efficient data structures for such queries. 
$\CD$ queries are formally defined as follows.

\defDSproblem{CountDistinct}{A text $T$ of length $n$ and a dictionary $\D$ consisting of $d$ patterns, each given as a fragment $T[a \dd b]$ of $T$
(represented only by integers $a,b$).}
{$\CD(i,j)$: Count all distinct patterns $P \in \D$ that occur in $T[i \dd j]$. }

\noindent Observe that the input size is $n+d$, while the total length of strings in $\D$ could be $\Theta(n\cdot d)$.

We also consider a special case of this problem when the dictionary $\D$ is the set of all squares (i.e., strings of the form $UU$) in $T$. The case that $\D$ is the set of palindromes in $T$ was considered by Rubinchik and Shur in~\cite{DBLP:conf/spire/RubinchikS17}.

\begin{example}\label{ex:1}
  Let us consider the following text:
  \begin{center}
      \begin{tabular}{c|c|c|c|c|c|c|c|c|c|c|c|c|c|c}
        $i$ & 1 & 2 & 3 & 4 & 5 & 6 & 7 & 8 & 9 & 10 & 11 & 12 & 13 & 14 \\\hline
        $T$ & $\texttt{a}$ & $\texttt{d}$ & $\texttt{a}$ & $\texttt{a}$ & $\texttt{a}$ & $\texttt{a}$ & $\texttt{b}$ & $\texttt{a}$ & $\texttt{a}$ & $\texttt{b}$ & $\texttt{b}$ & $\texttt{a}$ & $\texttt{a}$ & $\texttt{c}$
      \end{tabular}
  \end{center}
  For the dictionary $\D=\{\texttt{aa},\texttt{aaaa},\texttt{abba},\texttt{c}\}$,
    we have:
    \[\CD(5,12)=2,\
    \CD(2,6)=2,\
    \CD(2,12)=3.\]
    In particular, $T[5 \dd 12]$ contains two distinct patterns from $\D$: $\texttt{aa}$ (two occurrences) and $\texttt{abba}$.
  When the dictionary $\D$ represents all squares in $T$, we have
  \[\CD(5,12)=3,\ \CD(2,6)=2,\ \CD(2,12)=4. \]
  In particular, $T[5 \dd 12]$ contains three distinct squares: $\texttt{aa}$ (two occurrences), $\texttt{bb}$ and $\texttt{aabaab}$.
\end{example}
  
Let us note that one could answer $\CD(i,j)$ queries in time $\cO(j-i)$ by running $T[i \dd j]$ over the Aho--Corasick automaton of $\D$~\cite{DBLP:journals/cacm/AhoC75} or in time $\cOtilde(d)$ by performing Internal Pattern Matching~\cite{DBLP:conf/soda/KociumakaRRW15} for each element of $\D$ individually. 
Neither of these approaches is satisfactory as they can require $\Omega(n)$ time in the worst case.
  
\subparagraph{Our results and a roadmap.}
We start with preliminaries in~\cref{sec:prel} and an algorithmic toolbox in~\cref{sec:tools}.
Our results for the case of a static dictionary are summarized in Table~\ref{tab:results}.
Our solutions exploit string periodicity using runs and use data structures for variants of the (colored) orthogonal range counting problem and for auxiliary internal queries on strings.

\begin{table}[htpb!]
\begin{center}
    \begin{tabular}{c|c|c|c|c}
        \textbf{Space} & \textbf{Preprocessing time} & \textbf{Query time} & \textbf{Variant} & \textbf{Section} \\\hline
        $\Ohtilde(n+d)$ & $\Ohtilde(n+d)$ & $\Ohtilde(1)$ & 2-approximation & \ref{sec:approx} \\\hline
        $\Ohtilde(n^2/m^2+d)$ & $\Ohtilde(n^2/m+d)$ & $\Ohtilde(m)$ & exact & \ref{sec:trade1} \\\hline
         $\cOtilde(nd/m+d)$ & $\cOtilde(nd/m+d)$ & $\cOtilde(m)$ & exact & \ref{sec:trade2} \\\hline
        $\Oh(n \log^2 n)$ & $\Oh(n \log^2 n)$ & $\Oh(\log n)$ & $\D=$\,squares, exact & \ref{sec:squares}
    \end{tabular}
\end{center}
\caption{Our results for $\CD$ queries. Here, $m$ is an arbitrary parameter.}%
\label{tab:results}
\end{table}

For the case of a dynamic dictionary, where queries are interleaved with insertions and deletions of patterns in the dictionary, it was shown in~\cite{DBLP:conf/isaac/Charalampopoulos19} that the product of the time to process an update and the time to answer an $\Dec(i,j)$ query cannot be $\cO(n^{1-\epsilon})$ for any constant $\epsilon>0$, unless the Online Boolean Matrix-Vector Multiplication conjecture~\cite{DBLP:conf/stoc/HenzingerKNS15} is false.
In Section~\ref{sec:dynamic} we outline a general scheme that adapts our data structures for the case of a dynamic dictionary.
In particular, we show how to answer $\CD(i,j)$ queries $2$-approximately in $\cOtilde(m)$ time and process each update in $\cOtilde(n/m)$ time, for any $m$.

\section{Preliminaries}\label{sec:prel}

We begin with basic definitions and notation.
Let $T=T[1]T[2]\cdots T[n]$ be a \emph{string} of length $|T|=n$ over a linearly sortable alphabet $\Sigma$. The elements of $\Sigma$ are called \emph{letters}.
By $\varepsilon$ we denote an \emph{empty string}.
For two positions $i$ and $j$ on $T$, we denote by $T[i\dd j]=T[i]\cdots T[j]$ the \emph{fragment} of $T$ that starts at position $i$ and ends at position $j$ (the fragment is empty if $j<i$). A fragment is called \emph{proper} if $i>1$ or $j<n$. A fragment of $T$ is represented in $\cO(1)$ space by specifying the indices $i$ and $j$.
A \emph{prefix} of $T$ is a fragment that starts at position $1$
and a \emph{suffix} is a fragment that ends at position $n$.
By $UV$ and $U^k$ we denote the concatenation of strings $U$ and $V$ and $k$ copies of the string $U$, respectively.
A \emph{cyclic rotation} of a string $U$ is any string $V$ such that $U=XY$ and $V=YX$ for some strings $X$ and $Y$.

Let $U$ be a string of length $m$ with $0<m\leq n$. 
We say that $U$ is a \emph{factor} of $T$ if there exists a fragment $T[i \dd i+m-1]$, called an \emph{occurrence} of $U$ in $T$, that is matches $U$.
We then say that $U$ occurs at the \emph{starting position} $i$ in $T$.

A positive integer $p$ is called a \emph{period} of $T$ if $T[i] = T[i + p]$ for all $i = 1, \ldots, n - p$.  We refer to the smallest period as \emph{the period} of the string, and denote it by $\per(T)$. A string is called \emph{periodic} if its period is no more than half of its length and \emph{aperiodic} otherwise. The weak version of the periodicity lemma~\cite{fine1965uniqueness} states that if $p$ and $q$ are periods of a string $T$ and satisfy $p + q \leq |T|$, then $\gcd(p, q)$ is also a period of $T$.
A string $T$ is called \emph{primitive} if it cannot be expressed as $U^k$ for a string $U$ and an integer $k>1$.

The elements of the dictionary $\D$ are called \textit{patterns}. 
Henceforth, we assume that $\varepsilon \not\in \D$, i.e., that the length of each $P \in \D$ is at least $1$. 
We also assume that each pattern of $\D$ is given by the starting and ending positions of its occurrence in $T$. Thus, the size of the dictionary $d=|\D|$ refers to the number of patterns in $\D$ and not their total length.
A \emph{compact trie} of $\D$ is the trie of $\D$ in which all non-terminal nodes with exactly one child become implicit. The path-label $\Lab(v)$ of a node $v$ is defined as the path-ordered concatenation of the string-labels of the edges in the root-to-$v$ path.
We refer to $|\Lab(v)|$ as the \emph{string-depth} of $v$.

\section{Algorithmic Tools}\label{sec:tools}
\subsection{Modified Suffix Trees}
A \emph{$\D$-modified suffix tree}~\cite{DBLP:conf/isaac/Charalampopoulos19}, denoted as $\T_{T,\D}$, of a given text $T$ of length $n$ and a dictionary $\D$ is obtained from the trie of $\D \cup \{T[i\dd n] : 1\le i \le n\}$ by contracting, for each non-terminal node $u$ other than the root, the edge from $u$ to the parent of $u$. As a result, all the nodes of $\T_{T,\D}$ (except for the root) correspond to patterns in $\D$ or to suffixes of $T$. For $1\le i \le n$, the node representing $T[i\dd n]$ is labelled with $i$; see Figure~\ref{fig:dmod-example}.  For a dictionary $\D$ whose patterns are given as fragments of a text $T$, we can construct $\T_{T,\D}$ in $\cO(|\D|+|T|)$ time~\cite{DBLP:conf/isaac/Charalampopoulos19}.

\begin{figure}[h!]
  \centering
  \begin{tikzpicture}[scale=0.75, transform shape]

  \tikzstyle{v}=[draw, circle, minimum width=0.75pt, inner sep=0.75pt, fill=black!80!white];
  \tikzstyle{marked}=[draw, shape=circle, minimum width=1pt, inner sep=2pt, red];
  \tikzstyle{label}=[midway,inner sep=1pt,above, sloped, fill=none];
  \tikzstyle{label2}=[midway,inner sep=1pt];

  \node[v] (r) at (5, 8) {};

  \foreach \d/\label in {0.3/6, 4/13, 5/1, 6/7, 7/11, 8/10, 10/2} {
    \node (r\label) at ($(r.center)+(\d, -3)$) {\label};
    \draw (r) -- (r\label);
  }

  \node[marked] (abba) at ($(r.center)+(1, -1)$) {};
  \draw (r) -- (abba);
  \node[left of=abba, node distance=13pt] {$\texttt{abba}$};
  \node (abba2) at ($(abba.center)+(0.5, -2)$) {9};
  \draw (abba) -- (abba2);

  \node[marked] (c) at ($(r.center)+(9, -3)$) {};
  \draw (r) -- (c);
  \node [below of=c, node distance=10pt] {14};
  \node[right of=c, node distance=8pt] {$\texttt{c}$};

  \node[marked] (aa) at ($(r.center)+(-2, -1)$) {};
  \node[left of=aa, node distance=10pt] {$\texttt{aa}$};

  \foreach \d/\label in {-1.5/4,-0.5/5, 0.5/8, 1.5/12} {
    \node (aa\label) at ($(aa.center)+(\d, -2)$) {\label};
    \draw (aa) -- (aa\label);
  }

  \node[marked] (aaaa) at ($(aa.center)+(-2, -1)$) {};
  \node[left of=aaaa, node distance=15pt] {$\texttt{aaaa}$};
  \node (aaaa2) at ($(aaaa.center)+(-1, -1)$) {3};

  \draw (r) -- (aa);
  \draw (aa) -- (aaaa);
  \draw (aaaa) -- (aaaa2);

\end{tikzpicture}
  \caption{Example of a $\D$-modified suffix tree
  for text $T=\texttt{adaaaabaabbaac}$ and dictionary $\D=\{\texttt{aa},\texttt{aaaa},\texttt{abba},\texttt{c}\}$ (figure from~\cite{DBLP:conf/isaac/Charalampopoulos19}).
 }%
  \label{fig:dmod-example}
\end{figure}
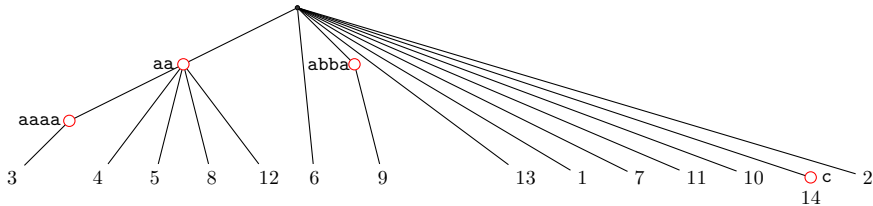
Let us denote by $\Occ(\D)$ the set of all occurrences of dictionary patterns in $T$, that is, the set of all fragments of $T$ that match a pattern in $\D$.
Using $\T_{T,\D}$, the set $\Occ(\D)$ can be computed in time $\Oh(n+d+|\Occ(\D)|)$.

We say that a tree is a \textit{weighted tree} if it is a rooted tree with an integer weight on each node $v$, denoted by $\omega(v)$, such that the weight of the root is zero and $\omega(u) < \omega(v)$ if $u$ is the parent of $v$.
We say that a node $v$ is a \textit{weighted ancestor at depth $\ell$} of a node $u$ if $v$ is the top-most ancestor of $u$ with weight of at least $\ell$.

\begin{theorem}[{\cite[Section 6.2.1]{DBLP:journals/talg/AmirLLS07}}]\label{thm:waq}
After $\cO(n)$-time preprocessing, weighted ancestor queries for nodes of a weighted tree $\mathcal{T}$ of size $n$ can be answered in $\cO(\log \log n)$ time per query.
\end{theorem}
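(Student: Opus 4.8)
The plan is to reduce a weighted ancestor query to a constant number of predecessor/successor searches over integer sets of size $\cO(n)$, each answerable in $\cO(\log\log n)$ time on the word RAM. First I would observe that a weighted ancestor query $(u,\ell)$ just asks for the ancestor of $u$ whose weight is the \emph{smallest} one that is at least $\ell$: since weights strictly increase along the root-to-$u$ path, ``topmost ancestor with weight $\ge \ell$'' equals ``ancestor of $u$ with the least weight that is $\ge \ell$'', so a query is a successor query over the set of weights appearing on the root-to-$u$ path. I may also assume the weights are integers in $\{1,\dots,n\}$: replacing each weight by its rank among the at most $n$ distinct weights keeps every root-to-leaf path strictly increasing, and at query time a threshold $\ell$ is mapped to the corresponding rank by one predecessor search in a single global structure built over all weights.

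Next I would take a heavy-path decomposition of $\T$. Each node lies on a unique heavy path, and the root-to-$u$ path meets only $t=\cO(\log n)$ heavy paths $\pi_1,\dots,\pi_t$, listed from the root, with $u\in\pi_t$. For $i<t$ let $b_i$ be the node of $\pi_i$ that is the parent of the topmost node of $\pi_{i+1}$, and set $b_t=u$; the ``active part'' of $\pi_i$ is the subpath from the top of $\pi_i$ down to $b_i$, and these active parts together form exactly the set of ancestors of $u$. The key structural point is that consecutive active parts occupy disjoint, increasing weight ranges, since $b_i$ is a strict ancestor of the top of $\pi_{i+1}$ and hence $\omega(b_i)<\omega(\mathrm{top}(\pi_{i+1}))$. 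Therefore the sorted list of all ancestor weights of $u$ is the concatenation of the already-sorted active parts, and the answer lies on $\pi_{i^\star}$ for the smallest $i^\star$ with $\omega(b_{i^\star})\ge\ell$ (if no such $i^\star$ exists, $u$ has no weighted ancestor at depth $\ell$). Given $\pi_{i^\star}$ and $b_{i^\star}$, I would finish with one successor search for $\ell$ among the node weights of $\pi_{i^\star}$, using a predecessor structure (e.g.\ a $y$-fast trie) built on every heavy path, which takes total space $\sum_i|\pi_i|=\cO(n)$; nodes below $b_{i^\star}$ have weight exceeding $\omega(b_{i^\star})\ge\ell$ and hence never realize the minimum, and if $\ell$ is at most the weight of the top of $\pi_{i^\star}$ the answer is that top node.

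The remaining step, which I expect to be the main obstacle, is locating $\pi_{i^\star}$ and $b_{i^\star}$ quickly, since a naive scan would touch all $\cO(\log n)$ heavy paths. I would resolve this by contracting each heavy path to a single node, obtaining the tree of heavy paths $\PS$ on at most $n$ nodes; then $\pi_1,\dots,\pi_t$ are precisely the ancestors in $\PS$ of the heavy path containing $u$, so each $\pi_i$ is reachable by a level-ancestor query in $\PS$, answerable in $\cO(1)$ time by a standard linear-space level-ancestor structure built in $\cO(n)$ time, and each $b_i$ is recovered in $\cO(1)$ extra time if every edge of $\PS$ stores its lower endpoint inside the parent heavy path. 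Since $\omega(b_1)<\dots<\omega(b_t)$ with $t=\cO(\log n)$, I would find $i^\star$ by binary search over $i\in\{1,\dots,t\}$ -- that is $\cO(\log t)=\cO(\log\log n)$ steps, each just an $\cO(1)$-time level-ancestor lookup and a comparison. Summing up, a query costs $\cO(\log\log n)$, while all components (the heavy-path decomposition, the per-path predecessor structures, the level-ancestor structure on $\PS$, and the global structure over all weights) occupy $\cO(n)$ space and are built in $\cO(n)$ time.
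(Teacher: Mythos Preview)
The paper does not prove this theorem; it merely cites it from \cite[Section 6.2.1]{DBLP:journals/talg/AmirLLS07}. Your write-up is a faithful reconstruction of the standard argument that appears there (and, earlier, in Farach and Muthukrishnan): heavy-path decomposition, a predecessor structure per heavy path, and a binary search over the $\cO(\log n)$ heavy paths meeting the root-to-$u$ path via level ancestors in the contracted tree. The structural observations you rely on---that the ``break'' nodes $b_1,\dots,b_t$ have strictly increasing weights, that the answer lies on $\pi_{i^\star}$ for the least $i^\star$ with $\omega(b_{i^\star})\ge\ell$, and that nodes of $\pi_{i^\star}$ below $b_{i^\star}$ can never realize the minimum---are all correct, and the query cost is indeed $\cO(\log\log n)$.

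Two small points worth tightening. First, your mention of a $y$-fast trie is a randomized construction; to match the paper's deterministic $\cO(n)$ preprocessing you should instead say that, after rank reduction, the per-path predecessor structures are static structures over a universe of size $n$ on sorted inputs, buildable deterministically in linear total time (this is exactly what Amir et al.\ assume). Second, to access $b_i$ in $\cO(1)$ you need, for the binary-search probe at index $i$, both $\pi_i$ and $\pi_{i+1}$: a level-ancestor query in $\PS$ gives $\pi_{i+1}$, its stored top node gives $\mathrm{top}(\pi_{i+1})$, and a parent pointer in $\T$ gives $b_i$; you essentially say this, but spelling it out avoids the impression that ``the edge of $\PS$'' alone determines $b_i$.
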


The $\D$-modified suffix tree $\T_{T,\D}$ is a weighted tree with the weight of each node defined as the length of the corresponding string. We define the \emph{locus} of a fragment $T[i \dd j]$ in $\T_{T,\D}$ to be the weighted ancestor of the leaf $i$ at string-depth $j-i+1$.

\subsection{Auxiliary Internal Queries}

In a \emph{Bounded LCP} query, one is given two fragments $U$ and $V$ of $T$ and needs to return the longest prefix of $U$ that occurs in $V$; we denote such a query by $\BRLCP(U,V)$.
Kociumaka et al.~\cite{DBLP:conf/soda/KociumakaRRW15} presented several tradeoffs for this problem, including the following.

\begin{lemma}[\cite{DBLP:conf/soda/KociumakaRRW15},{\cite[Corollary 7.3.4]{tomeksthesis}}]\label{lem:BRLCP}
Given a text $T$ of length $n$, one can construct in $\cO( n \sqrt{\log n})$ time an $\cO(n)$-size data structure that answers Bounded LCP queries in $\cO(\log^{\epsilon}n)$ time, for any constant $\epsilon>0$.
\end{lemma}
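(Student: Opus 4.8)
The plan is to combine the suffix tree of $T$ (with weighted‑ancestor support, \cref{thm:waq}) with a static two‑dimensional range‑successor structure over the pairs $(\text{suffix rank},\ \text{starting position})$; here a range‑successor query, given an interval of ranks and a position $q$, returns the smallest starting position $\ge q$ whose suffix has rank in that interval. Write $U=T[i_U\dd j_U]$ and $V=T[i_V\dd j_V]$, and put $\ell=\min(|U|,|V|)$; since a prefix of $U$ longer than $|V|$ cannot occur in $V$, we truncate $U$ to length $\ell$ at the outset. The key observation is monotonicity: if $U[1\dd\ell']$ occurs in $V$ then so does every shorter prefix, so $\BRLCP(U,V)$ is a single threshold. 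Furthermore, $U[1\dd\ell']$ occurs in $V$ iff some $p\in[i_V,\,j_V-\ell'+1]$ satisfies $\mathrm{lcp}(T[p\dd n],T[i_U\dd n])\ge\ell'$, and the set of such $p$ (ignoring the upper bound on $p$) is exactly the leaf set below the locus of $U[1\dd\ell']$, which is one contiguous block of the suffix array.

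Concretely, I would work along the root‑to‑leaf path of the suffix $T[i_U\dd n]$. Its explicit nodes of string‑depth $\le\ell$ partition the prefixes of $U$ into groups sharing the same occurrence‑position set in $T$; for the edge entering an explicit node $w$ of depth $d_w$ whose parent has depth $d_u$, the longest prefix realizable via that edge is $\min\bigl(d_w,\ (j_V+1)-p^*_w\bigr)$, where $p^*_w$ is the smallest leaf‑position $\ge i_V$ below $w$, obtained by one range‑successor query on the suffix‑array interval of $w$. As $w$ descends, $d_w$ strictly increases while $p^*_w$ weakly increases, so this quantity is unimodal in the depth of $w$; hence the answer is located by binary search on depth, each probe using a weighted‑ancestor query (\cref{thm:waq}) to fetch the relevant explicit node plus one range‑successor query in $\cO(\log^{\eps}n)$ time and $\cO(n)$ space. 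For the degenerate case $|V|=\cO(|U|)$ one instead invokes an Internal Pattern Matching query, which already lists all occurrences of a short prefix inside a comparably short fragment within the same time budget.

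The main obstacle is getting down to $\cO(\log^{\eps}n)$ overall: a black‑box binary search over depth incurs an extra logarithmic factor. This is removed by not treating the $\Theta(\log n)$ probes independently — one lays the range‑successor data along a heavy‑path decomposition of the suffix tree with fractional cascading between levels, so that after identifying the correct heavy path (again via weighted ancestors) the crossover depth is found in a single cascaded traversal rather than $\Theta(\log n)$ separate queries; this is the delicate engineering step, and it is where the near‑linear construction time (dominated by building the suffix tree and this multi‑level structure, costing the stated $\cO(n\sqrt{\log n})$) is spent. A secondary subtlety — that prefixes of $U$ may have very many occurrences in $V$ — never costs us anything, because we only ever extract the single smallest occurrence position past $i_V$; periodicity and run structure enter only inside the Internal Pattern Matching subroutine used for short $V$, which is supplied by the cited sources.
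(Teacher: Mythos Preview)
The paper does not prove \cref{lem:BRLCP}; it is stated as a citation of \cite{DBLP:conf/soda/KociumakaRRW15} and \cite[Corollary~7.3.4]{tomeksthesis}, with no proof in the present paper. So there is no ``paper's own proof'' to compare against --- the authors simply import the result as a black box.

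As for your sketch on its own merits: the reduction to a monotone/unimodal search along the suffix-tree path via range-successor queries is sound and would give an $\cO(\log n \cdot \log^{\eps} n)$ query time directly. The gap is exactly where you flag it: the passage from $\Theta(\log n)$ independent probes down to a single $\cO(\log^{\eps} n)$ query is asserted (``heavy-path decomposition with fractional cascading'') but not actually carried out, and this is the entire difficulty of the lemma. Fractional cascading in its standard form saves a $\log$ factor on repeated \emph{predecessor} searches over linked lists, not on range-successor queries over 2D point sets, and combining it with heavy paths so that the total auxiliary space stays $\cO(n)$ while the cascaded walk visits only $\cO(1)$ structures per level is precisely the nontrivial engineering that the cited references develop. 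Your outline is a plausible roadmap toward a proof, but as written it defers the one hard step to a sentence; to claim the stated bounds you would need to either reproduce the construction from \cite{tomeksthesis} or cite it, which is what the paper does.
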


Recall that $\C(i,j)$ returns the number of all occurrences of all the patterns of $\D$ in $T[i \dd j]$. The following result was proved in~\cite{DBLP:conf/isaac/Charalampopoulos19}.

\begin{lemma}[\cite{DBLP:conf/isaac/Charalampopoulos19}]\label{lem:count}
The $\C(i,j)$ queries can be answered in $\cO(\log^2 n/\log\log n)$ time with an $\cO(n+d \log n)$-size data structure, constructed in $\cO(n\log n / \log \log n +d\log^{3/2} n)$ time.
\end{lemma}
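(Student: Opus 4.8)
The plan is to reduce $\C(i,j)$ to two‑dimensional dominance counting over the occurrence set $\Occ(\D)$ and then to represent this set compactly, since it may have size $\Theta(nd)$. Identify an occurrence $o$ of a pattern $P$ with the point $(\mathrm{st}(o),\mathrm{en}(o))$, its starting and ending positions; distinct occurrences give distinct points because a fragment of a fixed length matches at most one pattern. Put
\[
\Phi(a,b)\ :=\ \bigl|\{o\in\Occ(\D)\ :\ \mathrm{st}(o)\le a,\ \mathrm{en}(o)\le b\}\bigr|.
\]
Then $\C(i,j)=\Phi(n,j)-\Phi(i-1,j)$, so it suffices to answer corner queries $\Phi(a,b)$, and $\Phi(n,b)$ only depends on $b$: the number of occurrences ending at a fixed position $e$ equals the number of patterns of $\D$ that are suffixes of $T[1\dd e]$, which equals the number of dictionary‑pattern ancestors of the leaf labelled $n-e+1$ in the $\mathrm{rev}(\D)$‑modified suffix tree of $\mathrm{rev}(T)$; all these counts and their prefix sums are computed in $\cO(n+d)$ time. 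So the entire difficulty lies in answering \emph{general} corner queries $\Phi(a,b)$.

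\emph{Compact representation of $\Occ(\D)$.} We compute all runs of $T$ in $\cO(n)$ time. For a fixed length $m$ there are at most $n$ occurrences of length‑$m$ patterns (one per starting position), but summed over the up to $d$ distinct lengths the set $\Occ(\D)$ can still be of size $\Theta(nd)$, so the occurrences must be grouped. For a pattern $P$ and a run $R$ of $T$ with $\per(R)\le|P|$, the occurrences of $P$ contained in $R$ occupy a single residue class modulo $\per(R)$ and hence form one arithmetic progression with difference $\per(R)$; the occurrences of $P$ contained in no such run are \emph{rigid}. The plan is to prove — by charging run--pattern incidences to nodes of $\T_{T,\D}$ and to doubling ``levels'' of runs, using the weak periodicity lemma and the linear bound on the number of runs — that $\Occ(\D)$ is the disjoint union of a family $\G$ of $\cO(n+d\log n)$ \emph{blocks}, each being either a single point or an arithmetic progression of points on one line $\{(k,k+m-1):k\}$ of slope $1$, and each annotated so that its contribution to any corner query $\Phi(a,b)$ is a closed‑form expression evaluable in $\cO(1)$ time. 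To recognise these blocks one needs, for various fragments, their loci in $\T_{T,\D}$, obtained by weighted‑ancestor queries (\cref{thm:waq}), and occasionally the position of one fragment inside another, obtained by Bounded LCP queries (\cref{lem:BRLCP}).

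\emph{Answering queries.} Store $\G$ in a two‑level orthogonal range‑counting structure: the outer level is a balanced search tree over the lines $m$ (i.e.\ over pattern lengths), and each node carries a structure over the starting positions of the blocks below it together with the partial aggregates needed to evaluate $\Phi$ restricted to those blocks. A corner query is routed to $\cO(\log n)$ canonical nodes, each answered in $\cO(1)$ from the aggregates, plus one predecessor‑type search inside a single leaf list; taking the bottom‑level structure to contribute an $\cO(\log n/\log\log n)$ factor gives query time $\cO(\log^2 n/\log\log n)$, space $\cO(n+d\log n)$, and — building the $\cO(d\log n)$ block annotations with the sub‑logarithmic internal queries above rather than by naïve traversals — preprocessing time $\cO(n\log n/\log\log n+d\log^{3/2}n)$.

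\emph{Main obstacle.} The crux is the structural claim in the second step: that only $\cO(n+d\log n)$ blocks are needed. Counting the rigid occurrences, and especially bounding the number of (pattern, run) incidences, requires a careful charging argument in which each arithmetic‑progression block is paid for either by an explicit node of $\T_{T,\D}$ or by a doubling level of a run, with the weak periodicity lemma forbidding one run from spawning blocks for too many distinct patterns. A secondary obstacle is staying within the stated preprocessing bound, which is exactly why \cref{lem:BRLCP} and \cref{thm:waq} are invoked.
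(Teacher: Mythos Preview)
This lemma is not proved in the present paper: it is quoted verbatim from \cite{DBLP:conf/isaac/Charalampopoulos19} and used as a black box, so there is no ``paper's own proof'' to compare against here. What you have written is a proof \emph{plan} for the result of that earlier paper, and you yourself flag the crux --- the $\cO(n+d\log n)$ bound on the number of blocks --- as an unresolved ``main obstacle''. As written this is therefore not a proof but an outline with an explicitly acknowledged gap.

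On the technical side, your high-level shape is reasonable (reduce $\C$ to corner dominance, separate periodic from aperiodic occurrences via runs, aggregate into arithmetic-progression blocks, then layer range-counting structures), and this is indeed the spirit of the construction in \cite{DBLP:conf/isaac/Charalampopoulos19}. However, the actual argument there does not proceed by a single global ``block'' decomposition of $\Occ(\D)$ with a charging argument of the kind you describe; instead it routes through a heavy-path decomposition of the $\D$-modified suffix tree and reduces $\C$ to $\cO(\log n)$ weighted 2D range-counting instances, with periodic patterns handled separately through the run structure. Your proposed charging of (pattern, run) incidences to ``doubling levels of runs'' is plausible in intention but is precisely the part that needs to be written out to constitute a proof, and it is not obvious that it yields the exact $d\log n$ term (as opposed to, say, $d\log^2 n$) without further care. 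If your goal is only to \emph{use} this lemma, cite it; if your goal is to \emph{reprove} it, the block-count bound must be established, not asserted.
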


\subsection{Geometric Toolbox}
For a set of $n$ points in 2D, a range counting query returns the number of points in a given rectangle.
\begin{theorem}[Chan and Pătraşcu~\cite{DBLP:conf/soda/ChanP10}]\label{thm:range_count}
Range counting queries for $n$ integer points in 2D can be answered in time $\cO(\log n / \log \log n)$ with a data structure of size $\cO(n)$ that can be constructed in time $\cO(n \sqrt{\log n})$.
\end{theorem}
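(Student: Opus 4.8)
The plan is to reduce a general $4$-sided orthogonal range counting query to a $2$-dimensional \emph{dominance counting} query (given a point, count the input points it dominates) by the standard inclusion--exclusion over the four corners of the query rectangle; it therefore suffices to build a dominance-counting structure with the stated bounds. First I would normalise the input to \emph{rank space}: sort the $n$ points by each coordinate (integer sorting fits well within the $\cO(n\sqrt{\log n})$ construction budget), so that without loss of generality all coordinates lie in $[1\dd n]$, and map a query point to its pair of ranks with two predecessor searches in $\cO(\log\log n)$ time, which is absorbed by the target query time. The crux is then to answer the rank-space dominance query ``how many points have $x$-rank $\le q_x$ and $y$-rank $\le q_y$'' in $\cO(\log n/\log\log n)$ time using $\cO(n)$ words.

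The core object is a search tree over the range $[1\dd n]$ of $y$-ranks with branching factor $b=\lceil\sqrt{\log n}\,\rceil$, so that its height is $\cO(\log_b n)=\cO(\log n/\log\log n)$. Listing the points in $x$-order, at each node $v$ whose $y$-interval is split into (at most) $b$ child intervals, record for every point whose $y$-rank lies in $v$'s interval the index in $[b]$ of the child interval containing it. These index sequences are stored \emph{packed}, using $\lceil\log b\rceil=\Theta(\log\log n)$ bits per entry, so that $\Theta(\log n/\log\log n)$ entries fit in a machine word; since every point contributes one entry per level of the tree and there are $\cO(\log n/\log\log n)$ levels, the packed sequences occupy $\cO(n\log n)$ bits in total, i.e. $\cO(n)$ words. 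A dominance query is answered by walking root-to-leaf along the path determined by $q_y$: at each node one needs the number of the first (appropriately many) entries of that node's packed sequence that carry an index strictly below, respectively equal to, the child index dictated by $q_y$ --- a ``partial prefix rank'' on a string over the tiny alphabet $[b]$ --- after which one recurses into the corresponding child with an updated position bound. Summing these partial ranks (the ``$<$'' contributions) along the path yields the answer.

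What remains is to perform each partial prefix rank in $\cO(1)$ time. Inside a node's sequence I would use two-level sampling: cumulative per-index counts at the boundaries of super-blocks of $\Theta(\log^2 n)$ entries, and, relative to each super-block start, cumulative per-index counts at the boundaries of word-sized sub-blocks; a precomputed Four-Russians table maps a (half-word-sized packed block, cut-off position, threshold index) triple to the desired count. Choosing the sub-block size so that its packed encoding has at most $\tfrac12\log n$ bits keeps this table of size $\cO(\sqrt n\,\log^{\cO(1)}n)=o(n)$, and a routine accounting shows the sampled counts over all nodes and levels occupy only $o(n\log n)$ bits, so the overall structure still fits in $\cO(n)$ words; each node on the query path is then handled with $\cO(1)$ table look-ups and arithmetic, for $\cO(\log n/\log\log n)$ total. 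The construction builds the packed sequences and samples level by level, reusing the precomputed sorted orders, and --- as in the companion offline range-counting / inversion-counting algorithm of the same work --- can be arranged to run in $\cO(n\sqrt{\log n})$ time. The main obstacle is exactly this balancing act: fitting the multi-level sampled rank substructures together with the look-up tables into $\cO(n)$ words \emph{without} sacrificing $\cO(1)$ time per level, and organising the bottom-up construction so that the integer sort, rather than the tree building, is the $\cO(n\sqrt{\log n})$ bottleneck.
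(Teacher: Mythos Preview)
The paper does not prove this theorem at all: it is stated as a black-box citation of Chan and P\u{a}tra\c{s}cu, with no accompanying argument. Your proposal, by contrast, sketches the actual construction from that original work --- reduction to dominance counting, rank-space normalisation, a high-fanout (generalised wavelet) tree with branching factor $b=\Theta(\sqrt{\log n})$, packed child-index sequences, and two-level sampling with a Four-Russians table for $\cO(1)$-time partial rank. This is a faithful outline of the Chan--P\u{a}tra\c{s}cu method and is broadly correct; the height, space, and per-level time bounds you give all check out. Since the paper itself offers nothing to compare against here, there is no divergence to discuss: you have simply gone further than the paper did for a result the paper treats as prior art.
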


A quarterplane is a range of the form $(-\infty, x_1]\times (-\infty, x_2]$. By reversing coordinates we can also consider quarterplanes with some dimensions of the form $[x_i,\infty)$.
Let us state the following result on orthant color range counting due to Kaplan et al.~\cite{DBLP:journals/siamcomp/KaplanRSV08} in the special case of two dimensions.

\begin{theorem}[{\cite[Theorem 2.3]{DBLP:journals/siamcomp/KaplanRSV08}}]\label{thm:colored}
Given $n$ colored integer points in 2D, we can construct in $\cO(n\log n)$ time an $\cO(n\log n)$-size data structure that, given any quarterplane $Q$, counts the number of distinct colors with at least one point in $Q$ in $\cO(\log n)$ time.
\end{theorem}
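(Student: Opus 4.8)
The plan is to reduce distinct-color counting in a quarterplane to ordinary (uncolored) orthogonal range counting and then invoke \cref{thm:range_count}. Fix the orientation $Q=(-\infty,x_1]\times(-\infty,x_2]$; the other three orientations follow by negating coordinates, as the theorem statement already anticipates. First I would perform a \emph{skyline reduction}: for each color $c$, keep only the points $(a,b)$ of color $c$ that are minimal in the coordinatewise order, i.e.\ such that no other point of color $c$ dominates them from the lower left. Since such a quarterplane contains a point of color $c$ iff it contains a minimal point of color $c$ (take any point of color $c$ in $Q$ and descend to a minimal one below it, which is still in $Q$), this changes no answer, and all skylines can be computed in $\cO(n\log n)$ total time by sorting. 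Write the skyline of color $c$ as $p_1,\dots,p_{m_c}$ with coordinates $(u_1,v_1),\dots,(u_{m_c},v_{m_c})$; minimality forces $u_1<\dots<u_{m_c}$ and $v_1>\dots>v_{m_c}$, and $\sum_c m_c\le n$.

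The geometric heart of the argument is the identity: the set of corners $(x_1,x_2)$ for which $Q$ contains a point of color $c$ equals the \emph{pairwise disjoint} union $\bigcup_{i=1}^{m_c} R_{c,i}$, where $R_{c,i}=[u_i,u_{i+1})\times[v_i,\infty)$ and $u_{m_c+1}:=\infty$. Indeed, if $x_1\in[u_i,u_{i+1})$ then $p_i$ is the skyline point of color $c$ of largest $x$-coordinate not exceeding $x_1$, hence (the $v$'s being decreasing) also of smallest $y$-coordinate among all color-$c$ points with $x$-coordinate at most $x_1$; thus $Q$ contains such a point iff $v_i\le x_2$, i.e.\ iff $(x_1,x_2)\in R_{c,i}$, while for $x_1<u_1$ the quarterplane $Q$ contains no point of color $c$ and $(x_1,x_2)$ lies in no $R_{c,i}$. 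Disjointness is immediate since the $x$-projections $[u_i,u_{i+1})$ partition $[u_1,\infty)$. Consequently, for every query the sought number of distinct colors equals the total number of rectangles in the family $\{R_{c,i}\}_{c,i}$ that are stabbed by $(x_1,x_2)$ --- here disjointness within each color is exactly what prevents overcounting --- and this family has only $\cO(n)$ rectangles.

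It then remains to count, for a query point $(x_1,x_2)$, how many of $\cO(n)$ three-sided rectangles $[\alpha,\beta)\times[\gamma,\infty)$ contain it. For each such rectangle I would create a weighted point $(\alpha,\gamma)$ of weight $+1$ and, unless $\beta=\infty$, a weighted point $(\beta,\gamma)$ of weight $-1$. For any $(x_1,x_2)$, the $+1$ point is picked up by the dominance region $(-\infty,x_1]\times(-\infty,x_2]$ iff $\alpha\le x_1$ and $\gamma\le x_2$, and the $-1$ point iff $\beta\le x_1$ and $\gamma\le x_2$, so together they contribute exactly $1$ when $\alpha\le x_1<\beta$ and $\gamma\le x_2$ and $0$ otherwise; hence the stabbing count equals the number of $+1$-points in the quarterplane minus the number of $-1$-points in it. Splitting the $\cO(n)$ derived points by sign and building the range-counting structure of \cref{thm:range_count} on each of the two sets, one answers a query with two range-counting queries and a subtraction. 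The whole construction then runs in $\cO(n\log n)$ time (dominated by the skyline step; each range-counting structure costs $\cO(n\sqrt{\log n})$), uses $\cO(n)$ space, and answers queries in $\cO(\log n/\log\log n)$ time --- all within the stated bounds.

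The main obstacle is establishing the disjoint-rectangle identity for each color and recognizing that it is precisely the within-color disjointness that lets a plain (multiplicity-counting) query over the entire rectangle family return the number of \emph{distinct} colors hit rather than an overcount; everything else --- rank-space reduction of the coordinates, handling the $\infty$ sentinels and half-open boundaries, and the $\pm1$ transformation to dominance counting --- is routine bookkeeping.
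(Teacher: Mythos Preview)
The paper does not prove this theorem at all: it is quoted from Kaplan, Rubin, Sharir, and Verbin (the citation is in the theorem header), with no accompanying proof in the present paper. So there is nothing to compare your argument against.

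That said, your argument is correct and self-contained. The per-color skyline reduction, the decomposition of each color's ``seen'' region into the pairwise-disjoint vertical slabs $R_{c,i}=[u_i,u_{i+1})\times[v_i,\infty)$, and the $\pm1$ dominance trick to count three-sided-rectangle stabbings are exactly the standard mechanism for this reduction, and every step is justified. By routing the final count through \cref{thm:range_count} you in fact obtain $\cO(n)$ space and $\cO(\log n/\log\log n)$ query time, strictly within (indeed better than) the $\cO(n\log n)$ space and $\cO(\log n)$ query time stated in the theorem; the $\cO(n\log n)$ construction time is dominated by the sort in the skyline step.
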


We show how to apply geometric methods to a special variant of the $\CD$ 
problem, where we are interested in a small subset of occurrences of each pattern.

Let $\D=\{P_1,P_2, \ldots ,P_d\}$ and $\S$ be a family of sets $S_1,\ldots,S_d$ such that $S_k\subseteq \Occ(P_k)$, where $\Occ(P_k)$ is the set of positions of $T$ where $P_k$ occurs. Let $\|\S\|=\sum_k |S_k|$.
For each pattern $P_k$, we call the positions in the set $S_k$ the \emph{special positions of $P_k$}. 
Counting distinct patterns occurring at their special positions in $T[i \dd j]$ is called 
$\CD_{\S}(i,j)$.

\begin{lemma}\label{lem:fCD}
The $\CD_\S(i,j)$ queries can be answered in $\Oh(\log n)$ time with a data structure of size $\Oh(n+\|\S\| \log n)$ that can be constructed in $\Oh(n+\|\S\| \log n)$ time.
\end{lemma}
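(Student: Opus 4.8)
The plan is to reduce $\CD_\S(i,j)$ to a colored orthant range counting query (Theorem~\ref{thm:colored}). Each special occurrence of a pattern $P_k$ at position $p \in S_k$ corresponds to a fragment $T[p \dd p+|P_k|-1]$; this occurrence is contained in $T[i\dd j]$ if and only if $p \ge i$ and $p + |P_k| - 1 \le j$. So for each $(k,p)$ with $p\in S_k$ I would create a point with coordinates $(p,\; p+|P_k|-1)$ and color $k$. A query $\CD_\S(i,j)$ then asks for the number of distinct colors having at least one point in the orthant $[i,\infty)\times(-\infty,j]$, which is exactly a quarterplane query. There are $\|\S\|$ points in total, so Theorem~\ref{thm:colored} gives the stated $\Oh(\|\S\|\log n)$ space and preprocessing and $\Oh(\log n)$ query time; the additional $\Oh(n)$ term in the bounds is there only to read the text and set up whatever machinery is needed to extract pattern lengths.

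The one thing that needs care is that a pattern $P_k$ is given only as a fragment $T[a_k\dd b_k]$, so its length $\ell_k=b_k-a_k+1$ is immediately available, but the positions in $S_k$ are supplied as part of the input $\S$; I will simply assume $\S$ is given explicitly as a list of $(k,p)$ pairs (consistent with $\|\S\|$ being the natural size parameter), so that building the point set takes $\Oh(\|\S\|)$ time plus the time to sort, which is subsumed in $\Oh(\|\S\|\log n)$. A second subtlety is that distinct patterns $P_k$ and $P_{k'}$ may be equal as strings (the dictionary is a set of fragments, and two different fragments can match), in which case they should be counted once. To handle this I would first identify equal patterns — e.g.\ using the locus of each $T[a_k\dd b_k]$ in $\T_{T,\D}$, or by lexicographic sorting of the fragments via the suffix tree in $\Oh(n+d)$ time — and assign the same color to all dictionary entries that represent the same string, taking the union of their special-position sets. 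After this normalization, colors are in bijection with distinct pattern strings and the colored-range-counting reduction is exact.

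The main obstacle I anticipate is precisely this deduplication step together with making sure the bookkeeping does not blow up the size bound: one must merge the sets $S_k$ over all $k$ in a given equivalence class without creating duplicate points (which would be harmless for counting distinct colors but wasteful), and the merging plus relabeling must run in $\Oh(n + \|\S\|\log n)$ time. Everything else — constructing the points, invoking Theorem~\ref{thm:colored}, and translating $(i,j)$ into the quarterplane $[i,\infty)\times(-\infty,j]$ — is routine. I would therefore structure the proof as: (1) compute, for each $k$, the locus / canonical identifier of $P_k$ and group equal patterns; (2) for each group assign one color and emit points $(p,\,p+\ell_k-1)$ for every $p$ in any $S_k$ of the group; (3) build the structure of Theorem~\ref{thm:colored} on these points; (4) answer $\CD_\S(i,j)$ by a single quarterplane query on $[i,\infty)\times(-\infty,j]$, and verify correctness of the containment condition.
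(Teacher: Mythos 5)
Your reduction is exactly the paper's proof: for each special occurrence $a\in S_k$ of $P_k=T[a\dd b]$ add a point $(a,b)$ with color $k$, and answer $\CD_\S(i,j)$ by counting distinct colors in the quarterplane $[i,\infty)\times(-\infty,j]$ via Theorem~\ref{thm:colored}. The deduplication worry you raise is reasonable but moot under the paper's conventions, where $\D$ is a set of distinct strings (so distinct indices $k,k'$ already name distinct patterns); in any case your locus-based normalization handles it within the stated bounds.
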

\begin{proof}
 We assign a different integer color $c_k$ to every pattern $P_k \in \D$.
 Then, for each fragment $T[a \dd b]= P_k$ such that $a \in S_k$, we add point $(a,b)$ with color $c_k$ in an initially empty 2D grid $\mathcal{G}$.
 A $\CD_\S(i,j)$ query reduces to counting different colors in the range $[i,\infty) \times (-\infty,j]$ of $\mathcal{G}$. The complexities follow from Theorem~\ref{thm:colored}.
\end{proof}

\subsection{Runs}
A \emph{run} (also known as a \emph{maximal repetition}) is a periodic fragment $R=T[a\dd b]$ which can be extended neither to the left nor to the right without increasing the period $p= \per(R)$, i.e.,
$T[a-1]\neq T[a +p-1] \text{ and } T[b-p+1] \neq T[b+1]$
provided that the respective positions exist.
If $\mathcal{R}$ is the set of all runs in a string $T$ of length $n$, then $|\mathcal{R}|\leq n$~\cite{DBLP:journals/siamcomp/BannaiIINTT17} and $\mathcal{R}$ can be computed in $\cO(n)$ time~\cite{KK:99}.
The \emph{exponent} $\exp(R)$ of a run $R$ with period $p$ is $|R|/p$. The sum of exponents of runs in a string of length $n$ is $\Oh(n)$~\cite{DBLP:journals/siamcomp/BannaiIINTT17,KK:99}.

The \emph{Lyndon root} of a periodic string $U$ is the  lexicographically smallest rotation of its $\per(U)$-length prefix. If $L$ is the Lyndon root of a periodic string $U$, then $U$ may be represented as  $(L ,r,a,b)$; here  $U= L[|L|-a+1\dd|L|]L^r L[1\dd b]$, and $r$ is called the \emph{rank} of $U$. Note that the minimal rotation of a fragment of a text can be computed in $\cO(1)$ time after an $\cO(n)$-time preprocessing~\cite{DBLP:conf/cpm/Kociumaka16}.

For a periodic fragment $U$, let $\run(U)$ be the run with the same period that contains $U$.  
\begin{lemma}[\cite{DBLP:journals/siamcomp/BannaiIINTT17,DBLP:journals/tcs/CrochemoreIKRRW14,tomeksthesis}]\label{lem:runS}
For a periodic fragment $U$, $\run(U)$ and its Lyndon root are 
uniquely determined and can be computed in constant time after linear-time preprocessing.
\end{lemma}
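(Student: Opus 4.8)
\emph{Plan and structure.} The plan is to separate the statement into a combinatorial part (unique determination of $\run(U)$ and its Lyndon root) and an algorithmic part (their constant-time extraction), and to reduce the latter to two primitives that are already available in the excerpt: listing all runs in linear time, and computing minimal rotations of fragments in $\Oh(1)$ time.

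\emph{Uniqueness.} Let $U=T[a\dd b]$ be periodic and put $p=\per(U)$; note $|U|\ge 2p$. First I would extend $U$ maximally to the left and to the right keeping $p$ a period, obtaining a fragment $R=T[s\dd e]$ with $s\le a$ and $e\ge b$. I claim $\per(R)=p$: any period $q\le p$ of $R$ is also a period of its subfragment $U$, hence $q\ge \per(U)=p$, so $q=p$. Since $R$ cannot be extended on either side without destroying $p$ as a period and admits no smaller period, $R$ is a run with period $p$. For uniqueness, if $R'$ is another run with period $p$ containing $U$, then $R$ and $R'$ intersect in at least $|U|\ge 2p\ge p$ positions, so $p$ is a period of the fragment $R\cup R'$; the left- and right-maximality in the definition of a run then forces $R'\subseteq R$ and, symmetrically, $R\subseteq R'$, whence $R=R'$. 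Thus $\run(U)$ is uniquely determined, and since $\run(U)$ is periodic with period $p$, the lexicographically smallest rotation of any length-$p$ factor of $\run(U)$ is the same string, which is its Lyndon root.

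\emph{Preprocessing and query.} In the preprocessing phase I would (i) compute the set $\Ru$ of all runs of $T$, with their endpoints and periods, in $\Oh(n)$ time~\cite{KK:99,DBLP:journals/siamcomp/BannaiIINTT17}, and (ii) build in $\Oh(n)$ time the structure of~\cite{DBLP:conf/cpm/Kociumaka16} that returns the minimal rotation of an arbitrary fragment of $T$ in $\Oh(1)$ time. Given a periodic fragment $U=T[a\dd b]$, the query first obtains $p=\per(U)$ together with the endpoints $s\le a$, $e\ge b$ of $\run(U)=T[s\dd e]$ in $\Oh(1)$ time; this is exactly the operation of extracting the period and the containing run of a periodic fragment from the precomputed runs structure, for which constant-time queries are provided by the cited works~\cite{DBLP:journals/siamcomp/BannaiIINTT17,DBLP:journals/tcs/CrochemoreIKRRW14,tomeksthesis}. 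Given $s,e,p$, the Lyndon root of $\run(U)$ is the minimal rotation of the length-$p$ factor $T[s\dd s+p-1]$, returned in $\Oh(1)$ time by the structure of~\cite{DBLP:conf/cpm/Kociumaka16}; the representation $(L,r,a,b)$ used in the paper is then recovered by $\Oh(1)$ arithmetic from $s$, $e$, $p$, and the rotation offset.

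\emph{Main obstacle.} The only non-routine ingredient is the constant-time, linear-preprocessing extraction of $\per(U)$ and $\run(U)$ for a periodic fragment. The difficulty is that the run covering $[a\dd b]$ may extend arbitrarily far beyond it, so a plain range-minimum query over the $\Oh(n)$ runs viewed as weighted points does not achieve $\Oh(1)$ time; one must instead exploit the finer structure of runs — the $\Oh(n)$ bound on the sum of exponents and the arithmetic-progression pattern of Lyndon-root occurrences inside a run — as developed in~\cite{DBLP:journals/tcs/CrochemoreIKRRW14,tomeksthesis}. Everything else, namely the uniqueness argument via the periodicity lemma and the Lyndon-root computation, is straightforward once that primitive and the minimal-rotation toolkit of~\cite{DBLP:conf/cpm/Kociumaka16} are in place.
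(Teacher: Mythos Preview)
The paper does not give its own proof of this lemma; it is stated as a cited result from~\cite{DBLP:journals/siamcomp/BannaiIINTT17,DBLP:journals/tcs/CrochemoreIKRRW14,tomeksthesis} with no accompanying argument. Your proposal is a correct and faithful reconstruction of what those references provide: the uniqueness argument via maximality and the periodicity lemma is standard and sound, and you correctly identify that the only substantive algorithmic content is the constant-time retrieval of $\per(U)$ and the containing run, which you rightly defer to the cited works, with the Lyndon root then obtained via the minimal-rotation primitive of~\cite{DBLP:conf/cpm/Kociumaka16}. There is nothing to compare against in the paper itself, and your sketch is consistent with how these results are established in the literature.
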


We use runs in 2-approximate $\CD(i,j)$ queries and in counting squares.

\section{Answering $\CD$ 2-Approximately}\label{sec:approx}

\subsection{CountDistinct for Extended or Contracted Fragments}

For two positions $\ell$ and $r$, we define $\Pref_{\D}(\ell,r)$ as the longest prefix of $T[\ell \dd r]$ that matches some pattern $P\in \D$; the length of such prefix is at most $r-\ell+1$. 
Let us show how to compute the locus of $\Pref_{\D}(\ell,r)$ in the $\D$-modified suffix tree $\T_{T,\D}$. 
To this end, we preprocess $\T_{T,\D}$ for weighted ancestor queries and store at every node $v$ of $\T_{T,\D}$ a pointer $p(v)$ to the nearest ancestor $u$ (including $v$) of $v$ such that $\Lab(u)\in \D$.
To return $\Pref_{\D}(\ell,r)$, we find the locus $u$ of $T[\ell\dd r]$ in the $\D$-modified suffix tree.
We return $p(u)$ if $|\Lab(u)|=|T[\ell\dd r]|$ and $p(v)$, where $v$ is the parent of $u$, otherwise.

\cref{lem:best} applies the $\D$-modified suffix tree to the problem of maintaining the count of distinct patterns occurring in a fragment subject to extending or shrinking the fragment.

\begin{lemma}\label{lem:best}
For any constant $\epsilon >0$, given $\CD(i,j)$, both $\CD(i\pm 1,j)$ and $\CD(i,j \pm 1)$ can be computed in $\Oh(\log^{\epsilon} n)$ time with an $\cO(n+d)$-size data structure that can be constructed in $\Oh(n\sqrt{\log n}+d)$ time. 
\end{lemma}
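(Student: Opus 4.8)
The plan is never to recompute $\CD$ from scratch but only to determine how the count changes when one endpoint moves by one. All four cases are handled uniformly; I describe the right extension, i.e.\ computing $\CD(i,j+1)$ from $\CD(i,j)$, and indicate the changes for the other three.

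The first and crucial step is to show that the set of \emph{newly matched} patterns is highly structured. A pattern $P\in\D$ is counted by $\CD(i,j+1)$ but not by $\CD(i,j)$ precisely when $P$ occurs in $T[i\dd j+1]$ but not in $T[i\dd j]$. Since $T[i\dd j]$ is a prefix of $T[i\dd j+1]$, every occurrence of such a $P$ inside $T[i\dd j+1]$ must end at position $j+1$, so $P$ is a suffix of $T[i\dd j+1]$ that does not occur anywhere inside $T[i\dd j]$. The suffixes of $T[i\dd j+1]$ are linearly ordered by the suffix relation, and whenever some suffix occurs in $T[i\dd j]$ so does every shorter suffix; hence there is a threshold $\lambda$, namely the length of the longest suffix of $T[i\dd j+1]$ that occurs in $T[i\dd j]$ (with $\lambda=0$ if none does), such that the newly matched patterns are exactly the patterns of $\D$ that are suffixes of $T[i\dd j+1]$ of length strictly larger than $\lambda$. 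Thus $\CD(i,j+1)$ equals $\CD(i,j)$ plus the number of such patterns. Symmetrically: $\CD(i,j-1)$ subtracts the number of patterns that are suffixes of $T[i\dd j]$ longer than the longest suffix of $T[i\dd j]$ occurring in $T[i\dd j-1]$; $\CD(i-1,j)$ adds the number of patterns that are prefixes of $T[i-1\dd j]$ longer than the longest prefix of $T[i-1\dd j]$ occurring in $T[i\dd j]$; and $\CD(i+1,j)$ subtracts the number of patterns that are prefixes of $T[i\dd j]$ longer than the longest prefix of $T[i\dd j]$ occurring in $T[i+1\dd j]$.

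The remaining work is to evaluate the threshold and the count efficiently. The threshold is exactly a Bounded LCP query, since the longest prefix of a fragment $U$ that occurs in a fragment $V$ of the same string equals $\BRLCP(U,V)$. For the two left-endpoint cases I would apply \cref{lem:BRLCP} directly to $T$; for the two right-endpoint cases I would apply it to the reverse $T^{R}$ of $T$, because a suffix of one fragment of $T$ occurring in another fragment of $T$ corresponds, after reversal, to a prefix of a fragment of $T^{R}$ occurring in another fragment of $T^{R}$. Each threshold then costs $\Oh(\log^{\epsilon}n)$ time after an $\Oh(n\sqrt{\log n})$-time, $\Oh(n)$-space preprocessing, done once for $T$ and once for $T^{R}$. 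For the count, the patterns that are prefixes of a fixed fragment $T[a\dd b]$ are exactly the nodes $v$ of $\T_{T,\D}$ with $\Lab(v)\in\D$ that lie on the root-to-leaf path of the suffix $T[a\dd n]$, and those whose string-depth lies in an interval $(\tau, b-a+1]$, for the relevant threshold $\tau$, form a contiguous sub-path; I would find its endpoints with $\Oh(1)$ weighted ancestor queries (\cref{thm:waq}) and count the marked nodes on it using, stored at every node, the number of its ancestors (inclusive) whose path-label lies in $\D$ --- a single DFS computes these. The boundary cases (empty interval, zero threshold, the interval reaching the leaf) are immediate. The two right-endpoint counts are computed identically inside $\T_{T^{R},\D^{R}}$, where $\D^{R}=\{P^{R}:P\in\D\}$, using that $P$ is a suffix of a fragment of $T$ of length $>\tau$ iff $P^{R}$ is a prefix of the reversed fragment of $T^{R}$ of string-depth $>\tau$. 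Both modified suffix trees, with weighted-ancestor support and the per-node ancestor counts, are built in $\Oh(n+d)$ time.

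Collecting the pieces, the data structure consists of $\T_{T,\D}$, $\T_{T^{R},\D^{R}}$ and two Bounded LCP structures; it occupies $\Oh(n+d)$ space, is built in $\Oh(n\sqrt{\log n}+d)$ time, and each of the four updates performs $\Oh(1)$ Bounded LCP queries, $\Oh(1)$ weighted ancestor queries and $\Oh(1)$ arithmetic operations, i.e.\ $\Oh(\log^{\epsilon}n)$ time. The part I expect to require the most care is the structural reduction of the second paragraph --- the monotonicity of occurrences along the chain of nested suffixes (resp.\ prefixes) and the check that, apart from it, only occurrences ending at (resp.\ starting at) the moved endpoint matter --- as this is what turns an apparently global recount into the ``threshold plus sub-path'' computation. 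Everything after that is a routine assembly of the quoted tools.
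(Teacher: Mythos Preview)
Your proposal is correct and essentially identical to the paper's proof: both reduce the update to counting dictionary patterns on a root-to-leaf path of the $\D$-modified suffix tree whose string-depth exceeds a threshold given by a Bounded LCP query, using precomputed ancestor-counts $\#(v)$ and weighted ancestor queries; the right-endpoint cases are handled by building the same structures for $T^{R}$ and $\D^{R}$. The only cosmetic difference is that the paper presents the $i\pm1$ case explicitly (and defines an auxiliary $\Pref_{\D}$ primitive to locate the deepest pattern of bounded length, which absorbs the boundary cases you flag as ``immediate''), whereas you present $j+1$ first.
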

\begin{proof}
We only present a data structure that computes $\CD(i\pm 1,j)$ queries.
Queries $\CD(i,j \pm 1)$ can be handled analogously by building the same data structure for the reverses of all the strings in scope.

We show how to compute the number of patterns $P \in \D$ whose only occurrence in some fragment $T[\ell \dd r]$ starts at position $\ell$.
The computation of $\CD(i\pm 1,j)$ follows directly by setting $j=r$ and $\ell$ equal to $i-1$ or $i$.

\subparagraph{Data structure.} We preprocess $T$ for Bounded LCP queries (\cref{lem:BRLCP}) and construct the $\D$-modified suffix tree $\T_{T,\D}$ of text $T$ and dictionary $\D$.
In addition, we preprocess $\T_{T,\D}$ for weighted ancestor queries and store at every node $v$ of $\T_{T,\D}$ the number $\#(v)$ of the ancestors $u$ (including $v$) of $v$ such that $\Lab(u)\in \D$.

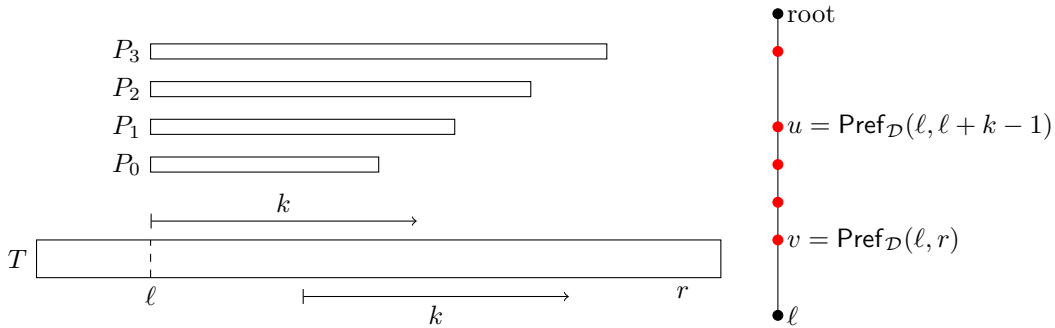
\begin{figure}[htpb!]
\begin{center}
\begin{tikzpicture}
\tikzset{
  dot/.style = {draw,fill,circle, minimum size=1.3mm, inner sep=0pt, outer sep=0pt}
}

\draw (0.5,0) rectangle (9.5,0.5);
\node at (0.5,0.25) [left] {$T$};
\node at (2, 0) [below] {$\ell$};
\node at (9, 0) [below] {$r$};
\draw [dashed] (2,0) -- +(0, 0.5);

\foreach \label/\y/\w in {$P_0$/1.5/3, $P_1$/2.0/4, $P_2$/2.5/5, $P_3$/3.0/6} {
  \node at (2,\y) [left] {\label};
  \draw (2,\y-0.1) rectangle +(\w,0.2);
}

\draw [|->] (2,0.75) -- +(3.5,0) node[midway,above] {$k$};

\draw [|->] (4,-0.25) -- +(3.5,0) node[midway,below] {$k$};

\begin{scope}[xshift=10.25cm, yshift=-0.5cm]
\node[dot] (root) at (0,4) {};
\node at (root) [right]  {root}; 
\node[dot] (l) at (0,0) {};

\node at (l) [right] {$\ell$}; 
\draw (root)--(l);
\node[dot,red] (u) at (0,2.5) {};
\node at (u) [right] {$u=\Pref_{\D}(\ell, \ell+k-1)$}; 
\node[dot,red] (v) at (0,1) {};
\node at (v) [right] {$v=\Pref_{\D}(\ell, r)$}; 

\node[dot,red] (p) at (0,1.5) {};
\node[dot,red] (q) at (0,2) {};
\node[dot,red] (z) at (0,3.5) {};
\end{scope}
\end{tikzpicture}
\end{center}
\caption{The setting of~\cref{lem:best}. Left: text $T$. Right: the path from the root of $\T_{T,\D}$ to the leaf with path-label $T[\ell \dd n]$.
The nodes of the path whose path-labels match some patterns from $\D$ are drawn in red.
Here, $P_0$ is the longest pattern that occurs at $\ell$ and also has an occurrence in $T[\ell+1 \dd r]$; its locus in $\T_{T,\D}$ is $u=\Pref_{\D}(\ell,\ell+k-1)$.
The patterns that occur in $T[\ell \dd r]$ only at position $\ell$ are $P_1, P_2$ and $P_3$. The locus of $P_3$ is $v=\Pref_{\D}(\ell,r)$.
Then, $\#(v)-\#(u)=5-2=3$.}\label{fig:best}
\end{figure}

\subparagraph{Query.} We want to count patterns 
longer than $k=|\BRLCP(T[\ell \dd r],T[\ell+1 \dd r])|$.
Let $u=\Pref_{\D}(\ell,\ell+k-1)$ and $v=\Pref_{\D}(\ell,r)$.
The desired number of patterns is equal to  $\#(v) -\#(u)$. See~\cref{fig:best} for a visualization.
\end{proof}

\subsection{Auxiliary Operation}

Two fragments $U=T[i_1\dd j_1]$ and $V=T[i_2\dd j_2]$ are called \emph{consecutive} if  $i_2 = j_1+1$. We denote the overlap $T[\max\{i_1,i_2\} \dd \min\{j_1,j_2\}]$ of $U$ and $V$ by $U \cap V$.

\defDSproblem{3-Fragments-Counting}{A text $T$ of length $n$ and a dictionary $\D$ consisting of $d$ patterns}{Given three consecutive fragments $F_1,F_2,F_3$ in $T$ such that $|F_1|=|F_3|$ and $|F_2| \ge 8 \cdot |F_1|$, count distinct patterns $P$ from $\D$ that have an occurrence starting in $F_1$ and ending in $F_3$ and do not occur in either $F_1F_2$ or $F_2F_3$}

Let us fix $|F_1| = |F_3| = x$ and $|F_2| = y \ge 8x$. Additionally, let us call an occurrence of $P \in \D$ that starts in fragment $F_a$ and ends in fragment $F_b$ an $(F_a, F_b)$-occurrence. We will call an $(F_1, F_3)$-occurrence an \emph{essential occurrence}.

We say that a string $S$ is \emph{highly periodic} if $\per(S) \le \frac14|S|$. We first consider the case that all patterns in $\D$ are not highly periodic.

\begin{lemma}\label{lem:3frag_aper}
  If each $P \in \D$ is not highly periodic, then 
\begin{multline*}
\textsc{3-Fragments-Counting}  (F_1,F_2,F_3) = \\ \C(F_1 F_2 F_3) - \C(F_1F_2) - \C(F_2F_3) + \C(F_2).
\end{multline*}
\end{lemma}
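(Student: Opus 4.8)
The plan is to prove the identity by an inclusion–exclusion argument on the set of \emph{essential occurrences} weighted by which sub-fragments they are forbidden from appearing in, using crucially the fact that no pattern of $\D$ is highly periodic. Recall that $\C(F)$ counts all occurrences (with multiplicity) of all patterns of $\D$ that lie entirely inside the fragment $F$. The key observation is this: because $|F_2| \ge 8x$ and each pattern $P$ with an essential occurrence satisfies $x \le |P| \le 2x + y$, a pattern that occurs in $F_1F_2F_3$ can have \emph{at most two} occurrences there unless it is highly periodic; more precisely, I first want to establish that under the non-highly-periodic assumption, any pattern contributing to $\textsc{3-Fragments-Counting}$ — i.e.\ one with an essential occurrence and no occurrence in $F_1F_2$ nor in $F_2F_3$ — has exactly one occurrence in $F_1F_2F_3$, namely its essential occurrence. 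Indeed, any occurrence inside $F_1F_2F_3$ either starts in $F_1$ (then it must end in $F_3$, else it lies in $F_1F_2$, contradiction), or starts in $F_2$ or $F_3$ (then it lies in $F_2F_3$, contradiction); so all its occurrences are essential, and two distinct essential occurrences would have to overlap in a length-$\ge |P| - 2x$ region inside a pattern of length $\ge y \ge 8x$, forcing a period $\le 2x \le \tfrac14|P|$, i.e.\ high periodicity — contradiction.

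Given that, I would split the sum $\C(F_1F_2F_3)$ according to where each occurrence sits. Let me classify every occurrence counted by $\C(F_1F_2F_3)$: it lies in $F_1F_2$, or in $F_2F_3$, or it is essential (starts in $F_1$, ends in $F_3$) — and these three cases are almost exhaustive but the first two overlap exactly on occurrences lying in $F_2$. So by inclusion–exclusion on occurrences,
\[
\C(F_1F_2F_3) \;=\; \#\{\text{essential occ.}\} \;+\; \C(F_1F_2) \;+\; \C(F_2F_3) \;-\; \C(F_2),
\]
which rearranges to $\#\{\text{essential occ.}\} = \C(F_1F_2F_3) - \C(F_1F_2) - \C(F_2F_3) + \C(F_2)$, matching the right-hand side of the claimed identity. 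It therefore remains to show $\#\{\text{essential occurrences}\} = \textsc{3-Fragments-Counting}(F_1,F_2,F_3)$, i.e.\ that counting essential occurrences \emph{with multiplicity} equals counting, \emph{without multiplicity}, the distinct patterns that have an essential occurrence but occur in neither $F_1F_2$ nor $F_2F_3$. This is where the previous paragraph does the work: a pattern with an essential occurrence that does \emph{not} satisfy the forbidden condition has an occurrence in $F_1F_2$ or $F_2F_3$, so it is not counted by $\textsc{3-Fragments-Counting}$ — but I must check it is also not counted among the essential occurrences on the left, or that its essential occurrences cancel out. Wait — that is the subtle point, so let me handle it carefully.

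The cleanest route is: partition the patterns of $\D$ with $\ge 1$ essential occurrence into two classes. Class A: patterns occurring in neither $F_1F_2$ nor $F_2F_3$; by the first paragraph these contribute exactly one occurrence each to $\C(F_1F_2F_3)$, exactly one essential occurrence, and nothing to $\C(F_1F_2)$, $\C(F_2F_3)$, $\C(F_2)$; so each contributes $1$ to the right-hand side of the lemma and $1$ to $\textsc{3-Fragments-Counting}$. Class B: patterns occurring in $F_1F_2$ or $F_2F_3$ (or both). I need the contribution of each such pattern to $\C(F_1F_2F_3) - \C(F_1F_2) - \C(F_2F_3) + \C(F_2)$ to be $0$. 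This is exactly the statement that, for each fixed pattern $P$, its occurrences $O$ inside $F_1F_2F_3$ satisfy $\#\{O : O\subseteq F_1F_2F_3\} - \#\{O \subseteq F_1F_2\} - \#\{O\subseteq F_2F_3\} + \#\{O\subseteq F_2\} = \#\{O \text{ essential}\}$ — which is the occurrence-level inclusion–exclusion I wrote above, valid for \emph{every} pattern, periodic or not, since it is just counting subsets of the three overlapping windows. So actually the occurrence-level identity already gives $\C(F_1F_2F_3) - \C(F_1F_2) - \C(F_2F_3) + \C(F_2) = \#\{\text{essential occurrences, with multiplicity}\}$ unconditionally, and the non-highly-periodic hypothesis is used \emph{only} to upgrade ``essential occurrences with multiplicity'' to ``distinct patterns with an essential occurrence and occurring in neither $F_1F_2$ nor $F_2F_3$'': a Class-B pattern $P$ has all its essential occurrences contributing to the left count, yet contributes $0$ to $\textsc{3-Fragments-Counting}$ — so I must show a Class-B pattern has \emph{no} essential occurrence at all. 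That follows from the first paragraph's dichotomy: if $P$ had an essential occurrence, then \emph{every} occurrence of $P$ in $F_1F_2F_3$ is essential (shown above), so $P$ occurs in neither $F_1F_2$ nor $F_2F_3$, i.e.\ $P\in$ Class A, contradiction. Hence Class B patterns have no essential occurrences, the with-multiplicity count over essential occurrences ranges only over Class A, and there each pattern appears exactly once. Combining, both sides equal $|\text{Class A}|$.

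The main obstacle is the periodicity argument bounding the number of occurrences of a non-highly-periodic pattern inside $F_1F_2F_3$ — specifically verifying that two distinct essential occurrences of the same $P$ force $\per(P) \le \tfrac14|P|$. One should argue: two distinct occurrences of $P$ at positions $p < p'$ both starting in $F_1$ and ending in $F_3$ satisfy $p' - p \le |F_1| - 1 < x$ and overlap in a fragment of length $|P| - (p'-p) > |P| - x \ge y - x \ge 7x$ (using $|P| \ge y$ since $P$ stretches from $F_1$ past all of $F_2$), hence $p'-p$ is a period of that overlap and, since the overlap is long enough, of $P$ itself by the periodicity lemma; then $\per(P) \le p'-p < x \le \tfrac18|P| \le \tfrac14|P|$. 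I would double-check the exact constants — the hypothesis $|F_2| \ge 8|F_1|$ is presumably calibrated precisely so this goes through with room to spare — but the structure above is robust to the precise bookkeeping.
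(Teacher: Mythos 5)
Your overall plan matches the paper's: establish the occurrence-level inclusion--exclusion identity $\C(F_1F_2F_3)-\C(F_1F_2)-\C(F_2F_3)+\C(F_2)=\#\{\text{essential occurrences}\}$, and then use non-high-periodicity to show that this count of essential occurrences (with multiplicity) equals $\textsc{3-Fragments-Counting}$. The inclusion--exclusion bookkeeping is sound. However, there is a circularity in your handling of Class-B patterns. You argue that a Class-B pattern $P$ (one occurring in $F_1F_2$ or $F_2F_3$) has no essential occurrence by invoking ``if $P$ had an essential occurrence, then every occurrence of $P$ in $F_1F_2F_3$ is essential (shown above).'' But what you actually showed above is only that \emph{if $P$ occurs in neither $F_1F_2$ nor $F_2F_3$} then every occurrence of $P$ in $F_1F_2F_3$ is essential --- the ``else it lies in $F_1F_2$, contradiction'' step explicitly relies on $P$ being Class~A. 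So you are assuming $P\in$ Class~A in the very step meant to rule out $P\in$ Class~B. Similarly, your periodicity argument only bounds the shift between \emph{two essential} occurrences (both starting in $F_1$), so it does not directly apply to an essential occurrence paired with, say, an occurrence lying inside $F_1F_2$.

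The fix is to prove the stronger claim the paper uses, with no classification needed: if $P$ has any essential occurrence then $|P|\ge y$, so $|F_1F_2F_3|=2x+y\le\tfrac14|P|+|P|=\tfrac54|P|$, and hence \emph{any} two occurrences of $P$ inside $F_1F_2F_3$ overlap by at least $2|P|-(2x+y)\ge\tfrac34|P|$, forcing $\per(P)\le\tfrac14|P|$, a contradiction. Therefore $P$ occurs exactly once in $F_1F_2F_3$; that occurrence is the essential one, so automatically $P$ occurs in neither $F_1F_2$ nor $F_2F_3$. This removes the Class-A/Class-B case split entirely: every pattern with an essential occurrence is counted by $\textsc{3-Fragments-Counting}$ and contributes exactly one essential occurrence, and every other pattern contributes zero essential occurrences, giving the equality your inclusion--exclusion step already needs. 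Aside from this gap your argument is correct, and the constants are indeed calibrated so that the estimate above goes through.
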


\begin{proof}
Let us start with the following claim.
\begin{claim}\label{uniqueocc}
Any $P \in \D$ that has an essential occurrence occurs exactly once in $F_1 F_2 F_3$.
\end{claim}
\begin{proof}
We have $|F_1 F_2 F_3| = x + y + x = 2x + y$.
String $P$ has an essential occurrence, so $|P| \geq y$.
Therefore, if there are two
occurrences of $P$ in $F_1 F_2 F_3$, then they overlap in
\[2|P| - (2x + y) \ge 2|P| - (\tfrac14|P| + |P|) = \tfrac34|P|\]
positions. This implies that $P$ is highly periodic, which is a contradiction.
\end{proof}

Claim~\ref{uniqueocc} shows that 
$\textsc{3-Fragments-Counting}(F_1,F_2,F_3)$ is equal to the
number of essential occurrences. Let us prove that the stated formula does not count any $(F_a, F_b)$-occurrences other than $(F_1, F_3)$-occurrences.

\begin{itemize}
\item Each $(F_1, F_2)$-occurrence is registered when we add $\C(F_1 F_2 F_3)$ and unregistered when we subtract $\C(F_1F_2)$. Similarly for $(F_2, F_3)$-occurrences.

\item Each $(F_2, F_2)$-occurrence is registered when we add $\C(F_1F_2F_3)$, $\C(F_2)$ and unregistered when we subtract $\C(F_1F_2)$, $\C(F_2F_3)$. 

\item Each $(F_1, F_1)$-occurrence is registered when we add $\C(F_1 F_2 F_3)$ and unregistered when we subtract $\C(F_1F_2)$. Similarly for $(F_3, F_3)$-occurrences.\qedhere
\end{itemize}
\end{proof}

We now proceed with answering \textsc{3-Fragments-Counting} queries for the dictionary of highly periodic patterns.

\begin{lemma}\label{lem:same_run}
 If $F_2$ is aperiodic, then there are no essential occurrences of highly periodic patterns.
 Otherwise, all essential occurrences of highly periodic patterns are generated by the same run, that is, $\run(F_2)$.
\end{lemma}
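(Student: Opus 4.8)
The plan is to fix a single essential occurrence $o = T[a \dd b]$ of a highly periodic pattern $P$, write $p := \per(P) = \per(o)$, and first bound $p$ against $|F_2|$. Since $o$ starts in $F_1$ and ends in $F_3$, it contains $F_2$ as a sub-fragment, and since $o$ lies inside $F_1 F_2 F_3$ we have $|P| \le 2x + y$; with $x \le \tfrac18 y$ this gives $|P| \le \tfrac54 y$, so, as $P$ is highly periodic, $p \le \tfrac14 |P| \le \tfrac{5}{16} y$, and in particular $p < \tfrac12|F_2|$ and $2p < |F_2|$. The first assertion of the lemma is then immediate by contraposition: if some highly periodic $P$ has an essential occurrence, then $F_2$, being a sub-fragment of the $p$-periodic string $o$, has period $p < \tfrac12|F_2|$, hence $\per(F_2) \le p < \tfrac12|F_2|$ and $F_2$ is not aperiodic.

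For the second assertion, I would assume $F_2$ is periodic and set $q := \per(F_2)$; since $p$ is a period of $F_2$ we have $q \le p$. The whole statement reduces to the claim that $q = p$. Indeed, granting $q = p$, the run $\run(o)$ has period $p = q$ and contains $o \supseteq F_2$, so it is a run of period $\per(F_2)$ containing $F_2$; by \cref{lem:runS} there is exactly one such run, namely $\run(F_2)$, so $\run(o) = \run(F_2)$. As this holds for every essential occurrence of every highly periodic pattern, all of them are generated by the single run $\run(F_2)$, as claimed.

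To establish $q = p$, suppose for contradiction that $q < p$. Then $p$ and $q$ are both periods of $F_2$ and $p + q \le 2p < |F_2|$, so by the weak periodicity lemma $g := \gcd(p,q)$ is a period of $F_2$, with $g \mid p$ and $g < p$. The remaining step is to lift this short period $g$ to the whole of $o$: as $|F_2| > p$, the fragment $F_2$ contains a window of length $p$, which has period $g \mid p$ and is therefore a $(p/g)$-th power of a string of length $g$; because $o$ has period $p$, this window is a cyclic rotation of the length-$p$ block $W := T[a \dd a + p - 1]$ of $o$, and a cyclic rotation of such a power (again using $g \mid p$) is once more a power of a length-$g$ string, so $W$ has period $g$; since $o$ is a concatenation of copies of $W$ followed by a prefix of $W$, it follows that $o$ has period $g < p$, contradicting $p = \per(P)$. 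Hence $q = p$.

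I expect the period-lifting step to be the main obstacle: one has to argue carefully that the short period discovered inside $F_2$ really propagates to the entire occurrence, and the divisibility $g \mid p$ is exactly what makes the cyclic-rotation argument valid. An alternative, possibly cleaner, packaging is to work directly with two essential occurrences $o, o'$ of highly periodic patterns, of periods $p$ and $p'$: both contain $F_2$, so $F_2$ has periods $p$ and $p'$ with $p + p' < |F_2|$, the periodicity lemma produces the common-divisor period $g = \gcd(p, p')$ of $F_2$, and the same lifting forces $p = p' = g = \per(F_2)$; then all the runs $\run(o)$ coincide and equal $\run(F_2)$.
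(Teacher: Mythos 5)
Your proof is correct and follows essentially the same route as the paper's: bound $\per(P)$ by a constant fraction of $|F_2|$ to settle the aperiodic case by contraposition, then apply the (weak) periodicity lemma inside $F_2$ to compare $\per(P)$ with $\per(F_2)$, and finally invoke uniqueness of the run with a given period containing a given periodic fragment to identify $\run(o)$ with $\run(F_2)$. Where you go further is exactly the step you flagged as the obstacle: the paper passes from $\per(F_2) \mid \per(P)$ to $\per(P) = \per(F_2)$ without comment, whereas you explicitly lift the shorter period from the window inside $F_2$ to the whole occurrence via the cyclic-rotation / power argument (crucially using $g \mid p$), obtaining a contradiction with the minimality of $\per(P)$. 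That lifting is genuinely needed and your write-up supplies it; the paper implicitly appeals to the same standard fact.
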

\begin{proof}
The first claim follows from the fact that such an occurrence of a pattern $P \in \D$ has an overlap of length at least $2\per(P)$ with $F_2$ and hence $\per(P)\leq \frac12|F_2|$ is a period of $F_2$.

As for the second claim, it suffices to show that, for any pattern $P \in \D$ that has an essential occurrence, we have $\per(P)=\per(F_2)$.
The inequalities $|F_2|\ge 2\per(F_2)$ and $|F_2| \ge 2\per(P)$ imply                                                $|F_2| \ge \per(F_2)+\per(P)$.
Hence, by the periodicity lemma, $q=\gcd(\per(P),\per(F_2))$ is a period of $F_2$.
As $q \leq \per(F_2)$, we conclude that $q=\per(F_2)$.
Thus, $\per(F_2)$ divides $\per(P)$, and therefore $\per(P)=\per(F_2)$. This concludes the proof.
\end{proof}

For a periodic factor $U$ of $T$, let \textsc{Periodic}$(U)$ denote the set of distinct patterns from $\D$ that occur in $U$ and have the same shortest period. 
Let us make the following observation.

\begin{observation}\label{obs:3fragments_per}
  If all $P \in \D$ are highly periodic, $F_2$ is periodic, and $R=\run(F_2)$, then 
\begin{multline*}
  \textsc{3-Fragments-Counting}(F_1,F_2,F_3) = \\ |\textsc{Periodic}(F_1 F_2 F_3 \cap R)| - |\textsc{Periodic}(F_1 F_2\cap R) \cup \textsc{Periodic}(F_2 F_3 \cap R)|.
\end{multline*}
\end{observation}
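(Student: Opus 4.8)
The plan is to prove the identity by showing that both sides count the same set of patterns. Write $R=[r_1\dd r_2]$, $F_1=[s_1\dd e_1]$, $F_2=[s_2\dd e_2]$, $F_3=[s_3\dd e_3]$; since the fragments are consecutive, $s_2=e_1+1$ and $s_3=e_2+1$, and since $R\supseteq F_2$, $r_1\le s_2$ and $r_2\ge e_2$. First I would record the structural facts about periods: each of $F_1F_2F_3\cap R$, $F_1F_2\cap R$ and $F_2F_3\cap R$ contains $F_2$ and is contained in $R$, hence has $\per(R)$ as a period and, containing $F_2$, has smallest period at least $\per(F_2)$; since $\per(R)=\per(F_2)=:p\le\tfrac12|F_2|$, all three have smallest period exactly $p$. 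Recalling that $\textsc{Periodic}(U)$ collects the $P\in\D$ that occur in $U$ with $\per(P)=\per(U)$, every pattern in any of these three $\textsc{Periodic}$-sets has period exactly $p$, and by sub-fragment monotonicity $\textsc{Periodic}(F_1F_2\cap R)$ and $\textsc{Periodic}(F_2F_3\cap R)$ are both subsets of $\textsc{Periodic}(F_1F_2F_3\cap R)$. Hence the right-hand side of the claimed identity equals $|X|$, where $X$ is the set of $P\in\D$ with $\per(P)=p$ that occur in $F_1F_2F_3\cap R$ but in neither $F_1F_2\cap R$ nor $F_2F_3\cap R$; and by definition the left-hand side equals $|Y|$, where $Y$ is the set of $P\in\D$ having an essential occurrence and occurring in neither $F_1F_2$ nor $F_2F_3$. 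It remains to prove $X=Y$.

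For the inclusion $Y\subseteq X$: if $P\in Y$, then since every pattern is highly periodic and $F_2$ is periodic, \cref{lem:same_run} gives $\per(P)=p$ and places an essential occurrence $G$ of $P$ inside $R$ (equivalently, $G$ contains $F_2$, so $|G\cap R|\ge|F_2|\ge 2p$, which forces $\run(G)=R$ and hence $G\subseteq R$). Since $G$ also lies in $F_1F_2F_3$, we get $G\subseteq F_1F_2F_3\cap R$, so $P\in\textsc{Periodic}(F_1F_2F_3\cap R)$; and since $F_1F_2\cap R\subseteq F_1F_2$ and $F_2F_3\cap R\subseteq F_2F_3$, the hypothesis that $P$ occurs in neither $F_1F_2$ nor $F_2F_3$ shows $P\notin\textsc{Periodic}(F_1F_2\cap R)\cup\textsc{Periodic}(F_2F_3\cap R)$. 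Thus $P\in X$.

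For the inclusion $X\subseteq Y$: let $P\in X$ and fix an occurrence $G=[q\dd q+|P|-1]$ of $P$ with $G\subseteq F_1F_2F_3\cap R$. The elementary combinatorial fact I would use is that $(F_1F_2\cap R)\cup(F_2F_3\cap R)=F_1F_2F_3\cap R$ with the first part ending at $e_2$ and the second part starting at $s_2$; hence a sub-fragment of this union contained in neither part must start at position at most $e_1$ and end at position at least $s_3$, i.e.\ start in $F_1$ and end in $F_3$. Since $\per(P)=p$ and $P\notin\textsc{Periodic}(F_1F_2\cap R)\cup\textsc{Periodic}(F_2F_3\cap R)$, the pattern $P$ occurs in neither $F_1F_2\cap R$ nor $F_2F_3\cap R$, so $G$ is contained in neither part; therefore $G$ is an essential occurrence, and in particular $|P|\ge|F_2|+1$. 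Finally I would show $P$ does not occur in $F_1F_2$ (the case of $F_2F_3$ being symmetric): an occurrence $G'$ of $P$ inside $F_1F_2=[s_1\dd e_2]$ has right endpoint at most $e_2\le r_2$, so $G'\cap R$ has length at least $|P|-|F_1|\ge|F_2|+1-|F_1|\ge\tfrac12|F_2|\ge p$ (using $|F_2|\ge 8|F_1|$); since $G'\subseteq\run(G')$, this forces $\run(G')=R$, hence $G'\subseteq R$, so $P$ occurs in $F_1F_2\cap R$, contradicting $P\notin\textsc{Periodic}(F_1F_2\cap R)$. Therefore $P\in Y$, and $X=Y$.

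The step I expect to be the main obstacle is this last one: an occurrence of $P$ inside $F_1F_2$ need not a priori lie in the run $R$, and $p$ can be as large as $\tfrac12|F_2|$, so the lower bound on $|G'\cap R|$ has to be squeezed out carefully, and it succeeds only because having an essential occurrence already forces $|P|>|F_2|$ while $|F_2|\ge 8|F_1|$ keeps the part of $G'$ that could fall outside $R$ short. I would also be careful to nail down the two routine facts used above: that a period of a string is a period of each of its factors of length at least that period (used to pin all three smallest periods to $p$), and that two distinct runs sharing a period $p$ overlap in fewer than $p$ positions (used to collapse $\run(G)$ and $\run(G')$ to $R$).
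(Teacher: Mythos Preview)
Your proof is correct. The paper gives no argument for this observation at all---it is stated and then immediately used---so you have supplied a full justification where the paper offers none.

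The one substantive step is exactly the one you flag: in the direction $X\subseteq Y$, an occurrence $G'$ of $P$ inside $F_1F_2$ is not a priori contained in $R$, and you must rule this out to derive the contradiction with $P\notin\textsc{Periodic}(F_1F_2\cap R)$. Your chain $|G'\cap R|\ge |P|-|F_1|\ge |F_2|+1-|F_1|\ge \tfrac12|F_2|\ge p$ (using $|F_2|\ge 8|F_1|$ and $\per(F_2)\le\tfrac12|F_2|$) is clean, and combined with the fact that two distinct runs of period $p$ overlap in fewer than $p$ positions it pins $G'$ inside $R$ as needed. The remaining pieces---that $F_1F_2F_3\cap R$, $F_1F_2\cap R$, $F_2F_3\cap R$ all have smallest period exactly $p$, and that the latter two cover the former with endpoints $e_2$ and $s_2$ respectively---are routine and you handle them correctly.
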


Next we now show how to efficiently evaluate the right-hand side of the formula in the observation above, using~\cref{thm:range_count} for efficiently answering range counting queries in 2D.

We group all highly periodic patterns by Lyndon root and rank; for a Lyndon root $L$ and a rank $r$, we denote by $\D^p_{L,r}$ the corresponding set of patterns.
Then, we build the data structure of~\cref{thm:range_count} for the set of points obtained by adding the point $(a,b)$ for each $(L,r,a,b) \in \D^p_{L,r}$.
We refer to the 2D grid underlying this data structure as $\G_{L,r}$.
Note that the total number of points in the data structures over all Lyndon roots and ranks is $\cO(d)$.

Each occurrence of a pattern $(L,r,a,b)$ lies within some run in $\mathcal{R}$ with Lyndon root $L$.
Let us state a simple fact.

\begin{fact}\label{fact:4cases}
    A periodic string $(L,r,a,b)$ occurs in a periodic string $(L,r',a',b')$ if and only if at least one of the following conditions is met:
    \begin{enumerate}[(1)]
     \item $r = r'$, $a \leq a'$, and $b \leq b'$;
     \item $r = r'-1$ and $a \leq a'$;
     \item  $r = r'-1$ and $b \leq b'$;
     \item $r \leq r'-2$.
    \end{enumerate}
\end{fact}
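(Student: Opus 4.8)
The plan is to reduce everything to the ``infinite periodic string'' picture. Writing $L^\infty$ for the one-way infinite string $LLL\cdots$, a periodic string $W$ with Lyndon root $L$ is determined by an interval: the decomposition $W = L[|L|-a+1\dd|L|]\,L^r\,L[1\dd b]$ means precisely that $W$ is a factor of $L^\infty$ occupying a window whose left endpoint falls $a$ letters before the end of some copy of $L$ and whose right endpoint falls $b$ letters into a later copy. So I would first fix a concrete coordinate system: align the copy of $L$ in which $W$ ``starts'' with positions $1\dd|L|$, so that $W = L^\infty[\,|L|-a+1 \dd r|L|+b\,]$, with the understanding that $(r,a,b)$ with $a\in\{1,\dots,|L|\}$ and $b\in\{0,\dots,|L|-1\}$ is the canonical form (and symmetrically for $(r',a',b')$). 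The point is that \emph{any} occurrence of $L$ inside a periodic string with Lyndon root $L$ sits at a position congruent to the ``phase'' of that string, because $L$ is primitive (it is a Lyndon word) and hence, by the periodicity lemma, cannot occur at two positions within distance $<|L|$ of each other inside a string of period $|L|$. This primitivity/synchronisation fact is the one genuine ingredient I need, and I would state it as the first step.

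Next I would translate ``$(L,r,a,b)$ occurs in $(L,r',a',b')$'' into this coordinate picture. An occurrence of $W=(L,r,a,b)$ inside $W'=(L,r',a',b')$ must, by the synchronisation fact, be phase-aligned with $W'$: the copies of $L$ in the occurrence line up with the copies of $L$ in $W'$. So, placing $W'$ as $L^\infty[\,|L|-a'+1 \dd r'|L|+b'\,]$, an occurrence of $W$ corresponds to choosing an integer shift $t\ge 0$ of whole copies of $L$ and placing $W$ as $L^\infty[\,|L|-a+1+t|L| \dd r|L|+b+t|L|\,]$, subject to the containment constraints $|L|-a+1+t|L| \ge |L|-a'+1$ (left endpoint not before $W'$'s left endpoint) and $r|L|+b+t|L| \le r'|L|+b'$ (right endpoint not past $W'$'s right endpoint). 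The first inequality is $t|L| \ge a - a'$, i.e.\ $t \ge \lceil (a-a')/|L|\rceil$; since $-|L| < a-a' < |L|$, this says $t\ge 0$ when $a\le a'$ and $t\ge 1$ when $a>a'$. The second is $t|L| \le (r'-r)|L| + (b'-b)$, i.e.\ $t \le r'-r + \lfloor (b'-b)/|L|\rfloor$; since $-|L| < b'-b < |L|$, this says $t\le r'-r$ when $b\le b'$ and $t\le r'-r-1$ when $b>b'$. So an occurrence exists iff the interval of admissible $t$ is nonempty, and a short case split on the signs of $a-a'$ and $b-b'$ gives exactly: $r\le r'$ with ($a\le a'$ and $b\le b'$) in the $t$-could-be-$0$ case; $r\le r'-1$ with $a\le a'$ in the ``$t\ge 1$ forced by $a$'' case; $r\le r'-1$ with $b\le b'$ in the ``$t\le r'-r-1$ forced by $b$'' case; and $r\le r'-2$ in the remaining case. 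Merging these four and noting that the first three already subsume all instances with $r'-r\le 1$, while $r\le r'-2$ covers the rest unconditionally, yields precisely conditions (1)--(4). For the converse direction, each of (1)--(4) exhibits a valid $t$ (namely $t=0$, $t=r'-r$, $t=0$ or $t=r'-r-1$, and $t=\lceil(a-a')/|L|\rceil$ respectively) whose placement of $W$ lies inside $W'$, so the occurrence is realised.

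The main obstacle — or rather the only subtlety — is the synchronisation claim: that an occurrence of a Lyndon-root-$L$ string inside another must be in phase. Everything else is bookkeeping with floors of quantities in $(-1,1)$. I would handle synchronisation by the standard argument: if $W$ occurred inside $W'$ out of phase by $s$ letters with $0<s<|L|$, then within the overlap (which has length $\ge |W|\ge |L|$, in fact $\ge 2|L|$ since both strings are periodic) we would get that $L^\infty$ has period $s$ as well as period $|L|$; as $s+|L|\le 2|L|$ does not immediately fit the overlap bound I would instead argue directly that $W'$ contains two occurrences of the primitive word $L$ at distance $s<|L|$, forcing $L$ to be non-primitive (a nontrivial rotation of $L$ equals $L$), contradicting that $L$ is a Lyndon word. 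One care point: I must make sure the overlap is long enough to see two full copies of $L$ — this is where the convention that $r\ge 1$ (so $|W'|\ge |L|$) and, for the two-copies argument, the periodicity of both $W$ and $W'$ (giving overlaps $\ge 2|L|$ when $r,r'$ are large, with the small-$r$ cases checked by hand) comes in; I would note that the statement is only applied to genuine runs where $r$ is at least $2$, or alternatively phrase the synchronisation lemma so that even $r=1$ is fine because an occurrence of a length-$\ge|L|$ factor of $L^\infty$ already pins the phase modulo the primitivity of $L$.
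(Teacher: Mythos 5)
The paper states Fact~\ref{fact:4cases} without proof, labelling it ``a simple fact,'' so there is no argument in the text to compare against. Your synchronization-plus-shift approach is the natural proof and is correct in substance: primitivity of the Lyndon root $L$ pins the phase of any occurrence, the left/right containment inequalities then reduce to an integer interval for the shift $t$, and the four sign cases of $a-a'$ and $b-b'$ give exactly (1)--(4). Two slips are worth flagging. The explicit witnesses offered for the converse of (2) and (3) are wrong: $t=r'-r$ fails (2) when $b>b'$ (the right constraint then forces $t\le r'-r-1$), and $t=0$ fails (3) when $a>a'$ (the left constraint then forces $t\ge 1$); the correct witnesses are $t=0$ for (2) and $t=r'-r$ for (3). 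This is harmless because your forward analysis already establishes an ``iff'' (existence of an admissible $t$ is both necessary and sufficient), so those witnesses are redundant, but as written they do not all lie in the admissible interval. Relatedly, the prose tags ``forced by $a$'' and ``forced by $b$'' are swapped relative to the conditions they are attached to, though the derived conditions are correct. Finally, the synchronization worry at the end resolves more cleanly than you suggest: both strings in the Fact are assumed \emph{periodic}, so each has length at least $2|L|$, and that is already enough for the periodicity lemma to rule out an out-of-phase occurrence of $L$; there is no need to appeal to the application context (highly periodic patterns, $r\ge 2$) to get the overlap.
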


\begin{lemma}\label{lem:count_per}
One can compute $|\textsc{Periodic}(U)|$ for any periodic fragment $U$ in time $\cO(\log n / \log\log n)$ using a data structure of size $\cO(n+d)$ that can be constructed in time $\cO(n+d \sqrt{\log n})$.
\end{lemma}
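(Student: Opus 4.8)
The plan is to reduce the computation of $|\textsc{Periodic}(U)|$ for a periodic fragment $U$ to a constant number of range counting queries on the grids $\G_{L,r}$. First I would use Lemma~\ref{lem:runS} to compute, in constant time, the run $\run(U)$ and its Lyndon root $L$; since every pattern in $\textsc{Periodic}(U)$ has the same shortest period as $U$ (by definition), and every occurrence of such a pattern that lies in $U$ must in fact lie in $\run(U)$ with the same Lyndon root $L$, only patterns from the sets $\D^p_{L,r}$ (ranging over $r$) can contribute. Write $U$ in the canonical form $(L,r',a',b')$, again computable in $\cO(1)$ time after $\cO(n)$ preprocessing via the minimal-rotation machinery of~\cite{DBLP:conf/cpm/Kociumaka16}. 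Now a pattern $(L,r,a,b) \in \D^p_{L,r}$ occurs in $U$ (equivalently, is counted in $|\textsc{Periodic}(U)|$) if and only if one of the four conditions of Fact~\ref{fact:4cases} holds.

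The key step is to observe that these four conditions partition naturally into orthogonal ranges on the grids $\G_{L,r}$. Condition~(4), $r \le r'-2$, says that \emph{every} point in $\G_{L,r}$ for $r \le r'-2$ is counted; since we maintain (in a prefix-sum array indexed by rank, for each fixed $L$) the total number of patterns of each rank, this contributes a single precomputed sum. Condition~(1) is a query to $\G_{L,r'}$ over the quarterplane $(-\infty,a'] \times (-\infty,b']$, answered by Theorem~\ref{thm:range_count}. Conditions~(2) and~(3) both require $r = r'-1$; condition~(2) asks for points of $\G_{L,r'-1}$ with first coordinate $\le a'$ (a half-plane query), condition~(3) for those with second coordinate $\le b'$, and by inclusion–exclusion the number of points satisfying $(2)$ \emph{or} $(3)$ is $|\{a \le a'\}| + |\{b \le b'\}| - |\{a\le a' \text{ and } b \le b'\}|$, i.e.\ three more range counting queries on $\G_{L,r'-1}$. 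Summing the contributions of (1), (2)$\cup$(3), and (4) gives $|\textsc{Periodic}(U)|$ using $\cO(1)$ range counting queries plus $\cO(1)$ lookups, hence $\cO(\log n / \log\log n)$ time by Theorem~\ref{thm:range_count}.

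For the resource bounds: the grids $\G_{L,r}$ together hold $\cO(d)$ points, so by Theorem~\ref{thm:range_count} their data structures occupy $\cO(d)$ space and are built in $\cO(d\sqrt{\log n})$ time; the run decomposition and Lyndon-root/minimal-rotation preprocessing take $\cO(n)$ time and space; and the per-$L$ prefix sums over ranks add another $\cO(d)$. This matches the claimed $\cO(n+d)$ size and $\cO(n + d\sqrt{\log n})$ construction time.

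The main obstacle I anticipate is organizing the half-plane (one-sided) queries of conditions (2) and (3) so that they are genuinely answerable by the same grid structure: Theorem~\ref{thm:range_count} is stated for rectangles, so one should either set the missing coordinate bound to $+\infty$ (fine for integer grids after coordinate compression) or keep separate 1D prefix-sum arrays sorted by $a$ and by $b$ within each $\G_{L,r}$. A second, more bookkeeping-level subtlety is making sure the canonical representation of $U$ and of the patterns use a \emph{consistent} choice of Lyndon root and of the rank/offset decomposition — in particular that the $(a',b')$ extracted for $U$ and the $(a,b)$ stored for patterns are measured against the same reference rotation of $L$ — so that Fact~\ref{fact:4cases} applies verbatim. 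Both issues are routine once the conventions of Lemma~\ref{lem:runS} and the grouping preceding Fact~\ref{fact:4cases} are pinned down.
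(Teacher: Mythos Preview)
Your proposal is correct and follows essentially the same route as the paper's proof: reduce via Fact~\ref{fact:4cases} to $\cO(1)$ range counting queries on the grids $\G_{L,r'}$ and $\G_{L,r'-1}$ (the paper writes the half-plane queries for conditions~(2) and~(3) as rectangles with the free coordinate capped at $|L|$, which amounts to the same thing), plus a lookup of cumulative pattern counts over ranks $\le r'-2$. The one place where your accounting is a bit loose is the space for the per-$L$ prefix-sum arrays: you assert $\cO(d)$, but a dense array indexed by rank has size equal to the \emph{maximum} rank occurring for that $L$, which is not bounded by the number of patterns with that root. The paper handles this by bounding the total size of these arrays by $\cO(n)$, using that the maximum rank for Lyndon root $L$ is at most the exponent of some run with root $L$, and the sum of exponents of all runs in $T$ is $\cO(n)$. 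Either argument gives the stated $\cO(n+d)$ space.
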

\begin{proof}
For $U= (L,r,a,b)$, we count points contained in at least one of the rectangles
\begin{enumerate}[(1)]
    \item $(-\infty,a]\times (-\infty,b]$ in $\G_{L,r}$,
    \item $(-\infty,a]\times (-\infty,|L|]$ in $\G_{L,r-1}$,
    \item $(-\infty,|L|]\times (-\infty,b]$ in $\G_{L,r-1}$,
\end{enumerate}
and we add to the count the number of patterns of the form $(L,r',a,b)$ with $r'<r-1$. 
For the latter term, it suffices to store an array $X_L[1 \dd t]$ such that $X_L[r] = \sum_{i=1}^r |\D^p_{L,i}|$, where $t$ is the maximum rank of a pattern with Lyndon root $L$. The total size of these arrays is $\Oh(n)$ by the linearity of the sum of exponents of runs in a string~\cite{DBLP:journals/siamcomp/BannaiIINTT17,KK:99}.
\end{proof}

\begin{remark}\label{rem:tworanges}
In particular, in the proof of the above lemma, we count points that are contained within at least one out of a constant number of rectangles.
Therefore, not only we can easily compute $|\textsc{Periodic}(U)|$, but similarly we are able to compute $|\textsc{Periodic}(U_1) \cup \textsc{Periodic}(U_2)|$ for some periodic factors $U_1, U_2$ of $T$.
\end{remark}

We are now ready to prove the main result of this subsection.

\begin{lemma}\label{lem:3count}
The \textsc{3-Fragments-Counting}$(F_1,F_2,F_3)$ queries can be answered in $\cO(\log^2 n / \log\log n)$ time with a data structure of size $\cO(n+d \log n)$ that can be constructed in $\cO(n\log n / \log \log n + d \log^{3/2} n)$ time.
\end{lemma}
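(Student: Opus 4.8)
The plan is to split $\D$ according to whether a pattern is highly periodic, to answer the query on each part with the machinery developed above, and to add the two answers; this is valid because being highly periodic is a property of the pattern and not of an occurrence, so the two sub-dictionaries partition the set of patterns counted by \textsc{3-Fragments-Counting}. During preprocessing I would decide, for each $P\in\D$, whether $P$ is highly periodic: this needs only a test of whether $P$ is periodic (a standard $\cO(1)$-time internal operation after $\cO(n)$-time preprocessing) followed, when it is, by~\cref{lem:runS} to recover $\per(P)$ and a comparison with $\tfrac14|P|$. This yields a partition $\D=\D_{\mathrm{np}}\cup\D_{\mathrm{hp}}$ into the patterns that are not highly periodic and the highly periodic ones, and $\textsc{3-Fragments-Counting}(F_1,F_2,F_3)$ equals the value of the query restricted to $\D_{\mathrm{np}}$ plus its value restricted to $\D_{\mathrm{hp}}$.

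For $\D_{\mathrm{np}}$ I would store the $\C$-query data structure of~\cref{lem:count} built for the dictionary $\D_{\mathrm{np}}$. Since no pattern of $\D_{\mathrm{np}}$ is highly periodic,~\cref{lem:3frag_aper} gives that the answer restricted to $\D_{\mathrm{np}}$ equals $\C(F_1F_2F_3)-\C(F_1F_2)-\C(F_2F_3)+\C(F_2)$, evaluated with four queries to this structure in $\cO(\log^2 n/\log\log n)$ time; this part uses $\cO(n+d\log n)$ space and $\cO(n\log n/\log\log n + d\log^{3/2}n)$ construction time, and it is what will dominate all three bounds. For $\D_{\mathrm{hp}}$ I would precompute, for every $P\in\D_{\mathrm{hp}}$, its Lyndon root $L$ and rank $r$ via~\cref{lem:runS}, group the patterns into the sets $\D^p_{L,r}$, build the grids $\G_{L,r}$ of~\cref{thm:range_count} and the prefix-sum arrays $X_L$ --- i.e.\ exactly the data structure underlying~\cref{lem:count_per} and~\cref{rem:tworanges} --- and also store the runs of $T$ and preprocess for the constant-time queries of~\cref{lem:runS}. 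At query time I first test whether $F_2$ is periodic. If it is not, then by~\cref{lem:same_run} no highly periodic pattern has an essential occurrence and this part contributes $0$. If it is, I set $R=\run(F_2)$ and observe that $F_1F_2F_3\cap R$, $F_1F_2\cap R$, and $F_2F_3\cap R$ are periodic fragments of period $\per(R)$ (each contains $F_2$, which is periodic with $\per(F_2)=\per(R)$ by~\cref{lem:same_run}, and each is contained in $R$), so~\cref{obs:3fragments_per} applies and the answer restricted to $\D_{\mathrm{hp}}$ equals
\[ |\textsc{Periodic}(F_1F_2F_3\cap R)| - |\textsc{Periodic}(F_1F_2\cap R)\cup\textsc{Periodic}(F_2F_3\cap R)|, \]
which I evaluate using~\cref{lem:count_per} for the first term and~\cref{rem:tworanges} for the union, both in $\cO(\log n/\log\log n)$ time. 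Summing the two parts gives query time $\cO(\log^2 n/\log\log n)$, while the space and construction time are as claimed, the bottleneck being~\cref{lem:count}.

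I do not expect a genuine obstacle here: all the content already resides in~\cref{lem:3frag_aper,lem:same_run,obs:3fragments_per,lem:count_per} and the $\C$-structure of~\cref{lem:count}, and the lemma is just their assembly. The only point that needs care is the bookkeeping of the case analysis --- in particular, checking that in the highly periodic branch every fragment passed to \textsc{Periodic} is a periodic fragment of the single run $\run(F_2)$ (so that~\cref{obs:3fragments_per,lem:count_per,rem:tworanges} all apply and all the geometric queries live in the grids $\G_{L,\cdot}$ for one fixed Lyndon root $L$), and that the highly-periodic classification of the patterns together with the periodicity test for $F_2$ fit within the stated preprocessing and query budgets. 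Modulo these routine verifications, the statement follows.
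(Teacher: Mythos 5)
Your proof follows essentially the same route as the paper: partition $\D$ into the not-highly-periodic and highly-periodic patterns, handle the former with the $\C$-structure of \cref{lem:count} via \cref{lem:3frag_aper}, and handle the latter by testing whether $F_2$ is periodic and then applying \cref{lem:same_run}, \cref{obs:3fragments_per}, \cref{lem:count_per}, and \cref{rem:tworanges}. You are somewhat more explicit than the paper in noting that the $\C$-structure must be built on the sub-dictionary $\D_{\mathrm{np}}$ and in verifying that the three fragments passed to $\textsc{Periodic}$ lie within $\run(F_2)$ and share its period, but these are the same intended steps; the argument is correct.
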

\begin{proof}
By~\cref{lem:3frag_aper}, in order to count the patterns that are not highly periodic, it suffices to perform three $\C$ queries. To this end, we employ the data structure of~\cref{lem:count} which answers $\C$ queries in $\cO(\log^2 n/\log\log n)$ time, occupies space $\cO(n+d \log n)$, and is constructed in time $\cO(n\log n / \log \log n +d\log^{3/2} n)$.

We now proceed to counting highly periodic patterns.
First, we check whether $F_2$ is periodic; this can be done in $\cO(1)$ time after an $\cO(n)$-time preprocessing of $T$~\cite{tomeksthesis,DBLP:conf/soda/KociumakaRRW15}.
If $F_2$ is not periodic, then by~\cref{lem:same_run} no highly periodic pattern has an essential occurrence, and we are thus done.
If $F_2$ is periodic, three $|\textsc{Periodic}(U)|$ queries suffice to obtain the answer due to \cref{obs:3fragments_per}. They can be efficiently answered due to~\cref{lem:count_per,rem:tworanges}; the complexities are dominated by those for building the data structure for $\C$ queries.
\end{proof}

\subsection{Approximation Algorithm}

Let us fix $\eps=\frac19$.
A fragment of length $\floor{(1+\eps)^p}$ for any positive integer $p$ will be called a \emph{$p$-basic fragment}. Our data structure stores $\CD(i,j)$ for every basic fragment $T[i \dd j]$. Using Lemma~\ref{lem:best}, these values can be computed in $\Oh(n\log^{1+\epsilon} n+d)$ time with a sliding window approach. The space requirement is $\cO(n\log n +d)$.

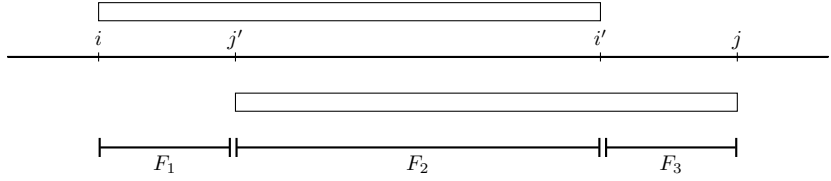
\begin{figure}[h!]
  \centering
  \begin{tikzpicture}[scale=0.6, every node/.style={scale=0.8}]
\draw [fill, thick] (0,0) rectangle (18,0);

\foreach \x/\t in {2/i, 5/j', 13/i', 16/j/}	
  	\draw [black] (\x,-0.1) -- (\x, 0.1) node [above, midway] (\t) {$\vphantom{i'j'}\t$};


\draw (5,-0.8) rectangle (16,-1.2);


\draw (2,0.8) rectangle (13,1.2);


\foreach \x/\y/\t in {2/4.9/F_1, 5/13/F_2, 13.1/16/F_3}	
{
  	\draw [thick, |-|] (\x,-2.0) --  (\y,-2.0) node [below, midway] {$\t$} ; 
};
\end{tikzpicture}
  \caption{A 2-approximation of $\CD(i,j)$  is achieved using precomputed counts for basic factors   $T[i \dd i']$ and $T[j' \dd j]$.
 }\label{fig:threeranges}
\end{figure}

In order to answer an arbitrary $\CD(i,j)$ query, let $T[i \dd i']$ and $T[j' \dd j]$ be the longest prefix and suffix of $T[i\dd j]$ being a basic factor; see Figure~\ref{fig:threeranges}. We sum up $\CD(i,i')$ and $\CD(j',j)$ and the result of a \textsc{3-Fragments-Counting} query for $F_1=T[i \dd j'-1]$, $F_2=T[j' \dd i']$, $F_3=T[i'+1\dd j]$. (Note that $(|F_1| + |F_2|) \cdot (1 + \eps) > |F_1| + |F_2| + |F_3|$ implies $\eps (|F_1| + |F_2|) > |F_3|$, and since $|F_1|=|F_3|$, we have that $|F_1|= |F_3| \le \frac18 |F_2|$.) 
Now, a pattern $P \in \D$ is counted at least once if and only if it occurs in $T[i \dd j]$.
Also, a pattern $P \in \D$ is counted at most twice (exactly twice if and only if it occurs in both $F_1F_2$ and $F_2F_3$). The above discussion and~\cref{lem:3count} yield the following result.

\begin{theorem}\label{thm:approx}
$\CD(i,j)$ queries can be answered 2-approximately in $\cO(\log^2 n / \log\log n)$ time with a data structure of size $\cO((n+d) \log n)$ that can be constructed in time $\cO(n\log^{1+\epsilon} n + d \log^{3/2} n)$ for any constant $\epsilon >0$.
\end{theorem}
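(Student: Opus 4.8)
The plan is to assemble the theorem directly from the three ingredients developed in this subsection: the precomputed table of $\CD$ values on basic fragments, the sliding-window computation of that table via \cref{lem:best}, and the \textsc{3-Fragments-Counting} data structure of \cref{lem:3count}. First I would set up the static data: build $\T_{T,\D}$ and the Bounded LCP structure so that \cref{lem:best} is available, then run the sliding-window procedure that, for each scale $p$ with $\lfloor(1+\delta)^p\rfloor\le n$, sweeps the left endpoint from $1$ to $n-\lfloor(1+\delta)^p\rfloor+1$, updating the count by one call to \cref{lem:best} per unit shift; this gives all $\CD(i,j)$ for $p$-basic fragments in $\cO(n\log^{1+\epsilon}n+d)$ total time (there are $\cO(\log n)$ scales, $\cO(n)$ shifts per scale, each costing $\cO(\log^\epsilon n)$) and $\cO(n\log n+d)$ space. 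Alongside this I store the \cref{lem:3count} structure, which contributes $\cO(n+d\log n)$ space, $\cO(n\log n/\log\log n+d\log^{3/2}n)$ construction time, and $\cO(\log^2 n/\log\log n)$ query time; combining the two dominates to the stated bounds.

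Next I would verify the query procedure. Given $i\le j$, locate the longest prefix $T[i\dd i']$ and longest suffix $T[j'\dd j]$ of $T[i\dd j]$ that are basic fragments; since consecutive scales differ by a factor $1+\delta$, a binary search over the $\cO(\log n)$ scales (or a direct arithmetic computation of the exponent) finds $i',j'$ in $\cO(\log n)$ time, and both fragments have length at least a constant fraction of $j-i+1$, so they overlap and $i'\ge j'-1$. Define $F_1=T[i\dd j'-1]$, $F_2=T[j'\dd i']$, $F_3=T[i'+1\dd j]$; these are three consecutive fragments with $|F_1|=|F_3|$, and the inequality $(|F_1|+|F_2|)(1+\delta)>|F_1|+|F_2|+|F_3|$ — which holds because $T[i\dd i']$ is the \emph{longest} basic prefix — rearranges to $\delta(|F_1|+|F_2|)>|F_3|$, whence $|F_3|=|F_1|\le\delta|F_2|/(1-\delta)\le\tfrac18|F_2|$ for $\delta=\tfrac19$, so the precondition of \textsc{3-Fragments-Counting} is met. (A few degenerate cases — $j-i+1$ smaller than the first basic length, or $F_2$ empty — are handled by scanning $T[i\dd j]$ directly in $\cO(\log n)$ time.) We then return $\CD(i,i')+\CD(j',j)+\textsc{3-Fragments-Counting}(F_1,F_2,F_3)$, which costs $\cO(\log^2 n/\log\log n)$ dominated by the third term.

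The remaining obligation is the $2$-approximation guarantee, i.e.\ that each $P\in\D$ occurring in $T[i\dd j]$ is counted at least once and at most twice. I would argue by case analysis on where the occurrence of $P$ sits: every occurrence of $P$ inside $T[i\dd i']$ is counted by $\CD(i,i')$, every occurrence inside $T[j'\dd j]$ by $\CD(j',j)$, and any occurrence not contained in either of these two basic fragments must start in $F_1$ and end in $F_3$ (it cannot be confined to $F_1F_2=T[i\dd i']$ nor to $F_2F_3=T[j'\dd j]$) and hence, if $P$ occurs in neither $F_1F_2$ nor $F_2F_3$, is counted by the \textsc{3-Fragments-Counting} term — and the definition of that query excludes exactly the patterns already counted by the two $\CD$ terms, so no pattern contributing there is double-counted against them. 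For the upper bound: $P$ is counted by the first term iff it occurs in $F_1F_2$, by the second iff it occurs in $F_2F_3$, and by the third only if it occurs in neither, so the three events contribute to at most two of the counts, giving the factor~$2$. I expect the only delicate point to be the bookkeeping of this inclusion pattern — making sure the "does not occur in $F_1F_2$ or $F_2F_3$" clause in \textsc{3-Fragments-Counting} lines up precisely with "not captured by $\CD(i,i')$ and not by $\CD(j',j)$" — but this is immediate once one notes $F_1F_2=T[i\dd i']$ and $F_2F_3=T[j'\dd j]$. Everything else is routine aggregation of complexities. \qed
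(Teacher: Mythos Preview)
Your proposal is correct and follows essentially the same approach as the paper: precompute $\CD$ on all basic fragments via the sliding-window use of \cref{lem:best}, combine with the \textsc{3-Fragments-Counting} structure of \cref{lem:3count}, and return $\CD(i,i')+\CD(j',j)+\textsc{3-Fragments-Counting}(F_1,F_2,F_3)$, with the identical verification that $|F_1|=|F_3|\le\tfrac18|F_2|$ and the same at-least-once/at-most-twice counting argument. Your write-up actually spells out a few details (locating $i',j'$, the degenerate short-fragment case, and the inclusion--exclusion between the three terms) that the paper leaves implicit, but there is no substantive difference in the argument.
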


\section{Time-Space Tradeoffs for Exact Counting}

\subsection{Tradeoff for Large Dictionaries}\label{sec:trade1}

The following result is yet another application of~\cref{lem:best}.

\begin{theorem}\label{thm:ext}
  For any $m \in [1,n]$ and any constant $\epsilon >0$, the $\CD(i,j)$ queries can be answered in $\Oh(m \log^{\epsilon} n)$ time using an $\Oh(n^2/m^2+n+d)$-size data structure that can be constructed in $\Oh((n^2 \log^{\epsilon} n)/m + n\sqrt{\log n}+d)$ time.
\end{theorem}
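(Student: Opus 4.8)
The plan is to discretize the endpoints of the query fragment onto a coarse grid of spacing roughly $m$, precompute $\CD$ for all such ``grid fragments'', and then use \cref{lem:best} to walk from the nearest grid fragment to the actual query in $\Oh(m)$ single-step moves.

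First I would fix a step $s = \Theta(m)$ (say $s = \lceil m \rceil$) and consider the set of \emph{sampled positions} $\{1, 1+s, 1+2s, \ldots\}$; there are $\Oh(n/s)$ of them. For every ordered pair $(i',j')$ of sampled positions with $i' \le j'$, we store the value $\CD(i',j')$. This is a table of size $\Oh(n^2/s^2) = \Oh(n^2/m^2)$. To populate it within the claimed preprocessing bound, I would iterate over the $\Oh(n/m)$ choices of left sampled endpoint $i'$; for each fixed $i'$, start from a known value (e.g.\ $\CD(i',i')$, which is $0$ or $1$ depending on whether $T[i']\in\D$, computable via the $\D$-modified suffix tree) and repeatedly apply the $j \mapsto j+1$ operation of \cref{lem:best}, recording the value every $s$ steps. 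This performs $\Oh(n)$ single-step updates per choice of $i'$, hence $\Oh(n^2/m)$ updates in total, each costing $\Oh(\log^{\epsilon} n)$ time; together with the $\Oh(n\sqrt{\log n}+d)$ cost of building the \cref{lem:best} data structure and the $\D$-modified suffix tree, this matches the stated $\Oh((n^2\log^{\epsilon} n)/m + n\sqrt{\log n} + d)$ preprocessing time. The $\Oh(n)$ additive term in the space bound accounts for these auxiliary structures.

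To answer a query $\CD(i,j)$, I would round $i$ up to the next sampled position $i'$ and round $j$ down to the previous sampled position $j'$ (taking care of the degenerate case where no sampled position lies in $[i,j]$, in which case $j-i < 2s = \Oh(m)$ and we can evaluate the query directly by $\Oh(m)$ steps from, say, $\CD(i,i)$). Then $i' - i < s$ and $j - j' < s$, so starting from the stored value $\CD(i',j')$ we apply at most $s$ steps of the $i \mapsto i-1$ operation and at most $s$ steps of the $j \mapsto j+1$ operation of \cref{lem:best} to reach $\CD(i,j)$. This is $\Oh(m)$ invocations, each in $\Oh(\log^{\epsilon} n)$ time, for a total query time of $\Oh(m\log^{\epsilon} n)$.

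The main thing to be careful about is simply that \cref{lem:best} supplies exactly the four one-step primitives needed ($\CD(i\pm1,j)$ and $\CD(i,j\pm1)$), so both the ``fill the table by sliding the right endpoint'' step and the ``finish the query by adjusting both endpoints'' step are legitimate; there is no genuine obstacle beyond bookkeeping. One should also verify the off-by-one details in the rounding and confirm that the $m > \sqrt{n}$ regime still makes sense — there the table has size $\Oh(n^2/m^2) = \Oh(n)$, dominated by the auxiliary structures, and everything degrades gracefully, with the extreme $m = n$ recovering the trivial $\Oh(n\log^{\epsilon} n)$-per-query, $\Oh(n+d)$-space solution.
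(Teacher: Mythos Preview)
Your proposal is correct and follows essentially the same approach as the paper: precompute $\CD$ on a grid of spacing $m$ (the paper calls these \emph{canonical fragments} $T[c_1 m+1\dd c_2 m]$), filling the table row by row via the sliding window of \cref{lem:best}, and answer each query by extending the largest enclosed grid fragment with $\Oh(m)$ single-step moves. The only differences are cosmetic (your sampled positions versus the paper's canonical fragments, and your explicit handling of the degenerate short-fragment case).
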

\begin{proof}
A fragment of the form $T[c_1 m+1 \dd c_2 m]$ for integers $c_1$ and $c_2$ will be called a \emph{canonical fragment}.
Our data structure stores $\CD(i',j')$ for every canonical fragment $T[i' \dd j']$ and the data structure of~\cref{lem:best}. Hence the space complexity $\Oh(n^2/m^2+n+d)$. 

We can compute in $\cO(n\log^{\epsilon} n)$ time $\CD(i',j)$ for a given $i'$ and all $j$ using Lemma~\ref{lem:best}.
There are $\cO(n/m)$ starting positions of canonical fragments and hence the counts for all canonical fragments can be computed in $\Oh((n^2\log^{\epsilon} n)/m)$ time. Additional preprocessing time $\Oh(n\sqrt{\log n}+d)$ originates from \cref{lem:best}.

\begin{figure}[h!]
\begin{center}
\begin{tikzpicture}
  \tikzset{
    dot/.style = {draw,fill,circle, minimum size=1mm, inner sep=0pt, outer sep=0pt}
  }
  \draw (0,0)--(12,0);

  \foreach \x in {0, 2, 4, 6, 8, 10, 12} {
    \draw (\x,-0.1)--(\x,0.1);
  }

  \node[dot] (i) at (1.2,0) {};
  \node at (i) [below] {$\vphantom{i'j'}i$};
  \node[dot] (j) at (7.5,0) {};
  \node at (j) [below] {$\vphantom{i'j'}j$};
  
  \draw (2,0) node[below] {$\vphantom{i'j'}i'$};
  \draw (6,0) node[below] {$\vphantom{i'j'}j'$};

  \draw (2,0.5) rectangle (6, 0.7);
  \draw (1.2,0.5) rectangle (2, 0.7);
  \draw (6,0.5) rectangle (7.5, 0.7);

  \draw[snake=brace] (2, 0.8) -- (6, 0.8) node [above, midway] {\small canonical fragment} ; 

  \draw[|-latex] (2, 1.0) -- (1.2, 1.0) node [above, midway] {\small extend};
  \draw[|-latex] (6, 1.0) -- (7.5, 1.0) node [above, midway] {\small extend};
\end{tikzpicture}
\end{center}
\caption{An illustration of the setting in the query algorithm underlying~\cref{thm:ext}.}\label{fig:canon}
\end{figure}
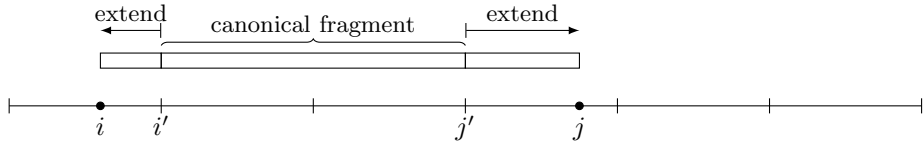

We can answer a $\CD(i,j)$ query in $\cO(m \log^{\epsilon} n)$ time as follows.
Let $T[i'\dd j']$ be the maximal canonical fragment contained in $T[i\dd j]$.
We retrieve $\CD(i',j')$ for $T[i' \dd j']$.
Then, we apply~\cref{lem:best} $\cO(m)$ times; each time we extend the fragment for which we count, until we obtain $\CD(i,j)$. See Figure~\ref{fig:canon}.
\end{proof}

\subsection{Tradeoff for Small Dictionaries}\label{sec:trade2}

We call a set of strings $\PS$ a \emph{path-set} if all elements of $\PS$ are prefixes of its longest element.
We now show how to efficiently handle dictionaries that do not contain large path-sets.

\begin{lemma}\label{lem:smallsets}
If $\D$ does not contain any path-set of size greater than $k$, then we can construct in $\cO(kn \log n)$ time an $\cO(kn \log n)$-size data structure that can answer $\CD(i,j)$ queries in $\cO(\log n)$ time.
\end{lemma}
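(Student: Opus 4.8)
The plan is to observe that the path-set restriction caps the number of occurrences of dictionary patterns at $\cO(kn)$, and then to invoke \cref{lem:fCD} with every occurrence of every pattern declared ``special''; the actual work is then done by the colored range counting structure of \cref{thm:colored} that sits inside \cref{lem:fCD}.

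First I would prove that for every starting position $a$ at most $k$ patterns of $\D$ occur at $a$. Indeed, if $P$ and $Q$ both occur at position $a$, then the shorter of the two equals a prefix of the longer; hence the patterns occurring at $a$ are pairwise comparable in the prefix order, so all of them are prefixes of the longest among them, i.e.\ they form a path-set. Since $\D$ contains no path-set of size greater than $k$, at most $k$ patterns occur at $a$. Summing over all $n$ starting positions yields $|\Occ(\D)| \le kn$, and in particular $d \le |\Occ(\D)| \le kn$ because every pattern is non-empty and hence occurs at least once.

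Next I would build the $\D$-modified suffix tree $\T_{T,\D}$ in $\cO(n+d)$ time and use it to list $\Occ(\D)$, together with the pattern matched by each occurrence, in $\cO(n+d+|\Occ(\D)|) = \cO(kn)$ time (\cref{sec:tools}). Finally, taking $\S = \{S_1,\dots,S_d\}$ with $S_k = \Occ(P_k)$ the full occurrence set of $P_k$, so that $\|\S\| = |\Occ(\D)| \le kn$, a $\CD_\S(i,j)$ query becomes exactly a $\CD(i,j)$ query, since the condition ``occurs at a special position within $T[i\dd j]$'' degenerates to ``occurs within $T[i\dd j]$''. Substituting $\|\S\| = \cO(kn)$ into \cref{lem:fCD} gives a data structure of size $\cO(n+kn\log n) = \cO(kn\log n)$, constructible in $\cO(n+kn\log n) = \cO(kn\log n)$ time (which dominates the $\cO(kn)$ spent on computing $\Occ(\D)$), that answers $\CD(i,j)$ queries in $\cO(\log n)$ time. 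Since a path-set of distinct strings is a chain in the prefix order and the longest pattern has length at most $n$, we may assume $k \le n$, so $\log(kn) = \cO(\log n)$ and the bounds are exactly as claimed.

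The only step with any content is the per-position occurrence bound; everything after it is bookkeeping layered on top of \cref{lem:fCD}, and it is elementary. The one point worth double-checking is that the reduction to $\CD_\S$ is faithful, i.e.\ that making \emph{all} occurrences special leaves every query answer unchanged, which is immediate from the definition of $\CD_\S$. Hence I do not anticipate any genuine obstacle.
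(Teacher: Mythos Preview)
Your proposal is correct and follows essentially the same approach as the paper's proof: bound the total number of pattern occurrences by $kn$ via the observation that the patterns occurring at any fixed position form a path-set, then invoke \cref{lem:fCD} with $\S_k=\Occ(P_k)$. The paper's proof is terser (it omits the explicit justification of the per-position bound and of how $\Occ(\D)$ is computed), but the argument is identical.
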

\begin{proof}
Let $\D=\{P_1,\ldots,P_d\}$ and $\S=\{\Occ(P_1),\ldots,\Occ(P_d)\}$. Every position of $T$ contains at most $k$ occurrences of patterns from $\D$.
This implies that $\|\S\| \le kn$.
A $\CD(i,j)$ query can obviously be treated as a $\CD_\S(i,j)$ query.
The complexities follow from Lemma~\ref{lem:fCD}.
\end{proof}

\begin{lemma}\label{lem:path_decomp}
 For any $k\in [1,n]$, we can compute a maximal family $\mathcal{F}$ of pairwise-disjoint path-sets in $\D$, each consisting of at least $k$ elements, in $\cO(n+d)$ time.
\end{lemma}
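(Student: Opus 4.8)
The statement asks for a maximal family $\mathcal{F}$ of pairwise-disjoint path-sets in $\D$, each of size at least $k$, in $\cO(n+d)$ time. My plan is to work on the $\D$-modified suffix tree $\T_{T,\D}$ (constructible in $\cO(n+d)$ time by the earlier discussion), exploiting the observation that a path-set in $\D$ corresponds precisely to a set of dictionary nodes of $\T_{T,\D}$ that lie on a single root-to-leaf path — equivalently, a chain in the ancestor partial order restricted to the nodes $v$ with $\Lab(v)\in\D$. So the task reduces to a purely combinatorial one: given a tree with a distinguished set of $d$ marked nodes, greedily carve out vertex-disjoint (in terms of marked nodes) root-to-node chains, each containing at least $k$ marked nodes, maximizing the number of chains; and do so in linear time.

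\medskip
\textbf{Key steps.} First I would restrict attention to the \emph{dictionary nodes} of $\T_{T,\D}$, i.e.\ nodes $v$ with $\Lab(v)\in\D$, and contract the tree to the subtree spanned by these nodes and the root; note each dictionary node carries a multiplicity since several patterns of $\D$ may be equal as strings, but this only affects the counting, not the structure. Next, process this contracted tree bottom-up. At each dictionary node $v$, maintain the number $s(v)$ of as-yet-unassigned dictionary-node "slots" on the path from $v$ up toward the root that are available to be bundled into a path-set passing through $v$ — concretely, when we reach $v$ we look at its children in the contracted tree, pick the child subtree contributing the largest surplus of leftover slots, add $v$'s own multiplicity, and propagate that count upward, discarding the leftovers of all other children. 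Whenever the accumulated count along some branch reaches $k$, we "close" a path-set: we emit the corresponding $\ge k$ dictionary patterns as one member of $\mathcal{F}$, reset the running count to $0$ on that branch, and continue. Finally, at the root we have used up as many disjoint path-sets as possible along every branch.

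\medskip
\textbf{Correctness and the main obstacle.} The correctness argument is an exchange/greedy argument: because path-sets must be prefix-chains, any two path-sets sharing a dictionary node are "nested" in a strong sense, so a globally optimal family can be transformed, without decreasing its size, into one produced by this bottom-up greedy rule — the key lemma being that it is never disadvantageous to (i) keep only the richest child-branch's leftovers and (ii) close a path-set as soon as the count hits $k$ rather than waiting. I expect the main obstacle to be making this exchange argument airtight in the presence of branching: when several children of $v$ each have leftover slots below $k$, we must argue that merging them through $v$ in the single best way (and throwing the rest away) cannot cost us a future path-set higher up, which requires a careful induction on subtree size comparing the greedy's "surplus profile" to that of any optimal solution. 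The linear-time bound is then routine: the contracted tree has $\cO(d)$ nodes and $\cO(n)$ size for $\T_{T,\D}$, each node is touched $\cO(1)$ times in the bottom-up sweep, the "richest child" is found by a single scan of the child list, and closing a path-set and listing its members is charged to the $\ge k$ patterns removed, for a total of $\cO(n+d)$. Maximality (no path-set of size $\ge k$ disjoint from $\mathcal{F}$ remains) follows because every dictionary node not placed in some member of $\mathcal{F}$ has had all the slots above it on its path consumed down to a residue $<k$, so no further size-$k$ chain can be formed from the residues.
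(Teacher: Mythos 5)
Your approach is genuinely different from the paper's, and as stated it has a gap in the maximality argument. The lemma only asks for a \emph{maximal} (non-extendable) family, not a maximum one, and the paper gets this with a simpler greedy: traverse $\T_{T,\D}$ in preorder, and whenever a bottom terminal node $u$ (a terminal node with no terminal descendant) has at least $k$ \emph{unmarked} terminal nodes on its root-to-$u$ path, grab \emph{all} of them as one path-set and mark them. Grabbing all unmarked terminals, not just $k$ of them, is essential: it guarantees that after visiting $u$ every terminal on the root-to-$u$ path is marked, so any later still-disjoint path-set of size at least $k$ must live entirely in an unvisited subtree, and the preorder scan catches it there.

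Your bottom-up ``pick the richest child, discard the other children's leftovers'' rule breaks maximality. Take a node $v$ whose two children each head a chain of $k-1$ unassigned dictionary nodes, with $v$ itself in $\D$ and a further chain of $k-1$ dictionary nodes above $v$ (and $k\ge 2$). Your sweep merges the chosen child's $k-1$ with $v$, closes one path-set of size $k$, resets the count to $0$, and climbs to the root accumulating only $k-1<k$; the other child's $k-1$ leftovers are discarded. But those discarded $k-1$ terminals together with the $k-1$ unmarked terminals above $v$ all lie on a single root-to-leaf chain and form a disjoint path-set of size $2k-2\ge k$ that your output does not cover, so the family is not maximal. Your closing claim --- that every unplaced node has residue less than $k$ above it --- accounts only for the residue propagated along the chosen branch, not for the orphaned leftovers of the unchosen siblings, which may combine with unmarked ancestors across the branching point. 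You also spend your main effort on an exchange/optimality argument that the lemma does not ask for; non-extendability is all that is needed, for which the paper's ``grab everything'' preorder greedy yields a short direct proof, whereas global optimality is both harder and unnecessary here.
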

\begin{proof}
Let us consider the $\D$-modified suffix tree $\T_{T,\D}$ and call every its terminal node that has no descendant terminal nodes a \emph{bottom node}.
As the considered path-sets are maximal, the longest string in any path-set $\PS \in \mathcal{F}$ is a bottom node.
We preprocess $\T_{T,\D}$ so that for each bottom node $u \in \T_{T,\D}$ we store a counter $C(u)$ equal to the number of terminal nodes on the root-to-$u$ path.

We perform a preorder traversal of $\T_{T,\D}$.
This way all bottom nodes in $\T_{T,\D}$ are considered in a left-to-right manner.
When adding a path-set to $\mathcal{F}$, we mark all nodes of that path-set.
During our traversal we can easily maintain the number $N$ of ancestors of the node that we are visiting that have been marked.
When we visit some bottom node $u$, we check whether $r=C(u)-N$ is at least $k$.
In such case we add the path-set consisting of $u$ and its unmarked ancestors being terminal nodes to $\mathcal{F}$.
Note that throughout the above process we maintain that if a terminal node is marked, then all its ancestor terminal nodes are also marked. Hence we can easily find the $r$ unmarked terminal nodes that are ancestors of $u$ since they are $u$'s $r$ closest ancestors being terminal -- we can store for each terminal node a pointer to its closest ancestor that is terminal.
\end{proof}

We now combine Lemmas~\ref{lem:smallsets}, \ref{lem:path_decomp} and \ref{lem:BRLCP} to get the main result of this section.

\begin{theorem}\label{main1}
For any $m\in [1,n]$ and any constant $\epsilon>0$, the $\CD(i,j)$ queries can be answered in  $\cO(m \log^{\epsilon} n +\log n)$ time using an $\cO((nd\log n) / m +d)$-size data structure that can be constructed in $\cO((nd \log n) / m +d)$ time.
\end{theorem}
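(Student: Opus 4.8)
The plan is to combine, as announced, the ``few-occurrences'' machinery of \cref{lem:smallsets} with Bounded LCP queries applied to long nested chains. First I set $k=\lceil d/m\rceil$ and apply \cref{lem:path_decomp}: in $\cO(n+d)$ time we obtain a maximal family $\mathcal F=\{\PS_1,\dots,\PS_t\}$ of pairwise-disjoint path-sets in $\D$, each of size at least $k$. Put $\D'=\D\setminus\bigcup_{i}\PS_i$. Since $\mathcal F$ is maximal and $\D'$ is disjoint from every $\PS_i$, the set $\D'$ contains no path-set of size $\ge k$ (such a path-set could otherwise be appended to $\mathcal F$); equivalently, every path-set of $\D'$ has fewer than $k$ elements. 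Moreover, as the $\PS_i$ are pairwise disjoint and each has at least $k$ elements, $t\le d/k\le m$. Finally, the sets $\PS_1,\dots,\PS_t,\D'$ partition $\D$ into subsets of pairwise-distinct strings, so
\begin{multline*}
  \CD(i,j)=\bigl|\{P\in\D': P\text{ occurs in }T[i\dd j]\}\bigr|\\
  +\sum_{i=1}^{t}\bigl|\{P\in\PS_i: P\text{ occurs in }T[i\dd j]\}\bigr|,
\end{multline*}
and it suffices to evaluate the $\D'$-term and each $\PS_i$-term.

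For the $\D'$-term I would invoke \cref{lem:smallsets} with parameter $k$ (applicable because no path-set of $\D'$ has more than $k-1$ elements): this gives an $\cO(kn\log n)=\cO((nd\log n)/m+n\log n)$-size structure, built in the same time, answering this term in $\cO(\log n)$ time; when $k=1$ (that is, $m\ge d$) the greedy of \cref{lem:path_decomp} already claims every pattern, so $\D'=\emptyset$ and this term is absent. For the $\PS_i$-terms, recall that a path-set is a chain of nested strings of strictly increasing lengths. I store, for each $\PS_i$, one occurrence $T[a_i\dd b_i]$ of its longest element $P^{*}_i$ together with a predecessor structure (say, a $y$-fast trie) over $\{|P|:P\in\PS_i\}$ supporting $\cO(\log\log n)$-time rank queries; the total size is $\cO(d)$ and, after a counting sort of all pattern lengths, the total construction time is $\cO(n+d)$. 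I also build the Bounded LCP structure of \cref{lem:BRLCP}. To answer the $\PS_i$-term for a query $T[i\dd j]$, I compute $\ell_i=|\BRLCP(T[a_i\dd b_i],T[i\dd j])|$, the length of the longest prefix of $P^{*}_i$ occurring in $T[i\dd j]$, and then output the number of $P\in\PS_i$ with $|P|\le\ell_i$ using the predecessor structure. This costs $\cO(\log^{\epsilon}n+\log\log n)=\cO(\log^{\epsilon}n)$ per path-set, hence $\cO(m\log^{\epsilon}n)$ in total over the $t\le m$ path-sets.

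Correctness of the $\PS_i$-term rests on the claim that, for $P\in\PS_i$, $P$ occurs in $T[i\dd j]$ if and only if $|P|\le\ell_i$. If $|P|\le\ell_i$, then $P$ is a prefix of the length-$\ell_i$ prefix of $P^{*}_i$, which occurs inside $T[i\dd j]$, so a prefix of that occurrence matches $P$ and lies inside $T[i\dd j]$. Conversely, if $P\in\PS_i$ occurs in $T[i\dd j]$ then $P$ is a prefix of $P^{*}_i$ occurring in $T[i\dd j]$, so $|P|\le|\BRLCP(T[a_i\dd b_i],T[i\dd j])|=\ell_i$. Summing the contributions, the query time is $\cO(m\log^{\epsilon}n+\log n)$, and the space and construction time are $\cO((nd\log n)/m+d)$ (the $\cO(n)$-size structures of \cref{lem:BRLCP} and the $\D$-modified suffix tree used during construction being part of the standard internal-query toolbox).

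The step I expect to be the main obstacle is keeping each per-path-set query to $\cO(\log^{\epsilon}n)$ time: a plain binary search over the up-to-$d$ lengths of a single path-set costs $\cO(\log n)$ and would inflate the total to $\cO(m\log n)$, so it is essential to use $\cO(\log\log n)$-time predecessor structures together with the bound $\log\log n=\cO(\log^{\epsilon}n)$. The other point to get right is the estimate $t\le m$ on the number of path-sets, which relies precisely on the ``each of size at least $k$'' guarantee of \cref{lem:path_decomp}, and the observation that no distinctness is lost across the parts because $\PS_1,\dots,\PS_t,\D'$ genuinely partition $\D$.
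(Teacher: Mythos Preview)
Your proof is correct and follows essentially the same approach as the paper: decompose via \cref{lem:path_decomp} with $k=\lceil d/m\rceil$, handle $\D'$ by \cref{lem:smallsets}, and handle each of the at most $m$ path-sets by a Bounded LCP query followed by a predecessor query on the set of pattern lengths. The only cosmetic differences are that the paper cites Willard's structure (derandomized via Ru\v{z}i\'c) rather than a $y$-fast trie for the predecessor step, and that you should avoid reusing $i$ as both the query index and the summation index.
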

\begin{proof}
We first apply~\cref{lem:path_decomp} for $k=\ceil{d/m}$.
We then have a decomposition of $\D$ to a family $\mathcal{F}$ of at most $m$ path-sets and a set $\D'$ with no path-set of size greater than $\floor{d/m}$.
We directly apply~\cref{lem:smallsets} for $\D'$. In order to handle path-sets, we build the data structure of~\cref{lem:BRLCP}.
Then, upon a $\CD(i,j)$ query, for each path-set $\PS \in \mathcal{F}$, we compute the longest pattern in $\PS$ that occurs in $T[i \dd j]$ using a Bounded LCP query followed by a predecessor query~\cite{DBLP:journals/ipl/Willard83} in a structure that stores the lengths of the elements of $\PS$, with the lexicographic rank in $\PS$ stored as satellite information.
The data structure of~\cite{DBLP:journals/ipl/Willard83} is randomized,
but it can be combined with deterministic dictionaries~\cite{DBLP:conf/icalp/Ruzic08} using a simple two-level 
approach (see~\cite{DBLP:conf/stoc/Thorup03a}), resulting in a deterministic \emph{static} data structure.
\end{proof}

\begin{remark}
Let us fix the query time to be $\cO(m \log^{\epsilon }n)$ for $m=\Omega(\log n)$. 
Then,~\cref{main1} outperforms~\cref{thm:ext} in terms of the required space for $d=o(n/(m\log n))$.
For example, for $m=d=n^{1/4}$, the data structure of~\cref{main1} requires space $\cOtilde(n)$ while the one of~\cref{thm:ext} requires space $\cOtilde(n\sqrt{n})$.
\end{remark}

\section{Internal Counting of Distinct Squares}\label{sec:squares}
The number of occurrences of squares could be quadratic,
but we can construct a smaller $\Oh(n \log n)$-size subset of these occurrences 
(called \emph{boundary occurrences}) that, from the point 
of view of $\CD$ queries, gives almost the same answers.
This is the main trick 
in this section.
Distinct squares with a boundary occurrence in a given fragment can 
be counted in $\Oh(\log n)$ time due to Lemma~\ref{lem:fCD}.
The remaining squares can be counted based on their structure: we show that they are all generated by the same run.

Now, the dictionary $\D$ is the set of all squares in $T$. By the following fact, $d = \Oh(n)$ and $\D$ can be computed in $\Oh(n)$ time.

\begin{fact}[\cite{DBLP:journals/tcs/CrochemoreIKRRW14,DBLP:journals/dam/DezaFT15,DBLP:journals/jct/FraenkelS98,DBLP:journals/jcss/GusfieldS04}]
A string $T$ of length $n$ contains $\Oh(n)$ distinct square factors and they can all be computed in $\Oh(n)$ time.
\end{fact}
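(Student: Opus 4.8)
\medskip
\noindent\textbf{Proof proposal (plan).}
Both assertions are classical, so the plan is to assemble the standard arguments rather than reinvent them. For the $\Oh(n)$ bound on the number of distinct square factors, I would follow the argument of Fraenkel and Simpson: charge every distinct square $S$ of $T$ to the starting position of its \emph{rightmost} occurrence, and prove that at most two distinct squares are charged to any fixed position $i$; since there are only $n$ positions, this yields at most $2n$ distinct squares. For the key step, assume towards a contradiction that three distinct squares $u^2,v^2,w^2$ with $|u|<|v|<|w|$ are all charged to $i$. Then each is a prefix of $T[i\dd n]$, so $u^2$ is a proper prefix of $v^2$, which is a proper prefix of $w^2$. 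Modulo a routine treatment of the case that $u$ is non-primitive, the three-squares lemma of Crochemore and Rytter applies and gives $|u|+|v|\le|w|$; combining this length inequality with the periodicity that $u^2$ being a prefix of $w$ forces, one checks that $v^2$ must reoccur strictly to the right of $i$ inside the occurrence of $w^2$ starting at $i$, contradicting the choice of $i$ as the start of the \emph{rightmost} occurrence of $v^2$. (Deza, Franek and Thierry sharpen the constant $2$; for us, any $\Oh(1)$ multiple of $n$ suffices.)

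For the $\Oh(n)$-time computation, the plan is to go through the runs structure of $T$. First compute the set $\mathcal{R}$ of all runs of $T$ in $\Oh(n)$ time (and recall $|\mathcal{R}|\le n$). Every square $S=UU$ has $\per(S)\le|U|<|S|$, so $S$ is a periodic factor, and hence each of its occurrences extends to a unique run of period $\per(S)$ (cf.\ \cref{lem:runS}). Therefore, to list all distinct squares it suffices, for each run $R$ of period $p$, to list the distinct squares of period $p$ contained in $R$; since the period block of a run is primitive, these are precisely the length-$2kp$ factors of $R$ for $1\le k\le\floor{\exp(R)/2}$. Each such square I would encode canonically by the pair consisting of its Lyndon root (retrievable in $\Oh(1)$ time after $\Oh(n)$-time preprocessing) and its length, and then a single radix sort over these encodings removes the duplicates arising from squares that lie in several runs sharing their period.

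The step I expect to be the real obstacle is the running-time analysis of this last phase: since one square may occur inside many runs sharing its period, the naive count of candidates generated, summed over all runs, can be superlinear, so a straightforward implementation only yields an $\Oh(n\log n)$-time algorithm. Obtaining $\Oh(n)$ requires producing each distinct square only $\Oh(1)$ times --- that is, charging each distinct square to a single canonical run and offset --- which is exactly what the linear-time algorithms of Gusfield and Stoye (via suffix trees) and of Crochemore et al.\ (via the runs structure) achieve, and which we may invoke directly. In short, the combinatorial core of the statement is the three-squares lemma, the algorithmic core is this charging argument, and everything else is bookkeeping.
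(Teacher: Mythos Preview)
Your plan is sound, but there is nothing to compare it against: in the paper this statement is a \emph{Fact} imported wholesale from the cited references and carries no proof of its own. Your sketch faithfully reconstructs those references --- the Fraenkel--Simpson rightmost-occurrence charging via the three-squares lemma for the $\Oh(n)$ bound, and the runs-based enumeration (with the acknowledgement that the linear-time charging is the delicate part, handled by Gusfield--Stoye and Crochemore et al.) for the algorithm --- so it is appropriate and correct as an expansion of a citation.
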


We say that an occurrence of a square $U^2$ is \emph{induced} by a run $R$ if it is contained in $R$ and the shortest periods of $U$ and $R$ are the same. Every occurrence of a square is induced by exactly one run.

We need the following fact (note that it is false for the set of \emph{all} runs; see~\cite{DBLP:journals/tcs/GlenS13}).

\begin{fact}\label{sum}
The sum of the lengths of all highly periodic runs is $\cO(n \log n)$.
\end{fact}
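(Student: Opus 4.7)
The plan is to bound the total length of highly periodic runs by counting occurrences of primitively-rooted squares in $T$, exploiting the classical fact that there are only $\Oh(n \log n)$ such occurrences. Recall that $U^2$ is a \emph{primitively-rooted square} if $U$ is primitive; such a factor has smallest period exactly $|U|$.

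First, I would establish a counting identity linking runs to primitively-rooted squares. Fix a run $R$ of period $p = \per(R)$ and length $\ell = |R| \geq 2p$. For every $j \in \{0, 1, \ldots, \ell - 2p\}$, the length-$2p$ factor of $T$ starting at offset $j$ inside $R$ inherits the period $p$, and its length-$p$ root (a cyclic rotation of the Lyndon root of $R$) is primitive, so this factor is a primitively-rooted square occurrence in $T$. Conversely, any primitively-rooted square occurrence $U^2$ in $T$ has smallest period $|U|$ and therefore lies inside the unique run of period $|U|$ containing it. Consequently, writing $\mathcal{R}$ for the set of all runs,
\[
  \#\{\text{primitively-rooted square occurrences in } T\} \;=\; \sum_{R \in \mathcal{R}} \bigl(|R| - 2\per(R) + 1\bigr).
\]

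Second, I would invoke the classical bound that the number of primitively-rooted square occurrences in a string of length $n$ is $\Oh(n \log n)$. This follows from the three-squares lemma (Crochemore--Rytter), which implies that at most $\Oh(\log n)$ primitively-rooted squares can end at any fixed position of $T$; it is the combinatorial backbone of the $\Oh(n \log n)$-time Main--Lorentz / Crochemore algorithms for computing repetitions.

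Third, I would specialize to highly periodic runs, which by definition satisfy $|R| \geq 4 \per(R)$, whence $|R| - 2\per(R) + 1 \geq \tfrac{1}{2} |R|$. Restricting the identity to highly periodic runs and discarding the non-negative contributions from the other runs yields
\[
  \tfrac{1}{2} \sum_{R \text{ highly periodic}} |R| \;\leq\; \sum_{R \in \mathcal{R}} \bigl(|R| - 2\per(R) + 1\bigr) \;=\; \Oh(n \log n),
\]
from which the fact follows. No real obstacle is expected: the only nontrivial ingredient is the $\Oh(n \log n)$ bound on primitively-rooted square occurrences, which is well established and can simply be cited; the remainder is an elementary exponent calculation that crucially fails once one drops the ``highly periodic'' assumption (runs with exponent close to $2$ satisfy $|R| - 2\per(R) + 1 = 1$ and contribute nothing to the count on the right-hand side while potentially being long), explaining why the analogous statement for the set of all runs is false.
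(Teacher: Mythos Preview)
Your proof is correct but takes a genuinely different route from the paper. The paper argues directly that every position of $T$ lies in $\cO(\log n)$ highly periodic runs: it buckets runs by period into geometric scales $[m,\tfrac32 m)$ and shows, via a pigeonhole argument together with the periodicity lemma, that no position can be covered by five highly periodic runs in the same bucket. Summing over the $\cO(\log n)$ buckets gives the bound. You instead reduce to the classical $\cO(n\log n)$ bound on occurrences of primitively-rooted squares, using the identity $\#\{\text{primitively-rooted square occurrences}\}=\sum_{R\in\mathcal{R}}(|R|-2\per(R)+1)$ and the observation that the summand is at least $|R|/2$ once $|R|\ge 4\per(R)$. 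Your argument is shorter because it offloads the hard combinatorics to a cited result---one that the paper itself invokes a few lines later in the proof of Lemma~\ref{boundary}, so no new dependency is introduced. The paper's argument, on the other hand, is self-contained from the periodicity lemma and yields the slightly stronger pointwise statement that each position is covered $\cO(\log n)$ times by highly periodic runs.
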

\begin{proof}
We will prove that each position in $T$ is contained in $\cO(\log n)$ highly periodic runs.
Let us consider all highly periodic runs $R$ containing some position $i$, such that $m \leq \per(R)< \frac32 m$ for some even integer $m$.
Suppose for the sake of contradiction that there are at least $5$ such runs.
Note that each such run fully contains one of the fragments $T[i-3m+1+t \dd i+t]$ for $t \in \{0,m,2m,3m\}$.
By the pigeonhole principle, one of these four fragments is contained in at least two runs, say $R_1$ and $R_2$.
In particular, the overlap of these runs is at least $3 m \geq \per(R_1)+\per(R_2)$, which is a contradiction by the periodicity lemma.
 \end{proof}
We define a family of occurrences $\B=B_1,\ldots,B_d$ such that, for each square $U_i^2$, the set $B_i$ contains the leftmost and the rightmost occurrence of $U_i^2$ in every run. We call these \emph{boundary occurrences}. Boundary occurrences of squares have the following property.
 
\begin{lemma}\label{boundary}
$\|\B\|=\Oh(n \log n)$ and the set family $\B$ can be computed in $\Oh(n \log n)$ time.
\end{lemma}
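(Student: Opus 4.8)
The plan is to bound $\|\B\|$ and then give an algorithm matching that bound. First I would recall that, by definition, for each distinct square $U_i^2$ the set $B_i$ picks at most two occurrences per run that induces an occurrence of $U_i^2$: the leftmost and rightmost one inside that run. Hence $\|\B\| \le 2 \sum_{R} (\text{number of distinct squares induced by } R)$, where the sum is over all runs $R$ in $T$. A run $R$ with period $p=\per(R)$ induces exactly the squares $(R[1\dd p])^2$, $(R[1\dd p])$ rotated, i.e.\ for each of the $\Oh(\exp(R)\cdot p)$ starting positions modulo issues; more precisely, the distinct squares induced by $R$ have lengths $2p, 2p+2, \ldots$ up to roughly $|R|$ only if they are highly periodic — actually the squares induced by $R$ all have the form $W^2$ where $W$ is a fragment of $R$ of length a multiple of $p$, and $W^2 \subseteq R$. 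The number of distinct such squares induced by $R$ is $\Oh(|R|/p \cdot p) = \Oh(|R|)$ naively, but the key point is that a square $W^2$ with $|W|$ a multiple of $p$ is highly periodic precisely when $|W| \ge 2p$; and for $|W|=p$ (the "primitive" squares of the run) there are $\Oh(p)$ distinct ones but only $\Oh(1)$ per starting residue. The cleaner accounting: the distinct squares induced by $R$ and having the \emph{shortest period exactly $p$} number $\Oh(|R|)$ in total? No — I would instead split by whether the induced square is highly periodic.

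The actual bound comes from combining two facts already in the excerpt. A square $U^2$ induced by run $R$ with $\per(R)=p$ satisfies $\per(U^2)=p$ as well (shortest periods agree by the definition of "induced"), so $|U| \ge p$, meaning $|U^2| \ge 2p$, so $U^2$ is highly periodic. Therefore \emph{only highly periodic runs induce squares}, and a highly periodic run $R$ induces $\Oh(\exp(R))$ distinct squares — one can see that the distinct squares induced by $R$ are exactly $\{(R[1+t \dd t+kp])^2 : k \ge 1,\ kp \le \exp(R) p /2\}$ ranging over $t \in \{0,\ldots,p-1\}$, but as \emph{distinct strings} (not occurrences) there are only $\Oh(\exp(R))$ of them up to length considerations — actually there are exactly $\lfloor \exp(R)/2 \rfloor$ distinct squares of each "phase", but distinctness collapses phases... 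I would just bound the number of distinct squares induced by $R$ by $\Oh(|R|)$ crudely and then invoke Fact~\ref{sum}: $\sum_{R \text{ highly periodic}} |R| = \Oh(n\log n)$. Since $B_i$ contributes at most $2$ occurrences per inducing run, and each (run, induced distinct square) pair is charged at most twice, $\|\B\| \le 2 \sum_{R \text{ h.p.}} \#\{\text{distinct squares induced by } R\} \le 2\sum_{R \text{ h.p.}} \Oh(|R|/\per(R)) \cdot \Oh(\per(R))$, and more carefully $\#\{\text{distinct squares induced by }R\} = \Oh(|R|)$ — but I must be careful this is not wasteful; actually the sharper statement is that $R$ induces $\Oh(\exp(R)\cdot \per(R)) = \Oh(|R|)$ \emph{occurrences} but only $\Oh(|R|)$ distinct squares too in the worst case, so the bound $\Oh(n\log n)$ follows directly from Fact~\ref{sum}.

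For the algorithmic part: I would first compute all runs in $\Oh(n)$ time, discard those that are not highly periodic, and for each highly periodic run $R$ with Lyndon root $L$ and period $p$, enumerate all distinct squares it induces together with their leftmost and rightmost occurrences inside $R$. Concretely, for each residue $t \in \{0,\ldots,p-1\}$ such that the phase-$t$ copies fit, and each valid half-length $\ell = kp$, the square $(R[a\dd a+\ell-1])^2$ occurs at positions $a, a+1, \ldots$ within a contiguous range of $R$; the leftmost and rightmost are read off in $\Oh(1)$ each. To deduplicate across runs and within a run, I would identify each square canonically by its Lyndon root $L$, rank $r$, and the $(a,b)$ offsets as in the earlier periodicity machinery (Lemma~\ref{lem:runS}, Fact~\ref{fact:4cases}), and use a hash table or radix sort keyed on this canonical form to keep, per distinct square, the global leftmost and rightmost occurrence per run. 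The total work is proportional to $\sum_{R \text{ h.p.}} \Oh(|R|) = \Oh(n\log n)$ by Fact~\ref{sum}, plus $\Oh(n\log n)$ for sorting.

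The main obstacle I expect is the precise counting argument that a single run $R$ induces only $\Oh(|R|)$ — and more importantly only $\Oh(|R|/\per(R)\cdot \per(R))$ in a way that, when summed over highly periodic runs, does not blow past $n\log n$. One has to verify that (i) squares of two different runs with the same Lyndon root can coincide as strings, so deduplication is genuinely needed, and (ii) enumerating "leftmost and rightmost occurrence in every run" for a fixed square only costs a constant per (square, run) pair, for which I would lean on Lemma~\ref{lem:runS} to jump from an arbitrary occurrence to its run in $\Oh(1)$, and on the observation that within one run the occurrences of a fixed induced square form an arithmetic progression with difference $\per(R)$ (or the set of all offsets in a range), whose endpoints are computed by elementary arithmetic. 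Once these are in place, the size bound $\|\B\|=\Oh(n\log n)$ and the $\Oh(n\log n)$ construction time both fall out of Fact~\ref{sum}.
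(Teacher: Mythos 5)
There is a genuine gap at the pivotal step of your argument. You write that a square $U^2$ induced by a run $R$ with $\per(R)=p$ satisfies $|U|\ge p$, hence $|U^2|\ge 2p$, ``so $U^2$ is highly periodic,'' and conclude that only highly periodic runs induce squares. This is false on both counts. The paper's notion of highly periodic is $\per(S)\le\frac14|S|$, so $|U^2|\ge 2p$ only gives $\per(U^2)\le\frac12|U^2|$ (i.e.\ $U^2$ is periodic, which every square trivially is), not $\per(U^2)\le\frac14|U^2|$. A primitively rooted square ($|U|=p$) has $\per(U^2)=\frac12|U^2|$ and is not highly periodic. Correspondingly, the run inducing it need not be highly periodic: $T=\texttt{abab}$ is a single run of period $2$ and exponent $2$, it induces the square $\texttt{abab}$, and it is not highly periodic. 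Because of this, your plan to bound $\|\B\|$ by $\sum_{R\text{ h.p.}}\Oh(|R|)$ via Fact~\ref{sum} does not account for the boundary occurrences living in runs of exponent in $[2,4)$, and if you instead summed $|R|$ over \emph{all} runs the bound would fail --- the paper explicitly notes (citing the Glen--Simpson example) that Fact~\ref{sum} is false for the full set of runs.

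What the paper does, and what your proposal is missing, is a split into primitively rooted squares (p-squares) and non-primitively rooted squares (np-squares). For p-squares one cannot argue via run lengths at all; instead the paper invokes the classical fact that a string of length $n$ has $\Oh(n\log n)$ \emph{occurrences} of primitively rooted squares, and boundary occurrences are a subset of these. For np-squares, the root $U=W^k$ with $k\ge2$ forces the inducing run to have exponent at least $4$, hence to be highly periodic; only then does the accounting you sketch (leftmost/rightmost occurrences confined to the first/last $\per(R)$ positions, at most $\exp(R)$ per position, total $\sum_{R\text{ h.p.}}\per(R)\cdot\exp(R)=\sum_{R\text{ h.p.}}|R|$) apply, and Fact~\ref{sum} closes the bound. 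Your algorithmic sketch has the same omission: enumerating only over highly periodic runs would silently drop boundary occurrences of p-squares induced by runs of exponent less than $4$. To repair the proof you would need to add the p-square/np-square case split and import the $\Oh(n\log n)$ bound on p-square occurrences.
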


\begin{proof}
Let us define the \emph{root} of a square $U^2$ to be $U$.
A square is primitively rooted if its root is a primitive string.
Let \emph{p-squares} be primitively rooted squares, \emph{np-squares} be the remaining ones. 
The number of occurrences of p-squares in a string of length $n$ is $\Oh(n \log n)$ and they can all be computed in $\Oh(n \log n)$ time; see~\cite{DBLP:journals/ipl/Crochemore81,DBLP:journals/tcs/StoyeG02}.

We now proceed to np-squares.
 Note that for any highly periodic run $R$, the leftmost occurrence of each np-square induced by $R$ starts in one of the first $\per(R)$ positions of $R$; a symmetric property holds for rightmost occurrences and last $\per(R)$ positions.
In addition, it can be readily verified that such a position is the starting (resp.~ending) position of at most $\exp(R)$ squares induced by $R$.
It thus suffices to bound the sum of $\exp(R)\cdot \per(R)$ over all highly periodic runs $R$.
The fact that $\exp(R)\cdot \per(R)=|R|$ concludes the proof of the combinatorial part by 
Fact~\ref{sum}.

For the algorithmic part, it suffices to iterate over the $\cO(n)$ runs of $T$.
\end{proof}

\begin{lemma}\label{lem:aperc}
If $T[i \dd j]$ is non-periodic, $\CD(i,j) = \CD_\B(i,j)$.
\end{lemma}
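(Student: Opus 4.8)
The plan is to show that when $T[i\dd j]$ is non-periodic (i.e., aperiodic), every square occurring in $T[i\dd j]$ already has a boundary occurrence inside $T[i\dd j]$, so that $\CD_\B(i,j)$ counts exactly the same set of distinct squares as $\CD(i,j)$. Since $\B$ records, for each square and each run, the leftmost and rightmost occurrence of that square in the run, it suffices to argue that for every square $U^2$ with an occurrence in $T[i\dd j]$, at least one of its occurrences inside $T[i\dd j]$ is extremal (leftmost or rightmost) within the run that induces it.

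First I would fix a square $U^2$ with an occurrence $T[a\dd b]\subseteq T[i\dd j]$, and let $R=\run(U)$ be the run inducing this occurrence (recall every occurrence of a square is induced by exactly one run, whose period equals $\per(U)$). Suppose for contradiction that the occurrence $T[a\dd b]$ is neither the leftmost nor the rightmost occurrence of $U^2$ in $R$ that lies inside $T[i\dd j]$ — more precisely, suppose no occurrence of $U^2$ inside $T[i\dd j]$ is extremal in $R$. Then $R$ has an occurrence of $U^2$ strictly to the left of $T[a\dd b]$ and one strictly to the right, all within $R$ but the extremal ones lying outside $T[i\dd j]$. Shifting an occurrence of $U^2$ within $R$ by $\per(R)=\per(U)$ keeps it an occurrence, so the leftmost occurrence of $U^2$ in $R\cap T[i\dd j]$ starts within the first $\per(U)$ positions of $R\cap T[i\dd j]$ and, critically, if it is not the global leftmost occurrence in $R$, then $R$ extends at least $\per(U)$ positions to the left of $i$; symmetrically $R$ extends at least $\per(U)$ to the right of $j$. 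I would then conclude that $T[i\dd j]$ is contained in $R$ with $R$ sticking out by at least $\per(U)$ on each side, which forces $\per(U)=\per(R)$ to be a period of $T[i\dd j]$ with $2\per(U)\le |U^2|\le j-i+1$; but also the surrounding copies give $|R\cap T[i\dd j]| = j-i+1 \ge |U^2|+0$, and more carefully the two flanking shifted occurrences certify that $T[i\dd j]$ has period $\per(U)\le \tfrac12(j-i+1)$, i.e. $T[i\dd j]$ is periodic — contradicting the hypothesis.

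The main obstacle, and the step I would be most careful about, is the exact bookkeeping in the previous paragraph: I need to make sure that "the occurrence of $U^2$ in $T[i\dd j]$ is non-extremal in $R$" genuinely forces $R$ to overhang $T[i\dd j]$ by a full period on both sides, rather than just by one position. The cleaner way is this: if the leftmost occurrence of $U^2$ inside $T[i\dd j]$ is not the leftmost in $R$, then by periodicity there is another occurrence of $U^2$ in $R$ exactly $\per(U)$ positions to its left, which must start before position $i$; hence $R$'s left endpoint is at most $i-\per(U)$ (since that earlier occurrence, having length $\ge 2\per(U)$, lies inside $R$). Symmetrically $R$'s right endpoint is at least $j+\per(U)$. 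Therefore $T[i-\per(U)\dd j+\per(U)]\subseteq R$, so $\per(R)=\per(U)$ is a period of this longer fragment, and in particular $T[i]=T[i+\per(U)]=\cdots$ throughout $T[i\dd j]$; since $\per(U)\le\tfrac12|U^2|\le\tfrac12(j-i+1)$, the fragment $T[i\dd j]$ is periodic, a contradiction. This gives $\CD(i,j)=\CD_\B(i,j)$.

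One remaining routine point to dispatch: a boundary occurrence counted by $\CD_\B(i,j)$ must itself lie inside $T[i\dd j]$, which it does by definition of $\CD_\B$, and distinct boundary occurrences of the same square correspond to the same dictionary pattern, so no square is over-counted; thus the equality of the two counts follows from the equality of the underlying sets of distinct squares.
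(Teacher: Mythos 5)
Your argument is the paper's proof stated in contrapositive form: the paper observes directly that if $T[i\dd j]$ is non-periodic then the run $R$ inducing an occurrence of $U^2\subseteq T[i\dd j]$ cannot contain $T[i\dd j]$ (since $\per(R)=\per(U)\le\tfrac12|U^2|\le\tfrac12(j-i+1)$ would otherwise make $T[i\dd j]$ periodic), so one endpoint of $R$ lies strictly inside $T[i\dd j]$ and the corresponding extremal occurrence of $U^2$ is contained in $T[i\dd j]$; you instead assume no extremal occurrence is inside and derive that $T[i\dd j]\subseteq R$, hence is periodic. One small overclaim to fix: from ``the occurrence at $c-\per(U)$ starts before position $i$'' you may only conclude that $R$'s left endpoint is $\le i-1$, not $\le i-\per(U)$ (the latter would require $c=i$, which need not hold) --- but $T[i\dd j]\subseteq R$ already yields $\per(R)$ as a period of $T[i\dd j]$ of length $\le\tfrac12(j-i+1)$, so the stronger containment $T[i-\per(U)\dd j+\per(U)]\subseteq R$ is unnecessary and your conclusion stands.
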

\begin{proof}
Let us consider an occurrence of a square $U^2$ inside $T[i \dd j]$. Let $R$ be the run that induces this occurrence. By the assumption of the lemma, $R$ does not contain $T[i \dd j]$. Then at least one of the boundary occurrences of $U^2$ in $R$ is contained in $T[i \dd j]$.
\end{proof}

For a periodic fragment $F$ of $T$, by $\RunSquares(F)$ we denote the number of distinct squares that are induced by $F$ (being a run if interpreted as a standalone string). 
The value $\RunSquares(F)$ can be computed in $\Oh(1)$ time, as it was shown in e.g.~\cite{DBLP:journals/tcs/CrochemoreIKRRW14}.

Let $F_1$ be a prefix and $F_2$ be a suffix of a periodic fragment $F$, 
such that each of $F_1$ and $F_2$ is of length at most $\per(F)$ -- and hence they are disjoint. By $\Sql(F,F_1,F_2)$ (``bounded squares'') we denote the number of distinct squares induced by $F$ which have an occurrence starting in $F_1$ or ending in $F_2$.

\begin{lemma}\label{lem:SQL}
Given $\per(F)$, the $\Sql(F,F_1,F_2)$ queries can be answered in $\Oh(1)$ time.
\end{lemma}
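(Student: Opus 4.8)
The goal is to compute $\Sql(F,F_1,F_2)$, the number of distinct squares induced by the run $F$ that have an occurrence starting in the length-$\le\per(F)$ prefix $F_1$ or ending in the length-$\le\per(F)$ suffix $F_2$, in $\Oh(1)$ time given $\per(F)$. The plan is to reduce this to counting, via inclusion--exclusion, the squares induced by $F$ that have an occurrence starting in $F_1$ (call this $A$), the squares with an occurrence ending in $F_2$ (call this $B$), and the squares counted in both (call this $C$), so that $\Sql(F,F_1,F_2)=A+B-C$. Since each of $F_1,F_2$ has length at most $p:=\per(F)$, the squares induced by $F$ with an occurrence starting in $F_1$ are precisely those of small root length, and the key observation is that a square $U^2$ with $|U|=q$ (where $p\mid q$, since the root of a square induced by $F$ must have period $p$, hence length a multiple of $p$) induced by $F$ has its leftmost occurrence starting at a fixed offset determined only by $p$ and $q$ (and the Lyndon-root structure of $F$), and occurs at every subsequent offset by $p$ within the run. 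So the squares induced by $F$ are parameterised by a root length $q\in\{p,2p,3p,\dots\}$ with $2q\le|F|$; each is induced iff its leftmost occurrence fits inside $F$.

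\textbf{Key steps.} First, I would recall/establish that a square induced by $F$ has root length $q=kp$ for some positive integer $k$, and that for each such $k$ there is at most one induced square: its root is the length-$kp$ fragment starting at the canonical offset (the start of the first full Lyndon-root period inside $F$), and it is induced iff $2kp\le|F|-(\text{that offset})+1$, i.e.\ iff $k\le k_{\max}$ for some $k_{\max}$ computable in $\Oh(1)$ from $|F|$, $p$ and the offset of $F$'s Lyndon root --- this is exactly the content behind $\RunSquares(F)=k_{\max}$ being $\Oh(1)$-computable (\cite{DBLP:journals/tcs/CrochemoreIKRRW14}). Second, for $A$: a square with root length $kp$ has an occurrence starting in $F_1$ iff its leftmost occurrence (which starts at some offset $s_k<p$ inside $F$, hence within $F_1$ whenever $|F_1|>s_k$) starts at position $\le |F_1|$; since $s_k$ cycles through at most a few residues as $k$ varies (in fact it is essentially constant or takes two values depending on whether $kp$ wraps the Lyndon root), $A$ reduces to counting the $k\in[1,k_{\max}]$ satisfying an inequality of the form $s_k\le |F_1|$, which is an arithmetic-progression/interval count solvable in $\Oh(1)$. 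Third, $B$ is symmetric (leftmost occurrence ends at position $2kp+s_k-1$ from the left end, and we ask whether some occurrence ends within the last $|F_2|$ positions --- equivalently the rightmost occurrence, shifted appropriately, ends at position $>|F|-|F_2|$). Fourth, $C$: a square with root length $kp$ has both an occurrence starting in $F_1$ and one ending in $F_2$; since $|F_1|,|F_2|\le p$ and the occurrences are spaced by $p$, this happens iff the leftmost occurrence starts in $F_1$ \emph{and} the rightmost occurrence ends in $F_2$, again an $\Oh(1)$-testable conjunction of arithmetic conditions on $k$, so $C$ is an interval count. Finally, combine: $\Sql(F,F_1,F_2)=A+B-C$, all terms computed in $\Oh(1)$.

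\textbf{Main obstacle.} The delicate part is handling the Lyndon-root wrap-around carefully: $F$, as $(L,r,a,b)$, has its "phase" shifted by $a$, so the starting offset $s_k$ of the leftmost length-$kp$-rooted induced square is not simply $0$ but depends on $a$ and on whether one insists the root be a specific rotation; different values of $k$ may force the square's leftmost occurrence to start one period later. I would pin down $s_k$ explicitly (it will be a fixed value $s_0\in\{0,1,\dots,p-1\}$ independent of $k$, or at most $s_0$ and $s_0+p$ in boundary cases depending on $a,b$), verify that occurrences of the $kp$-rooted induced square inside $F$ are exactly those starting at $s_0, s_0+p, s_0+2p, \dots$ up to the last that fits, and then every count above becomes "number of integers $k$ in $[1,k_{\max}]$ with $(\text{linear inequality in }k)$", which is $\Oh(1)$. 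Care is also needed that $F_1$ and $F_2$ being of length \emph{at most} $p$ (possibly strictly less) does not create off-by-one errors in deciding membership of the first/last occurrence; but once $s_k$ is fixed this is routine.

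\begin{proof}
Write $p=\per(F)$ and $L=|F|$. Every square induced by $F$ has a root of length $q$ with $p\mid q$ and $2q\le L$, and for each admissible $q=kp$ there is exactly one such square: fixing the Lyndon-root phase of $F$, the induced square with root length $kp$ has leftmost occurrence starting at a position $s\in\{1,\dots,p\}$ that depends only on $p$ and $F$'s Lyndon-root data (in particular not on $k$), and it occurs in $F$ precisely at the positions $s, s+p, s+2p, \dots$ that leave room for $2kp$ letters, i.e.\ at all $s+tp$ with $s+tp+2kp-1\le L$. Hence the square with root length $kp$ is induced by $F$ iff $s+2kp-1\le L$, which holds iff $k\le k_{\max}:=\floor{(L-s+1)/(2p)}$; this is the $\Oh(1)$-computable quantity $\RunSquares(F)$, and we obtain $s$ and $k_{\max}$ in $\Oh(1)$ from $\per(F)$ and an $\Oh(n)$-time preprocessing for Lyndon roots of fragments (Lemma~\ref{lem:runS}, \cite{DBLP:conf/cpm/Kociumaka16}).

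Let $f=|F_1|\le p$ and $g=|F_2|\le p$. The square with root length $kp$ (for $k\le k_{\max}$) has an occurrence starting in $F_1$ iff its leftmost occurrence does, i.e.\ iff $s\le f$; note this condition is independent of $k$, so
\[
A := \#\{k\in[1,k_{\max}] : \exists\text{ occurrence starting in }F_1\} = \begin{cases}k_{\max}&\text{if }s\le f,\\0&\text{otherwise.}\end{cases}
\]
Symmetrically, this square has an occurrence ending in $F_2$ iff its rightmost occurrence (which ends at position $L-e_k$ from the left for some $e_k<p$ depending only on $p$ and $F$, by the mirror argument) ends within the last $g$ positions, i.e.\ iff $e_k<g$; again independent of $k$ once $k\le k_{\max}$, so
\[
B := \#\{k\in[1,k_{\max}] : \exists\text{ occurrence ending in }F_2\} = \begin{cases}k_{\max}&\text{if }e<g,\\0&\text{otherwise,}\end{cases}
\]
where $e:=e_k$. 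Finally, since $f,g\le p$ and consecutive occurrences are spaced by $p$, a square of root length $kp$ has an occurrence starting in $F_1$ \emph{and} one ending in $F_2$ iff it has \emph{some} occurrence entirely within $F$ (always true for $k\le k_{\max}$) with start $\le f$ and some occurrence with end $>L-g$; by the same phase analysis this is the conjunction $s\le f$ and $e<g$, independent of $k$. Thus $C := \#\{k\in[1,k_{\max}]: \text{both}\} = k_{\max}$ if $s\le f$ and $e<g$, and $0$ otherwise. By inclusion--exclusion,
\[
\Sql(F,F_1,F_2) = A + B - C,
\]
and each of $A,B,C$ is obtained in $\Oh(1)$ time from $s$, $e$, $k_{\max}$, $f$, $g$, all of which are available in $\Oh(1)$ time given $\per(F)$ after linear-time preprocessing. (When the Lyndon root wraps, $s$ and $e$ may each take one of two values as $k$ crosses a threshold $k^\star\le k_{\max}$; in that case split $[1,k_{\max}]$ at $k^\star$ into a constant number of sub-intervals and apply the above to each, still $\Oh(1)$ overall.)
\end{proof}
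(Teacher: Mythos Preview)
There is a genuine gap: your structural claim ``for each admissible $q=kp$ there is exactly one such square'' is false, and the whole computation of $A$, $B$, $C$ rests on it. In a periodic fragment $F$ of period $p$ and length $L$, the distinct squares induced by $F$ are parameterised by \emph{two} quantities: the root length $kp$ \emph{and} the starting position modulo $p$. For any $i\in[1,p]$, the square $F[i\dd i+2kp-1]$ (when it fits) has as root a specific rotation of $F[1\dd p]^k$, and different $i$ give different rotations, hence different squares. Thus for a fixed $k$ there are up to $p$ distinct induced squares of root length $kp$, not one; concretely, the number of distinct induced squares starting at position $i$ is $\lfloor(L-i+1)/(2p)\rfloor$, which genuinely depends on $i$. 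Your formula $A\in\{0,k_{\max}\}$ therefore undercounts badly (and so does your $\RunSquares(F)=k_{\max}$).

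The paper's argument avoids this by working position-wise rather than root-length-wise. It first reduces the ``ending in $F_2$'' condition to a ``starting in some subset of $[1,p]$'' condition via the bijection between squares starting at $q\in[1,p]$ and squares ending at $q'\in[L-p+1,L]$ when $q\equiv q'+1\pmod p$; this turns $\Sql(F,F_1,F_2)$ into at most two instances of $\Sql'(F,F_1'):=\Sql(F,F_1',\varepsilon)$. Then it evaluates $\Sql'(F,F_1')=\sum_{i=1}^{|F_1'|}\lfloor(L-i+1)/(2p)\rfloor$ in closed form. To repair your approach you would need to sum over the $\le p$ rotations rather than over root lengths, at which point you essentially recover the paper's computation.
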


\begin{proof}
We are to count distinct squares induced by $F$ that start in $F_1$ or end in $F_2$.

We introduce an easier version of $\Sql$ queries. Let $\Sql'(F,F_1)=\Sql(F,F_1,\varepsilon)$ 
be the number 
of squares induced by $F$ which start in its prefix $F_1$ of length at most $p:= \per(F)$.

\subparagraph{Reduction of $\Sql$ to  $\Sql'$.}
First, observe that the set of squares induced by $F$ starting at some position $q\in [1,p]$ and the set of squares induced by $F$ ending at some position $q'\in [|F|-p+1,|F|]$ are equal if $q \equiv q'+1 \pmod{p}$ and disjoint otherwise.
Also note that $F_2=UV$ for some prefix $V$ and some suffix $U$ of $F[p]F[1 \dd p-1]$; we consider this rotation of $F[1 \dd p]$ to offset the $+1$ factor in the above modular equation.
Let $|U|=a$ and $|V|=b$.

Then, by the aforementioned observation, we are to count distinct squares that start in some position in the set $[1, |F_1|] \cup [1,b] \cup [p-a+1,p]$; see Figure~\ref{fig:lem27}.

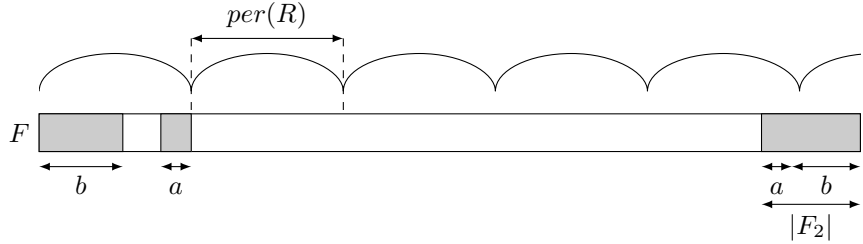
\begin{figure}[h!]
    \centering
    \begin{tikzpicture}
    
      \draw (0,0) node [left, yshift=0.25cm] {$F$} rectangle (10.8,0.5) ;
    
      \draw [fill=white!80!black] (0, 0) 
        rectangle (1.1,0.5);
      \draw [fill=white!80!black] (1.6, 0) rectangle (2.0,0.5);
      \draw [fill=white!80!black] (9.5, 0) rectangle (10.8,0.5);
    
      \draw [latex-latex] (0,-0.2)--(1.1,-0.2) node[midway,below] {$\vphantom{ab}b$}; 
      \draw [latex-latex] (1.6,-0.2)--(2.0,-0.2) node[midway, below] {$\vphantom{ab}a$}; 
      \draw [latex-latex] (9.5,-0.2)--(9.9,-0.2) node[midway, below] {$\vphantom{ab}a$}; 
      \draw [latex-latex] (9.9,-0.2)--(10.8,-0.2) node[midway, below] {$\vphantom{ab}b$}; 
      \draw [latex-latex] (9.5,-0.7)--(10.8,-0.7) node[midway, below] {$|F_2|$}; 
    
      \begin{scope}
      \clip (0.0,0) rectangle (10.8,2);
        \draw[snake=bumps,segment length=4cm, segment amplitude=0.5cm, line after snake=0pt, segment aspect=0] 
        (0.0, 0.8) -- +(15.0, 0); 
      \end{scope}
    
      \draw[latex-latex] (2.0, 1.5) -- +(2.0,0) node [midway, above] {$per(R)$};
      \draw[dashed] (2.0, 1.5)--+(0, -1.0);
      \draw[dashed] (4.0, 1.5)--+(0, -1.0);

    \end{tikzpicture}
    \caption{
    Reduction of $\Sql$ to  $\Sql'$; the case that $|F_1|\leq b$.
    }\label{fig:lem27}
\end{figure}

Hence the computation of $\Sql(F,F_1,F_2)$ is reduced to at most two instances of 
the special case when $F_2$ is the empty string.

\subparagraph{Computation of $\Sql'(F,F_1)$.}
The number of squares induced by $F$ starting at $F[i]$ is equal to $\lfloor{(|F|-i+1)/(2p)}\rfloor$.
Consequently,
$\Sql'(F,F_1)=\sum_{i=1}^{|F_1|}\lfloor{(|F|-i+1)/(2p)}\rfloor=  |F_1|\cdot t - \max\{0, |F_1|-k-1\}$,
where $t=\lfloor |F|/(2p) \rfloor$ and $k=|F| \bmod (2p)$.
\end{proof}

\begin{lemma}\label{lem:perc}
Assume that $F=T[i \dd j]$ is periodic and $R=T[a \dd b]=\run(T[i \dd j])$. Let $F_1=T[i \dd a+p-1]$ and $F_2 = T[b-p+1 \dd j]$, where $\per(R)=p$. Then:
\begin{equation}\label{formula}
\CD(i,j) = \CD_\B(i,j) + \RunSquares(F) - \Sql(F,F_1,F_2).
\end{equation}
\end{lemma}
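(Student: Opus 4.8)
The plan is to split the distinct squares occurring in $T[i\dd j]$ according to whether or not they have a \emph{boundary} occurrence contained in $T[i\dd j]$. Since a boundary occurrence contained in $T[i\dd j]$ is in particular an occurrence in $T[i\dd j]$, these two classes are disjoint and together they exhaust all distinct squares occurring in $T[i\dd j]$; by definition the first class has size $\CD_\B(i,j)$, so $\CD(i,j)=\CD_\B(i,j)+M$, where $M$ is the number of ``missed'' squares, i.e.\ squares that occur in $T[i\dd j]$ but have no boundary occurrence inside $T[i\dd j]$. It then remains to prove that $M=\RunSquares(F)-\Sql(F,F_1,F_2)$.

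First I would pin down which runs can induce an occurrence of a missed square inside $F$. Imitating the proof of~\cref{lem:aperc}: if $U^2$ has an occurrence inside $F$ induced by a run $R'$ that does not contain $F$, then, because that occurrence starts at a position $\ge i$ and ends at a position $\le j$ while $R'$ extends past one of $i,j$, the leftmost or the rightmost occurrence of $U^2$ in $R'$ is contained in $F$; hence $U^2$ is not missed. So every run inducing an occurrence of a missed square inside $F$ must properly contain $F$.

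The key structural step — which I expect to be the main obstacle — is: if a run $R'$ contains $F$ and induces an occurrence of a square $W^2$ inside $F$, then $R'=R$ (the run $\run(F)$ from the statement) and $\per(W)=p$. To prove it, set $q=\per(R')=\per(W)$; from $W^2\subseteq F$ we get $|F|\ge|W^2|\ge 2q$, and since $R'\supseteq F$ forces $q$ to be a period of $F$, we have $q\ge\per(F)=p$, hence $|F|\ge p+q$. If $q>p$, then since $|F|\ge p+q$, for every position $k$ of $R'$ one can pick a position $k'$ inside $F$ with $k'\equiv k\pmod q$ and with $k'+p$ still inside $F$; using that $R'$ has period $q$ and $F$ has period $p$ one gets $R'[k]=R'[k']=R'[k'+p]=R'[k+p]$, so $R'$ has period $p<q$, contradicting $\per(R')=q$. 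Therefore $q=p$; since two distinct runs with equal period overlap in fewer than that many positions, while $F$ lies in both $R$ and $R'$ and $|F|\ge 2p$, we conclude $R'=R$. Consequently every missed square is induced by $F$, and in fact the missed squares are exactly the squares induced by $F$ that have no boundary occurrence inside $F$.

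It remains to count the squares induced by $F$ that \emph{do} have a boundary occurrence inside $F$ and to see that there are $\Sql(F,F_1,F_2)$ of them. Since distinct runs with period $p$ overlap in fewer than $p$ positions while $|F|\ge 2p$, any such boundary occurrence must be a leftmost or rightmost occurrence in $R$ itself. The occurrences in $R$ of a fixed square with root period $p$ form an arithmetic progression of step $p$, so the leftmost one starts within the first $p$ positions of $R$; it lies inside $F$ iff it starts at a position $\ge i$, i.e.\ iff it starts in $F_1=T[i\dd a+p-1]$, which happens iff the square has some occurrence starting in $F_1$ — and symmetrically the rightmost occurrence in $R$ lies inside $F$ iff the square has an occurrence ending in $F_2=T[b-p+1\dd j]$. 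Hence the squares induced by $F$ with a boundary occurrence inside $F$ are precisely those with an occurrence starting in $F_1$ or ending in $F_2$, numbering $\Sql(F,F_1,F_2)$. Subtracting these from the $\RunSquares(F)$ squares induced by $F$ gives $M=\RunSquares(F)-\Sql(F,F_1,F_2)$, which together with $\CD(i,j)=\CD_\B(i,j)+M$ yields the claimed identity.
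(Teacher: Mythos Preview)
Your proof is correct and follows essentially the same inclusion--exclusion idea as the paper's proof: both show that the squares occurring in $F$ are exactly those counted by $\CD_\B(i,j)$ together with those induced by $F$, and that the overlap is precisely $\Sql(F,F_1,F_2)$. Your write-up is considerably more detailed than the paper's terse argument, in particular spelling out why any run that contains $F$ and induces a square inside $F$ must equal $R$, and why a boundary occurrence in $F$ of a period-$p$ square must be a leftmost/rightmost occurrence in $R$ starting in $F_1$ or ending in $F_2$.
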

\begin{proof}
In the sum $\CD_\B(i,j) + \RunSquares(F)$, all squares are counted once except for squares whose boundary occurrences are induced by $R$, which are counted twice. They are exactly counted in the term $\Sql(F,F_1,F_2)$; see~\cref{fig:tworanges}.
\end{proof}

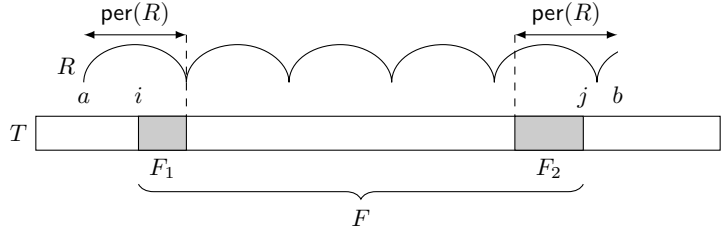
\begin{figure}[h!]
\begin{center}
\begin{tikzpicture}[scale=0.9, transform shape]

\draw (0,0) node [left, yshift=0.25cm] {$T$} rectangle (10,0.5) ;

\draw [fill=white!80!black] (1.5, 0) node[above, yshift=0.5cm] {$\vphantom{aijb}i$} 
  rectangle (2.2,0.5) node [midway, below, yshift=-0.25cm] {$F_1$};

\draw [fill=white!80!black] (7, 0) rectangle (8,0.5) 
  node[above] {$\vphantom{aijb}j$}
  node [midway, below, yshift=-0.25cm] {$F_2$};

\draw[snake=brace, segment amplitude=0.2cm] (8,-0.5)--(1.5,-0.5);
\draw (4.75,-0.75) node[below] {$F$};

\begin{scope}
\clip (0.7,0) rectangle (8.5,2);
  \draw[snake=bumps,segment length=2.7cm, segment amplitude=0.5cm, line after snake=0pt, segment aspect=0] 
  (0.7, 1) -- +(12, 0); 
\end{scope}

  \node [left] at (0.7,1.25) {$R$}; 
  
  \node [above] at (0.7,0.5) {$\vphantom{aijb}a$}; 
  \node [above] at (8.5,0.5) {$\vphantom{aijb}b$}; 

  \draw[latex-latex] (0.7, 1.7) -- +(1.5,0) node [midway, above] {$\per(R)$};
  \draw[dashed] (2.2, 1.7)--+(0, -1.2);
  \draw[latex-latex] (7, 1.7) -- +(1.5,0) node [midway, above] {$\per(R)$};
  \draw[dashed] (7, 1.7)--+(0, -1.2);
\end{tikzpicture}
\end{center}

\caption{The setting in~\cref{lem:perc}. Note that $F_1$ is empty if $i\ge a+\per(R)$; similarly for $F_2$.}\label{fig:tworanges}
\end{figure}

\begin{theorem}
If $\D$ is the set of all square factors of $T$, then $\CD(i,j)$ queries can be answered in $\Oh(\log n)$ time using a data structure of size $\Oh(n \log^2 n)$ that can be constructed in $\Oh(n \log^2 n)$ time.
\end{theorem}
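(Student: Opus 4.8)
The plan is to combine all the machinery developed in this section into a single data structure. The starting point is the decomposition provided by \cref{lem:aperc} and \cref{lem:perc}: for an arbitrary query $\CD(i,j)$, we first test in $\Oh(1)$ time whether $T[i\dd j]$ is periodic (using the preprocessing of~\cite{tomeksthesis,DBLP:conf/soda/KociumakaRRW15}). If it is not periodic, then by \cref{lem:aperc} the answer equals $\CD_\B(i,j)$, which we can retrieve directly. If it is periodic, we compute $R=\run(T[i\dd j])$ and $p=\per(R)$ in $\Oh(1)$ time via \cref{lem:runS}, set $F=T[i\dd j]$, $F_1=T[i\dd a+p-1]$, $F_2=T[b-p+1\dd j]$ where $R=T[a\dd b]$, and evaluate the right-hand side of formula~\eqref{formula}: that is, $\CD_\B(i,j)+\RunSquares(F)-\Sql(F,F_1,F_2)$.

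Next I would account for the cost of each ingredient. The $\CD_\B$ queries are answered by instantiating \cref{lem:fCD} with the family $\B$ of boundary occurrences of squares; by \cref{boundary} we have $\|\B\|=\Oh(n\log n)$ and $\B$ is computable in $\Oh(n\log n)$ time, so the resulting structure has size $\Oh(n+\|\B\|\log n)=\Oh(n\log^2 n)$, is built in $\Oh(n\log^2 n)$ time, and answers each $\CD_\B(i,j)$ query in $\Oh(\log n)$ time. The term $\RunSquares(F)$ is computed in $\Oh(1)$ time (e.g.\ via~\cite{DBLP:journals/tcs/CrochemoreIKRRW14}), and the term $\Sql(F,F_1,F_2)$ in $\Oh(1)$ time by \cref{lem:SQL}, once $\per(F)$ is known --- which it is, being $\per(R)$. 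All the auxiliary preprocessing (runs via~\cite{KK:99}, Lyndon roots and $\run(\cdot)$ via \cref{lem:runS}, periodicity testing, the set of all squares as the dictionary, which has $d=\Oh(n)$) takes $\Oh(n)$ time and space, so it is dominated by the $\CD_\B$ structure. Summing up, the space is $\Oh(n\log^2 n)$, the construction time is $\Oh(n\log^2 n)$, and the query time is $\Oh(\log n)$.

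The one point that needs care --- and is the only real obstacle --- is making sure the decomposition in \cref{lem:perc} is applied with exactly the fragments it requires. In particular one must confirm that $F_1$ and $F_2$ as defined above are indeed a prefix and a suffix of $F$ of length at most $\per(F)$, so that \cref{lem:SQL} is applicable; this follows because $R$ is the run containing $F$ with period $p$, so $a\le i$ and $j\le b$, whence $|F_1|=a+p-1-i+1=p-(i-a)\le p$ and symmetrically $|F_2|\le p$, and they are disjoint when $|F|>2p$, which holds since $F$ is periodic (and when $|F|\le 2p$ one or both may be empty, which \cref{lem:SQL} handles via the $\max\{0,\cdot\}$ terms and the reduction to $\Sql'$). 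One should also double-check the boundary cases where $i<a+p$ fails (i.e.\ $F_1=\varepsilon$) or $j>b-p$ fails, noted in \cref{fig:tworanges}; in those cases the corresponding term simply vanishes and the formula still holds by the counting argument in the proof of \cref{lem:perc}. With these checks in place the theorem follows immediately by assembling \cref{lem:fCD,lem:aperc,lem:perc,lem:SQL,boundary}.

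\begin{proof}
We build the data structure of \cref{lem:fCD} for the family $\B$ of boundary occurrences of squares in $T$; by \cref{boundary}, $\|\B\|=\Oh(n\log n)$ and $\B$ can be computed in $\Oh(n\log n)$ time, so this structure occupies $\Oh(n+\|\B\|\log n)=\Oh(n\log^2 n)$ space, is constructed in $\Oh(n\log^2 n)$ time, and answers each $\CD_\B(i,j)$ query in $\Oh(\log n)$ time. In addition, we preprocess $T$ for constant-time computation of runs and their Lyndon roots (\cref{lem:runS}) and for constant-time periodicity testing of fragments~\cite{tomeksthesis,DBLP:conf/soda/KociumakaRRW15}; this takes $\Oh(n)$ time and space and is thus dominated by the above. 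Recall also that here $d=|\D|=\Oh(n)$.

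Upon a query $\CD(i,j)$, we first test whether $F:=T[i\dd j]$ is periodic. If not, then by \cref{lem:aperc} we have $\CD(i,j)=\CD_\B(i,j)$, which we return after one $\Oh(\log n)$-time query to the structure of \cref{lem:fCD}.

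If $F$ is periodic, we compute $R=T[a\dd b]=\run(F)$ and $p=\per(R)$ in $\Oh(1)$ time via \cref{lem:runS}, and set $F_1=T[i\dd a+p-1]$ and $F_2=T[b-p+1\dd j]$. Since $R$ contains $F$ and has period $p$, we have $a\le i\le j\le b$, so $|F_1|=p-(i-a)\le p$ and $|F_2|=p-(b-j)\le p$ (with $F_1$, resp.\ $F_2$, empty if $i\ge a+p$, resp.\ $j\le b-p$), and $F_1,F_2$ are disjoint as $|F|>2p$. By \cref{lem:perc},
\[
\CD(i,j)=\CD_\B(i,j)+\RunSquares(F)-\Sql(F,F_1,F_2).
\]
We evaluate $\CD_\B(i,j)$ in $\Oh(\log n)$ time, $\RunSquares(F)$ in $\Oh(1)$ time (e.g.~\cite{DBLP:journals/tcs/CrochemoreIKRRW14}), and $\Sql(F,F_1,F_2)$ in $\Oh(1)$ time by \cref{lem:SQL}, using $\per(F)=p$. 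Summing these three quantities gives the answer.

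In both cases the query time is $\Oh(\log n)$, and the total space and construction time are dominated by the structure of \cref{lem:fCD}, namely $\Oh(n\log^2 n)$.
\end{proof}
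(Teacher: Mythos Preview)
Your proposal is correct and follows essentially the same approach as the paper: build the $\CD_\B$ structure from \cref{lem:fCD} using \cref{boundary}, branch on periodicity of $T[i\dd j]$ via \cref{lem:runS}, and in the periodic case apply formula~\eqref{formula} together with \cref{lem:SQL} and the $\Oh(1)$-time computation of $\RunSquares$. The only (very minor) slip is the strict inequality ``$|F|>2p$'' in your disjointness check; periodicity only gives $|F|\ge 2p$, but $|F_1|+|F_2|\le 2p\le |F|$ already suffices for disjointness, so the argument goes through unchanged.
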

\begin{proof}
We precompute the set $\B$ in $\Oh(n \log n)$ time using Lemma~\ref{boundary} and perform $\Oh(n \log^2 n)$ time and space preprocessing for $\CD_\B(i,j)$ queries. 

In order to answer a $\CD(i,j)$ query, first we ask a $\run(T[i\dd j])$ query of Lemma~\ref{lem:runS} to check if $T[i \dd j]$ is periodic. 

We compute $\CD_\B(i,j)$ which takes $\Oh(\log n)$ time
due to Lemma~\ref{lem:fCD}. If $T[i \dd j]$ is non-periodic, then it is the final result
due to Lemma~\ref{lem:aperc}.

Otherwise $T[i \dd j]$ is periodic. Let $F,F_1,F_2$ be as in Lemma~\ref{lem:perc}.
We answer $\RunSquares(F)$ and $\Sql(F,F_1,F_2)$ queries in $\Oh(1)$ time using the algorithm from~\cite{DBLP:journals/tcs/CrochemoreIKRRW14} and Lemma~\ref{lem:SQL}, respectively. 
Finally, $\CD(i,j)$ is computed using~\eqref{formula}. 
\end{proof}

\section{Dynamic Dictionary}\label{sec:dynamic}

The general framework for dynamic dictionaries essentially consists in rebuilding a static data structure after every $k$ updates.
We return correct answers by performing individual queries for the patterns inserted or deleted from the dictionary since the data structure was built.
In particular, we show that an application of this framework --with some tweaks-- to the data structure of~\cref{sec:approx} yields the following.

\begin{theorem}\label{thm:dyn_approx}
For any $k\in [1, n]$, we can construct a data structure in $\cOtilde(n+d)$ time, which processes each update to the dictionary in $\cOtilde(n/k)$ time and answers $\CD(i,j)$ queries 2-approximately in $\cOtilde(k)$ time.
\end{theorem}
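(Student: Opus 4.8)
The plan is to instantiate the generic "rebuild after $k$ updates" framework on the 2-approximate static data structure of \cref{thm:approx}, and then argue that handling the $\cO(k)$ patterns that were inserted or deleted since the last rebuild costs only $\cOtilde(k)$ extra per query and only $\cOtilde(n/k)$ amortized per update. First I would fix a batch size $k$ and partition the update sequence into epochs of $k$ consecutive updates. At the start of each epoch we rebuild the static structure of \cref{thm:approx} on the current dictionary $\D$; by that theorem this costs $\cO(n\log^{1+\epsilon}n + d\log^{3/2}n) = \cOtilde(n+d)$ time, amortized over the $k$ updates of the epoch this is $\cOtilde((n+d)/k)$. To keep the per-update bound at $\cOtilde(n/k)$ independently of $d$, I would note (as is standard, cf.\ the discussion around \cref{thm:ext,main1}) that after the very first rebuild we may assume $d = \cO(n)$ up to a global re-indexing, or alternatively state the amortized bound as $\cOtilde((n+d)/k)$ and absorb the $d/k$ term; I will present it so that the stated $\cOtilde(n/k)$ holds, treating $d\le n$ as the relevant regime (for $d>n$ the naive $\cOtilde(d)$ per query from the introduction already beats $\cOtilde(k)$ unless $k$ is huge, and the trade-off degenerates). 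Between rebuilds we maintain a small buffer $\mathcal{I}$ of at most $k$ inserted patterns and a buffer $\mathcal{D}$ of at most $k$ deleted patterns, each stored as a fragment $T[a\dd b]$; an update just pushes onto the appropriate buffer and costs $\cO(1)$.

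The query algorithm for $\CD(i,j)$ then runs as follows. Let $\D_0$ be the dictionary at the last rebuild and $\D$ the current one, so $\D = (\D_0 \setminus \mathcal{D}) \cup \mathcal{I}$ with $|\mathcal{I}|,|\mathcal{D}| \le k$. I would first query the static structure to get a value $c_0$ with $\CD_{\D_0}(i,j) \le c_0 \le 2\,\CD_{\D_0}(i,j)$. The key step is the correction: for each pattern $P$ in $\mathcal{I}\cup\mathcal{D}$, decide whether $P$ occurs in $T[i\dd j]$ using a single Internal Pattern Matching existence query in $\cOtilde(1)$ time (\cite{DBLP:conf/soda/KociumakaRRW15}; equivalently a Bounded LCP query of \cref{lem:BRLCP} followed by a length check). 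This yields the exact quantities $a = |\{P\in\mathcal{I} : P\text{ occurs in }T[i\dd j]\}|$ and $b = |\{P\in\mathcal{D} : P\text{ occurs in }T[i\dd j], P\notin\D\}|$ — note we must check $P\notin\D$ for deleted patterns that were re-inserted, but since $|\mathcal{I}\cup\mathcal{D}|\le 2k$ this membership test is $\cO(k)$ by a direct scan (or $\cOtilde(1)$ with an auxiliary hash set keyed by $(a,b)$). The naive idea of returning $c_0 + a - b$ does not work, because subtracting $b$ from an \emph{approximate} count $c_0$ breaks the approximation guarantee. Instead I would separate the two buffers: handle deletions \emph{inside} the static query and handle insertions \emph{outside}. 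Concretely, build the static structure of \cref{thm:approx} so that the count it returns is over $\D_0$, but at query time exclude the deleted patterns; this is not directly supported, so the cleaner route is to observe that in \cref{thm:approx} the final answer is assembled from precomputed $\CD$ values on basic fragments plus a \textsc{3-Fragments-Counting} query, and none of these is subtraction-friendly either. The honest fix, and the one I would carry out, is: rebuild every $k$ updates as above, but \emph{never delete} — mark deleted patterns as "inactive" and rebuild the static structure counting only active patterns; since the buffer only ever grows by insertions between rebuilds, the correction term is purely additive. Then $\CD_{\D}(i,j) = \CD_{\D_0^{\mathrm{act}}}(i,j) + |\{P\in\mathcal{I}^{\mathrm{new}} : P\text{ occurs, } P\notin\D_0^{\mathrm{act}}\}|$, and returning $c_0$ plus this count gives a value sandwiched between $\CD_\D(i,j)$ and $2\,\CD_{\D_0^{\mathrm{act}}}(i,j) + |\mathcal{I}^{\mathrm{new}}| \le 2\,\CD_\D(i,j)$, since every counted new pattern does occur in $T[i\dd j]$. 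The query time is $\cOtilde(1) + \cOtilde(k) = \cOtilde(k)$.

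The main obstacle, and the point I would be most careful about, is exactly this interaction between approximation and deletion: the framework must be arranged so that the \emph{exact} per-pattern corrections are only ever added to, never subtracted from, the approximate static count, otherwise the 2-approximation is destroyed. Treating deletions via the "rebuild counts only active patterns, buffer insertions additively" trick resolves this, at the cost of a rebuild whenever the buffer of inactive patterns would grow stale — but since we rebuild every $k$ updates regardless, the number of pending insertions is always $\le k$ and the number of pending deletions is handled for free by the rebuild. A secondary, routine point is the preprocessing/amortization bookkeeping: the first build costs $\cOtilde(n+d)$, and each subsequent rebuild costs $\cOtilde(n + d)$ where $d$ is the current size; charging $\cOtilde((n+d)/k)$ to each update and noting $d = \cO(n + (\text{total insertions}))$ keeps the amortized update cost at $\cOtilde(n/k)$ under the standing assumption that the dictionary size stays $\cO(n)$ (which one may enforce, or simply state the bound as $\cOtilde((n+d)/k)$). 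Finally, deamortization of the rebuild, if desired, is standard (maintain two copies and migrate incrementally), but I would only remark on it rather than carry it out.
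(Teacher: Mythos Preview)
Your proposal correctly identifies the central difficulty—that subtracting a per-pattern correction from an \emph{approximate} count can destroy the factor-2 guarantee—but the fix you propose does not actually handle deletions. You say to ``never delete'' between rebuilds and to make the correction purely additive via the insertion buffer $\mathcal{I}^{\mathrm{new}}$. But then a pattern $P$ that is deleted between rebuilds is still counted by the static structure (it was in $\D_0^{\mathrm{act}}$), so your returned value may exceed $2\,\CD_\D(i,j)$: your claimed upper bound $2\,\CD_{\D_0^{\mathrm{act}}}(i,j)+|\mathcal{I}^{\mathrm{new}}|\le 2\,\CD_\D(i,j)$ requires $\D_0^{\mathrm{act}}\subseteq \D$, which fails as soon as a single deletion occurs in the epoch. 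In the extreme, if all patterns are deleted the true answer is $0$ while your structure still returns a positive number.

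The paper's solution exploits the specific \emph{structure} of the 2-approximate answer from \cref{thm:approx}: the returned value is $\CD(i,i')+\CD(j',j)+\textsc{3-Fragments-Counting}(F_1,F_2,F_3)$, and a given pattern $P$ contributes exactly $[P\text{ occurs in }T[i\dd i']]+[P\text{ occurs in }T[j'\dd j']]$ if that sum is positive, and otherwise exactly $[P\text{ has an essential occurrence}]$. Hence for each deleted pattern one can compute its \emph{exact} contribution (which is $0$, $1$, or $2$) with two $\Dec(P,\cdot,\cdot)$ queries on the two basic fragments, and subtract precisely that amount. This keeps the remaining count a 2-approximation for the dictionary without the deleted patterns; insertions are then handled additively as you do. A secondary difference is that the paper does not rebuild the static structure from scratch every $k$ updates: the precomputed $\CD$ values on basic fragments are adjusted in $\cOtilde(n+k)$ time via the sliding-window argument of \cref{lem:best} applied to $\D_{\mathrm{ins}}$ and $\D_{\mathrm{del}}$, the grids $\mathcal G_{L,r}$ are maintained with a dynamic range-counting structure, and the $\C$ data structure is batch-updated in $\cOtilde(n)$ time. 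This is what yields the stated $\cOtilde(n/k)$ update bound without the $d/k$ term you had to hand-wave away.
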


Let $\Dec(P,i,j)$ denote the query checking whether some pattern $P$ occurs in $T[i \dd j]$. We make use of the following result.

\begin{theorem}[\cite{DBLP:journals/tcs/KellerKFL14}]\label{thm:ipm}
$\Dec(P,i,j)$ queries can be answered in time $\cO(\log\log n)$ with an $\cO(n \log^{\epsilon} n)$-size data structure that can be constructed in $\cO(n \sqrt{\log n})$ time.
\end{theorem}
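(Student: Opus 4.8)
The plan is to reduce an $\Dec(P,i,j)$ query with $P=T[a\dd b]$ to a single two‑dimensional orthogonal range‑emptiness query. I would first build the suffix tree $\mathcal{S}$ of $T$ in $\cO(n)$ time, together with its suffix array $SA$ and, for every explicit node, the interval of suffix‑array ranks spanned by the leaves in its subtree (all of this is standard $\cO(n)$‑time preprocessing); I also preprocess $\mathcal{S}$ for weighted ancestor queries via \cref{thm:waq}, which costs $\cO(n)$ time and $\cO(\log\log n)$ per query. Given a query, a weighted ancestor query on the leaf labelled $a$ (the suffix $T[a\dd n]$) at string‑depth $b-a+1$ returns the topmost explicit node $u_P$ of string‑depth at least $|P|$; the set of starting positions of occurrences of $P$ in $T$ is exactly the set of leaf labels in the subtree of $u_P$, and this equals $\{\,SA[k] : k\in[\ell_P,r_P]\,\}$ for the precomputed rank interval $[\ell_P,r_P]$ of $u_P$. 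Since $P$ occurs in $T[i\dd j]$ iff it has an occurrence starting in $[\,i,\ j-|P|+1\,]$, the answer to $\Dec(P,i,j)$ is ``yes'' iff the point set $\{(k,SA[k]) : 1\le k\le n\}$ — a permutation matrix — contains a point inside the rectangle $[\ell_P,r_P]\times[\,i,\ j-|P|+1\,]$ (if $|P|>j-i+1$ the rectangle is empty and the answer is ``no'').

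It remains to support 2D range‑emptiness on $n$ points forming a permutation with $\cO(\log\log n)$ query time, $\cO(n\log^{\epsilon}n)$ space and $\cO(n\sqrt{\log n})$ construction time; equivalently one answers the range‑successor query $\min\{SA[k] : k\in[\ell_P,r_P],\ SA[k]\ge i\}$ and compares it with $j-|P|+1$. I would realize this by a van‑Emde‑Boas‑style recursive partition of one coordinate: split the column universe $[1,n]$ into $\Theta(\sqrt n)$ clusters of $\Theta(\sqrt n)$ consecutive columns and recurse to depth $\cO(\log\log n)$; for each cluster at each level keep the set of $y$‑coordinates present in it stored in a $y$‑fast trie (predecessor in $\cO(\log\log n)$), augmented so that for each present $y$ one can recover the minimum and maximum column index. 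A four‑sided query on $[\ell_P,r_P]\times[i,j']$ splits into two ``boundary'' clusters, into which we recurse, and one contiguous run of fully‑covered clusters, over which the query becomes one‑sided in the column direction and is answered by a single predecessor query at the appropriate level. Construction is dominated by sorting/bucketing the coordinates on the word RAM, which costs $\cO(n\sqrt{\log n})$ (the same bound that appears in the construction of \cref{thm:range_count}); alternatively one may simply plug in an off‑the‑shelf orthogonal range‑searching structure achieving these parameters.

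The reduction of the first paragraph is routine; essentially all the content lies in the geometric structure, and the delicate point is obtaining $\cO(\log\log n)$ query time \emph{simultaneously} with $\cO(n\log^{\epsilon}n)$ space. A naive van‑Emde‑Boas recursion performs a $\Theta(\log\log n)$‑time predecessor step at each of $\Theta(\log\log n)$ levels, giving $\Theta((\log\log n)^2)$ per query, so the argument must be organized so that only $\cO(1)$ predecessor queries are actually executed along the recursion (the remaining levels being resolved by direct table lookups or precomputed pointers) — and this is precisely where the extra $\log^{\epsilon}n$ factor in the space budget is spent. Making the ``fully‑covered run of clusters'' answerable by a single predecessor query, and tuning the fan‑out so the boundary recursion effectively terminates in $\cO(1)$ steps, is the step I expect to require the most care.
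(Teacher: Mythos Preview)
The paper does not prove this theorem: it is imported from \cite{DBLP:journals/tcs/KellerKFL14}, and the only argument the paper supplies is the short Remark that follows it, observing that the additive $|P|$ term in the cited result (coming from locating $P$ in the suffix tree) can be replaced by an $\cO(\log\log n)$-time weighted ancestor query via \cref{thm:waq}. Your first paragraph is exactly this reduction --- weighted ancestor to find the locus, then a four-sided range query on the permutation $\{(k,SA[k])\}$ --- and that matches both the Remark and the approach of \cite{DBLP:journals/tcs/KellerKFL14}.

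Your second paragraph, where you try to build the range-successor structure from scratch, goes well beyond what the paper does. The outline (vEB-style recursion on one coordinate, $y$-fast tries inside clusters, collapsing the fully-covered run to a single predecessor query) is along the right lines, and you correctly isolate the genuine difficulty: a naive recursion pays $\Theta(\log\log n)$ per level and ends up at $\Theta((\log\log n)^2)$. However, you do not actually carry out the step you flag as delicate --- arranging that only $\cO(1)$ predecessor queries are executed in total while keeping the space at $\cO(n\log^{\epsilon}n)$ --- so as written this paragraph is a plan rather than a proof. Your escape hatch (``plug in an off-the-shelf structure achieving these parameters'') is effectively what the paper does by citing \cite{DBLP:journals/tcs/KellerKFL14}; if you take that route, the first paragraph together with \cref{thm:waq} is already the whole argument the paper expects.
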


\begin{remark}
Actually, in~\cite{DBLP:journals/tcs/KellerKFL14} there is an extra $|P|+\occ$ additive factor in the query time complexity as the pattern need not be given as a fragment of $T$ and the authors want to output all occurrences in $T[i \dd j]$. The $|P|$ factor corresponds to computing the locus of the pattern in the suffix tree of $T$, which we can do instead using~\cref{thm:waq}.
\end{remark}

\subparagraph{General scheme.}
This general scheme is analogous to what we used in order to dynamize data structures for the other internal dictionary matching queries in~\cite{DBLP:conf/isaac/Charalampopoulos19}.
Let us suppose that we can build in $p(n,d)$ time a data structure that answers $\CD(i,j)$ queries (exactly) in $q(n,d)$ time.
We construct this data structure over the input text $T$ and dictionary $\D=\D_0$, where $d=|\D|$.
We also preprocess the text for internal pattern matching queries, as per~\cref{thm:ipm}.
Then, for the first $k$ updates to the dictionary we answer $\CD(i,j)$ queries in $\cOtilde(q(n,d)+k)$ time by asking a $\CD(i,j)$ for $\D_0$ and then querying for each inserted/deleted pattern individually, using internal pattern matching queries.
After $k$ updates, we update our data structure to refer to dictionary $\D_k$ in time $u(n,d,k)$ -- thus, each update gets processed in $\cOtilde(u(n,d,k)/k)$ amortized time\footnote{E.g.~one can rebuild the data structure from scratch in $\cO(p(n,d+k))$ time.}.
The time complexity can be deamortized by employing the standard time slicing technique.
Then, if we want queries and updates to cost roughly equal we need to balance $q(n,d)+k=u(n,d,k)/k$.

\subparagraph{Dynamic 2-approximation.} We now apply this general scheme to our data structure for computing a $2$-approximation of $\CD(i,j)$. First of all, on each query, we need to check whether each pattern that has been deleted from $\D_0$ is counted once or twice by the static data structure for $\D_0$. For this, it suffices to query whether such pattern occurs in the two relevant basic factors.

We update our data structure after $k$ updates to the dictionary as follows.
\begin{itemize}
\item Let $\D_{del}=\D_0 \setminus \D_k$ and $\D_{ins}=\D_k \setminus \D_0$. We adjust $\CD(i,j)$ for each basic factor in $\cOtilde(n+k)$ time by counting distinct patterns of $\D_{del}$ and $\D_{ins}$ in each of them, as in the preprocessing of~\cref{thm:approx}.
\item We maintain our collections of points on grids $\mathcal{G}_{L,r}$ using the dynamic data structure of Chan and Tsakalidis for 2D range counting, which supports updates and queries in $\cOtilde(1)$ time~\cite{DBLP:conf/compgeom/ChanT17}. As for the values $\sum_{i=1}^r |\D^p_{L,i}|$, we use an augmented balanced binary search tree.
\item Finally, we can update the data structure for $\C(i,j)$ upon a batch of $k=\cO(n)$ updates to the dictionary in time $\cOtilde(n)$ as shown in~\cite{DBLP:conf/isaac/Charalampopoulos19}.
\end{itemize}

This concludes the proof of \cref{thm:dyn_approx}.


\bibliographystyle{plainurl}
\bibliography{references}
\end{document}